\newcommand{\be}{\begin{eqnarray}}
\newcommand{\ee}{\end{eqnarray}}
\newcommand{\bez}{\begin{eqnarray*}}
\newcommand{\eez}{\end{eqnarray*}}
\renewcommand{\d}{\mathrm{d}}
\newcommand{\bd}{\bar{\mathrm{d}}}
\newcommand{\cA}{\mathcal{A}}
\newcommand{\bbN}{\mathbb{N}}
\newcommand{\bsy}{\boldsymbol}
\theoremstyle{plain}
\newtheorem{theorem}{Theorem}[section]
\newtheorem{lemma}[theorem]{Lemma}
\newtheorem{proposition}[theorem]{Proposition}
\theoremstyle{definition}
\newtheorem{remark}[theorem]{Remark}
\newtheorem{example}[theorem]{Example}
\numberwithin{equation}{section}
\numberwithin{theorem}{section}
\renewcommand{\theequation} {\arabic{section}.\arabic{equation}}
\begin{document}

\title{\bf Tropical limit of matrix solitons \\ and entwining Yang-Baxter maps}

\author{ \sc
 Aristophanes Dimakis$^1$ and Folkert M\"uller-Hoissen$^{2,3}$\footnote{Corresponding author} \\ \small
 $^1$ Dept. of Financial and Management Engineering, University of the Aegean, Chios, Greece \\  
 \small e-mail: dimakis@aegean.gr \\
 \small
 $^2$ Max Planck Institute for Dynamics and Self-Organization, G\"ottingen, Germany \\ 
      \small e-mail: folkert.mueller-hoissen@ds.mpg.de \\
      \small
 $^3$ Institute for Theoretical Physics, Georg August University,
      37077 G\"ottingen, Germany \\
      \small e-mail: folkert.mueller-hoissen@theorie.physik.uni-goettingen.de
}

\date{ }

\maketitle

\begin{abstract}
We consider a matrix refactorization problem, i.e., a ``Lax representation'', for the Yang-Baxter map that originated 
as the map of polarizations from the ``pure'' 2-soliton solution of a matrix KP equation. Using the Lax matrix 
and its inverse, a related refactorization problem determines another map, 
which is not a solution of the Yang-Baxter equation, but satisfies a mixed version of the Yang-Baxter equation 
together with the Yang-Baxter map. Such maps have been called ``entwining Yang-Baxter maps'' in recent work.
In fact, the map of polarizations obtained from a pure 2-soliton solution of a matrix KP equation, and already 
for the matrix KdV reduction, is \emph{not} in general a Yang-Baxter map, but it is described by one of the 
two maps or their inverses. We clarify why the weaker version of the Yang-Baxter equation holds, by exploring 
the pure 3-soliton solution in the ``tropical limit'', where the 3-soliton interaction decomposes into 2-soliton 
interactions. Here this is elaborated for pure soliton solutions, generated via a binary Darboux transformation, 
of matrix generalizations of the two-dimensional Toda lattice equation, where we meet the same entwining Yang-Baxter 
maps as in the KP case, indicating a kind of universality. 
\end{abstract}

\section{Introduction}
\label{sec:intro}
The quantum Yang-Baxter equation is known to be a crucial structure underlying two-dimensional integrable 
QFT models. A typical feature of the latter is the factorization of the scattering matrix into 
contributions from 2-particle interactions (see \cite{Bomb16} and references therein).  
This factorization is also typical for the scattering of solitons of classical nonlinear integrable field equations. 
Indeed, we tend to think of a multi-soliton solution of some vector or matrix version of an integrable nonlinear 
partial differential of difference equation as being composed of 2-soliton interactions. Matrix solitons carry 
``internal degrees of freedom'', called ``polarization''. In some cases, like matrix KdV \cite{Gonc+Vese04} 
or vector NLS \cite{APT04IP,Tsuc04,Tsuc15}, the map from incoming to outgoing matrix data, i.e., polarizations,
 has been found to satisfy the Yang-Baxter equation. But why should we expect the Yang-Baxter property?
The latter is a statement about three particles, here solitons, and may be thought of as expressing independence 
of the different ways in which a three-particle interaction can be decomposed into 2-particle interactions. 
First of all, how to decompose a 3-soliton solution into 2-soliton interactions? Because of the wave nature 
of solitons, there are no definite events at which the interaction takes place, and (with the exception of 
asymptotically incoming and outgoing solitons) 
there are no definite values of the dependent variable which could define a corresponding map.
However, there is a certain limit, called ``tropical limit'', that takes soliton waves to ``point particles'' and then 
indeed determines events at which an interaction occurs. This has been a decisive tool in our previous work
\cite{DMH11KPT,DMH12KPBT,DMH14KdV,DMH19LMP,DMH18p,DMHC19} and this will be so also in this work, 
which continues an exploration started in \cite{DMH19LMP}, also see \cite{DMH18p,DMHC19}. It will indeed 
lead us to a deeper understanding of the question concerning the Yang-Baxter property raised above and to 
a revision of the previous picture.
\vspace{.2cm}

Let $K$ be a constant $n \times m$ matrix of maximal rank.\footnote{In this work, we only consider matrices over the real 
or complex numbers.} 
In \cite{DMH19LMP}, we explored a matrix version of the potential 
KP equation, the \emph{pKP$_K$ equation}
\bez
 \left( 4 \phi_{t} - \phi_{xxx}  - 6 (\phi_x K \phi_x) \right)_x - 3 \phi_{yy} + 6 (\phi_x K \phi_y - \phi_y K \phi_x) = 0 \, , 
\eez
from which the KP$_K$ equation is obtained via $u = 2 \, \phi_x$. With the restriction to a subclass of solutions, 
which we called ``pure solitons'' in \cite{DMH19LMP}, the 2-soliton solution, generated by a binary Darboux transformation 
with trivial seed solution, determines a realization of the following Yang-Baxter map.
\vspace{.2cm}

Let $\boldsymbol{S}$ be the set of rank one $m \times n$ $K$-projection matrices ($X K X = X$), and
\bez
 \mathcal{R}(1,2) := \mathcal{R}(p_1,q_1;p_2,q_2) \, : \quad 
     && \hspace{.35cm} \boldsymbol{S} \times \boldsymbol{S} \rightarrow \boldsymbol{S} \times \boldsymbol{S} \\
     && (X_1, X_2) \mapsto (X_1' , X_2') 
\eez
be given by
\be
  && X_1' = \alpha_{12} \, \Big( 1_m - \frac{p_2-q_2}{p_2-p_1} X_2 K \Big) \, X_1 \, 
                    \Big( 1_n - \frac{p_2-q_2}{q_1-q_2} K X_2 \Big) \, ,  \nonumber \\
  && X_2' = \alpha_{12} \, \Big( 1_m - \frac{p_1-q_1}{q_2-q_1} X_1 K \Big) \, X_2 \, 
                    \Big( 1_n - \frac{p_1-q_1}{p_1-p_2} K X_1 \Big) \, ,    \label{KP_YB_map}               
\ee
where $1_m$ denotes the $m \times m$ identity matrix and
\be
     \alpha_{12} := \alpha(p_1,q_1,X_1;p_2,q_2,X_2)
   := \Big( 1 - \frac{(p_1-q_1)(p_2-q_2)}{(p_2-p_1)(q_2-q_1)} \mathrm{tr}(K X_1 K X_2) \Big)^{-1} = \alpha_{21} \, .
       \label{alpha_12}
\ee
This is a parameter-dependent Yang-Baxter map, which means that it satisfies the Yang-Baxter equation 
\be
    \mathcal{R}_{\bsy{12}}(1,2) \circ \mathcal{R}_{\bsy{13}}(1,3) \circ \mathcal{R}_{\bsy{23}}(2,3) 
    = \mathcal{R}_{\bsy{23}}(2,3) \circ \mathcal{R}_{\bsy{13}}(1,3) \circ \mathcal{R}_{\bsy{12}}(1,2)   \label{YB_eq}
\ee
on $\boldsymbol{S} \times \boldsymbol{S} \times \boldsymbol{S}$. The indices of $\mathcal{R}_{\bsy{ij}}$ 
specify on which two of the three factors the map $\mathcal{R}$ acts. 
Here we have to assume that the constants $p_i$, $i=1,2,3$, and also $q_i$, $i=1,2,3$, are pairwise distinct, 
and that the expressions for $\alpha_{ij}$ make sense. 
\vspace{.2cm}

If $q_i =-p_i$, this is the Yang-Baxter map obtained from the 2-soliton solution of the KdV$_K$ equation
\be
    4 u_t - u_{xxx} - 3 (u K u)_x = 0 \, .   \label{KdV_K}
\ee
 For the matrix KdV equation (where $m=n$ and $K=1_n$), the Yang-Baxter map has first been derived in 
 \cite{Gonc+Vese04}.\footnote{The factor $\alpha_{12}$ is missing in the latter work, but it is necessary for 
 the Yang-Baxter property.}  
\vspace{.2cm}

In the particular case where $n=1$ (and correspondingly for $m=1$), the above (generically highly nonlinear) 
Yang-Baxter map becomes \emph{linear}:
\be
      (X'_1, X'_2) = (X_1 , X_2) \, R(i,j) \, , \quad
      R(i,j) := \left(\begin{array}{cc} \frac{p_i-p_j}{p_i-q_j} & \frac{p_i-q_i}{p_i-q_j} \\
                                       \frac{p_j-q_j}{p_i-q_j} & \frac{q_i-q_j}{p_i-q_j} \end{array}\right) \, .
                                         \label{R-matrix}
\ee
Here we used the fact that, for $X \in \boldsymbol{S}$, $K X$ is now a scalar, so that the $K$-projection 
property requires $K X=1$. The R-matrix, which ermerges here, solves the Yang-Baxter equation  
on a threefold direct sum of an $m$-dimensional vector space, which extends the set $\boldsymbol{S}$.
\vspace{.2cm}

In the tropical limit of a pure $N$-soliton solution of the KP$_K$ equation, the dependent variable $u$ 
has support on a piecewise linear structure in $\mathbb{R}^3$ (with coordinates $x,y,t$), a configuration of pieces of planes, 
and the dependent variable takes a constant value on each plane segment. This piecewise linear structure is obtained as 
the boundary of ``dominating phase regions''. In the KdV reduction, the support of the dependent variable in the 
tropical limit is a piecewise linear graph in 2-dimensional space-time. For the KdV 2-soliton solution, we have 
four dominating phase regions, numbered by $11$, $12$, $21$, respectively 
$22$.\footnote{The parameters $p_k,q_k$ belong to the $k$-th soliton. The first digit of the phase number $ab$ refers to 
soliton $1$, the second to soliton $2$. Since we write $p_k=:p_{k,1}$ and $q_k =: p_{k,2}$, we have $a,b \in \{1,2\}$. }
Fig.~\ref{fig:KdV_2s_21} shows an example.
\begin{figure}[h] 
\begin{center}
\includegraphics[scale=.2]{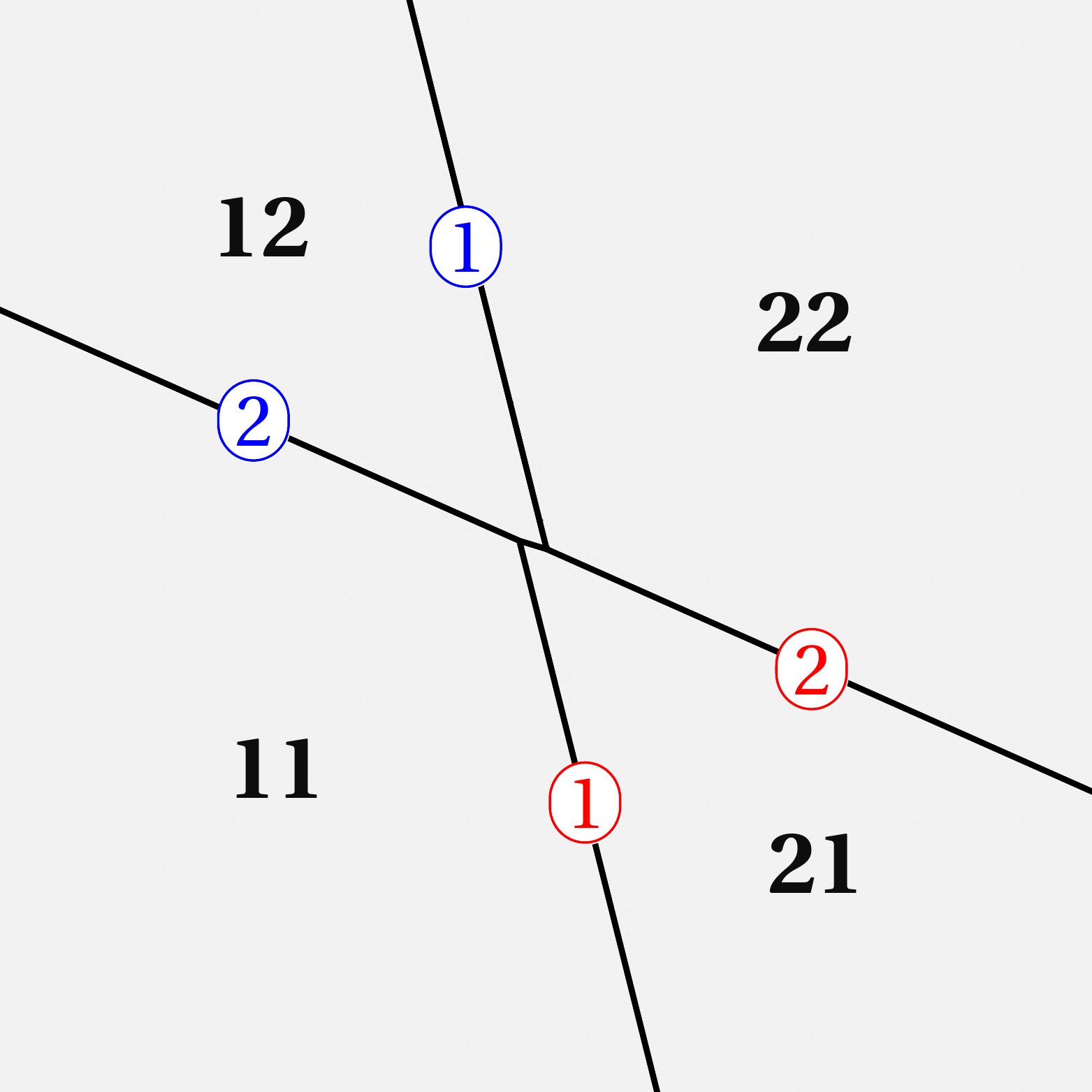} 
\parbox{15cm}{
\caption{Dominating phase regions and tropical limit graph in 2-dimensional space-time, for a 2-soliton solution 
of the KdV$_K$ equation. Here time $t$ is the vertical coordinate.
\label{fig:KdV_2s_21} }
}
\end{center}
\end{figure}
Using the general 2-soliton solution to compute the values of the dependent variable $u$ along the boundary line segments 
of the tropical limit graph, after normalization to $\hat{u}$ such that $\mathrm{tr}(K \hat{u})=1$, the map 
$(\hat{u}_{11,21}, \hat{u}_{21,22}) \mapsto (\hat{u}_{12,22}, \hat{u}_{11,12})$ yields the above Yang-Baxter map (with 
the KdV reduction $q_i =-p_i$). Here, e.g., $\hat{u}_{11,21}$ is the polarization along the boundary line between 
the dominating phase regions numbered by $11$ and $21$.  
For a rank one matrix, the above normalization condition is equivalent to the $K$-projection property.
\vspace{.2cm}

But what about a phase constellation different from the one shown in Fig.~\ref{fig:KdV_2s_21} ? Indeed, 
Fig.~\ref{fig:KdV_2s_other} displays alternatives.
\begin{figure}[h] 
\begin{center}
\includegraphics[scale=.2]{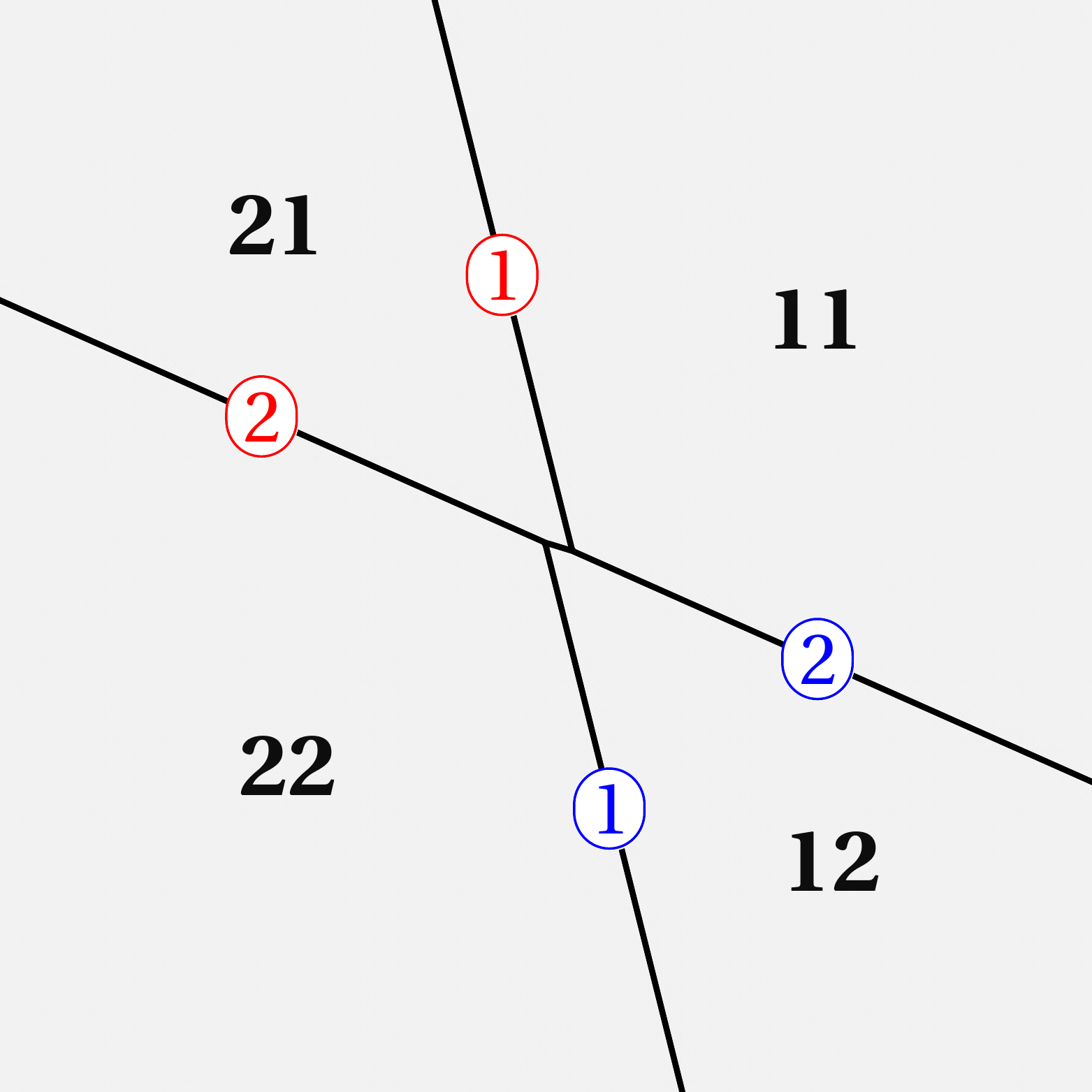}
\hspace{1cm}
\includegraphics[scale=.2]{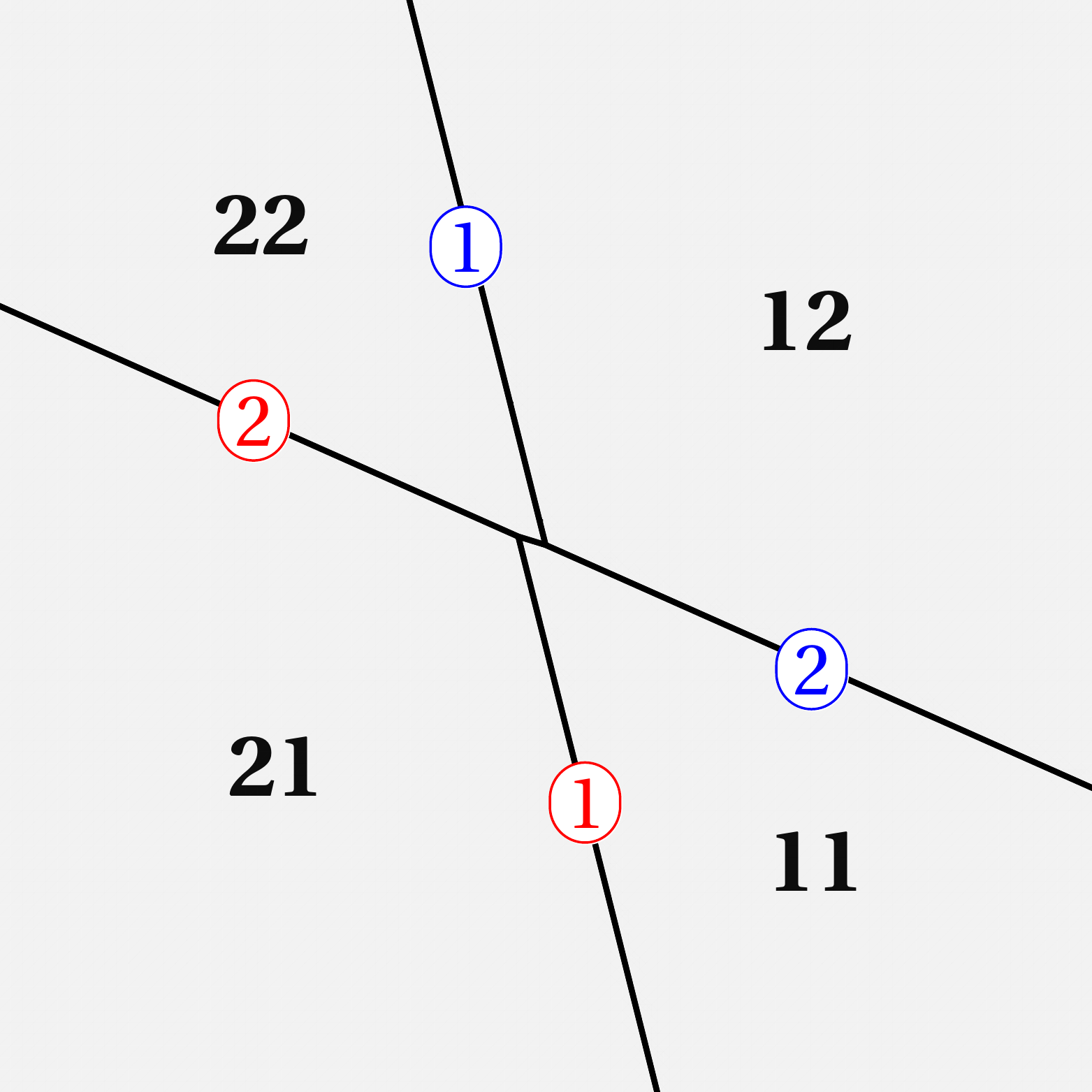}
\hspace{1cm}
\includegraphics[scale=.2]{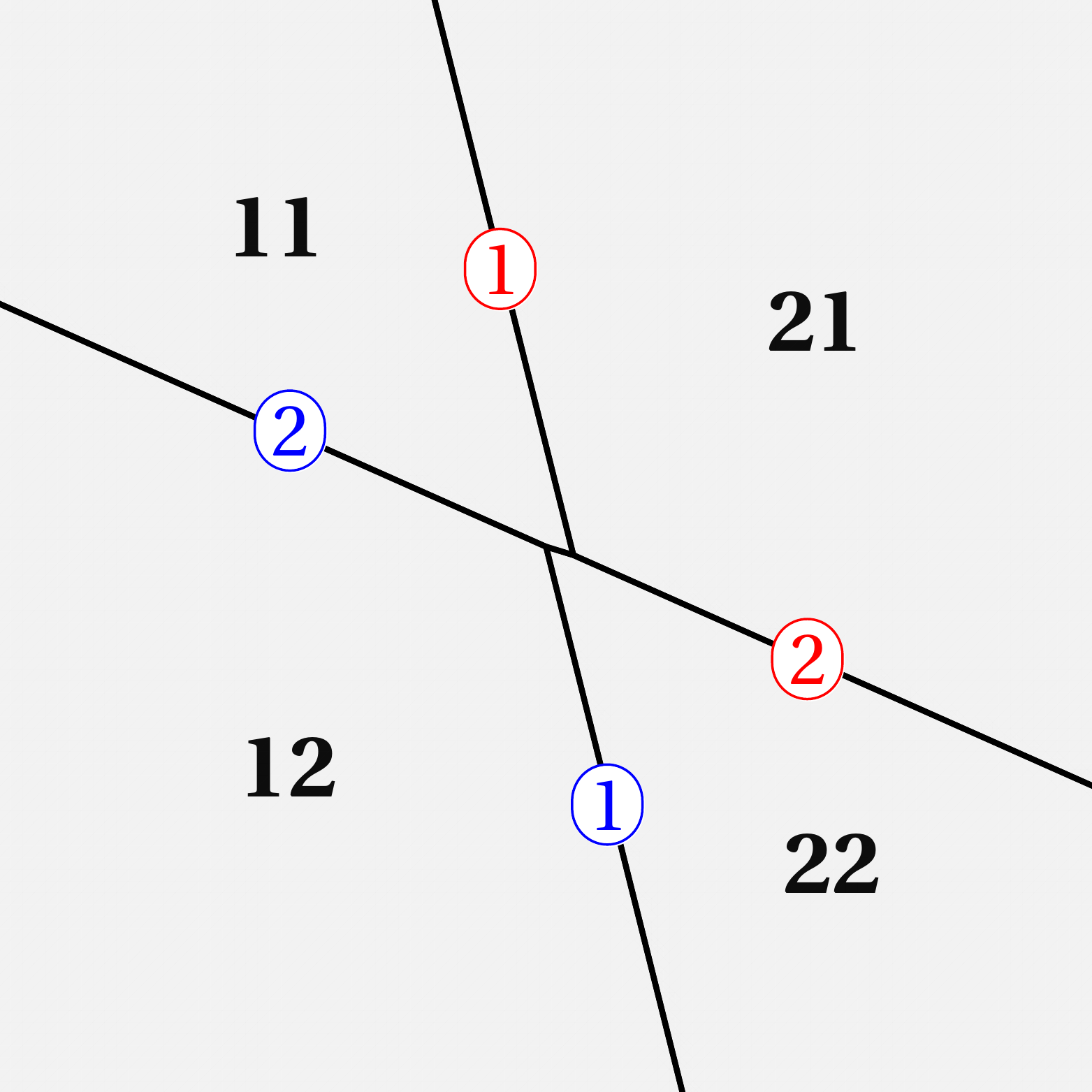}
\parbox{15cm}{
\caption{Other dominating phase region constellations and tropical limit graphs in 2-dimensional space-time, for  
2-soliton solutions of the KdV$_K$ equation. In each case, the Yang-Baxter map $\mathcal{R}$ is recovered if we 
consider the graph as defining a map of polarizations not from bottom to top, but in a different space-time direction
(here from the soliton lines marked by red numbers to those marked by blue numbers).
\label{fig:KdV_2s_other} }
}
\end{center}
\end{figure}
In the same way as for the phase constellation in Fig.~\ref{fig:KdV_2s_21}, one finds that the first alternative in 
Fig.~\ref{fig:KdV_2s_other} leads to the inverse of the above KdV Yang-Baxter map. The remaining possibilities, however,
determine maps that are \emph{not} Yang-Baxter. Nevertheless, they are realized in matrix KdV 2-soliton interactions (see 
Appendix~\ref{app:KdV_2s}), regarding them as a process evolving in time $t$ and  
by choosing the parameters appropriately. We learn that there are matrix KdV 2-soliton solutions for which the map 
of polarizations \emph{in $t$-direction} is \emph{not} Yang-Baxter!\footnote{\cite{Gonc+Vese04} uses results of \cite{Gonc01}, 
where a restriction has been imposed on the parameters of the matrix KdV 2-soliton solution. As a consequence of this, 
the non-Yang-Baxter cases are excluded in \cite{Gonc+Vese04}.} 
How to understand this, in view of our different expectation? 
\vspace{.2cm}

In all cases of phase constellations, shown in 
Fig.~\ref{fig:KdV_2s_other}, the Yang-Baxter map $\mathcal{R}$ is present, however.
It is recovered by regarding the plot not as a process in $t$-direction, but in a different direction 
in space-time.\footnote{As a process in $t$-direction, the first plot in Fig.~\ref{fig:KdV_2s_other} corresponds 
to an application of the inverse of the Yang-Baxter map $\mathcal{R}$.}
\vspace{.2cm}
 
First of all, this means that we overlooked something in our analysis of the KP$_K$ case in \cite{DMH19LMP}. 
As in the KdV reduction, of course also the general pure 2-soliton solution of KP$_K$ contains constellations, 
for certain parameter values, where incoming and outgoing polarizations (with respect to a chosen direction) are related 
by a map that is \emph{not} a Yang-Baxter map. Besides the above Yang-Baxter map, this map and the inverses of 
both maps are needed to describe the propagation of polarizations along the support of pure 
multi-soliton solutions in the tropical limit. 
\vspace{.2cm}

We were actually led to the new insights by exploring a matrix version of the two-dimensional Toda 
lattice equation (see, e.g., \cite{Mikh79,Taka18} for the scalar equation). 
This is the subject of Section~\ref{sec:Toda}. In particular, with the restriction to ``pure solitons'', 
it turns out that the same Yang-Baxter map is here at work as in the KP$_K$ case, indicating a 
kind of universality. This may not come as a surprise, however, since both equations are known to be related 
(also see Remark~\ref{rem:scalar2DTL} below). 
\vspace{.2cm}

The tropical limit associates with a soliton solution a configuration of plane segments, together with values of the 
dependent variable on the segments.\footnote{Here we think of the discrete independent variable $k$ in the 
Toda lattice equation as being continuously extended (also see \cite{Bion+Wang10}). Such a smoothing of the 
discrete variable is actually done in the plots presented in Section~\ref{sec:Toda} of this work. 
But, of course, it is not assumed in any of our computations.}
It is found that, at intersections, these polarizations are related by one of \emph{two} maps (and their inverses), 
of which only one is a Yang-Baxter map, but the two maps satisfy a mixed version of the Yang-Baxter equation 
(see (\ref{YB_mixed}) below). 
They are ``entwining Yang-Baxter maps'' in the sense of \cite{Koul+Papa11BCP}. 
\vspace{.2cm}

Section~\ref{sec:Lax} presents a ``Lax representation'' for the above map $\mathcal{R}$. This is a matrix refactorization 
problem. The basic argument\footnote{It is actually more generally based on the relation between 
neighboring simplex equations, see \cite{DMH15} and references cited there.} 
is the same as in \cite{Gonc+Vese04} for the matrix KdV case (also see \cite{Suri+Vese03,Resh+Vese05}), 
but we prove more directly, as compared with \cite{Gonc+Vese04}, that the refactorization problem 
determines the map $\mathcal{R}$.  

In Section~\ref{sec:Lax_system}, we show that this refactorization problem, written in a different way,
also determines the abovementioned mixed version of the Yang-Baxter equation.  
It implies further relations which in particular lead to solutions of the ``WXZ system'' 
in \cite{Hlav+Snob99},  called ``Yang-Baxter system'' in \cite{Brze+Nich05}. To our knowledge, such a system 
first appeared in \cite{Vlad93}.

In Section~\ref{sec:Toda} we explore soliton solutions of the abovementioned matrix 2-dimensional 
Toda lattice equation. Section~\ref{subsec:Toda_bDT} presents a binary Darboux transformation for the matrix 
potential two-dimensional Toda lattice equation. Its origin from a general result in bidifferential calculus is 
explained in Appendix~\ref{app:BDT}. 
We then concentrate on the case of vanishing seed solution. Section~\ref{subsec:pure} further restricts to 
a subclass of soliton solutions, which we call ``pure'', and we define the tropical limit of such solitons.  
In Section~\ref{subsec:2sol} we derive the Yang-Baxter map $\mathcal{R}$ from the pure 2-soliton solution.
The relevance of the aforementioned additional non-Yang-Baxter map is explained in Section~\ref{subsec:other_maps}, and 
Section~\ref{subsec:3sol}, which treats the case of three pure solitons, shows explicitly how the Yang-Baxter map 
and the non--Yang-Baxter map (and their inverses) are at work, and why they have to be ``entwining''. 

Finally, Section~\ref{sec:concl} contains some concluding remarks.

\section{A Lax representation for the Yang-Baxter map}
\label{sec:Lax}
Let $K$ be an $n \times m$ matrix with maximal rank, and 
\be
  &&  A_i(\lambda,X) := A(p_i,q_i,\lambda,X) := 1_m - \frac{p_i-q_i}{\lambda -q_i} \, X \, K \, , \quad
    \tilde{A}_i(\lambda,X) := 1_n - \frac{p_i-q_i}{\lambda -q_i} \, K \, X \, , \nonumber \\
  &&  B_i(\lambda,X) := B(p_i,q_i,\lambda,X) := 1_m + \frac{p_i-q_i}{\lambda -p_i} \, X \, K \, , \quad
    \tilde{B}_i(\lambda,X) := 1_n + \frac{p_i-q_i}{\lambda -p_i} \, K \, X \, ,   \label{A,tA,B,tB}
\ee
where $X$ is an $m \times n$ matrix and $\lambda$ a parameter. Then we have
\bez
    K A_i = \tilde{A}_i K \, , \qquad  K B_i = \tilde{B}_i K \, .
\eez     
If $X$ is a $K$-projection matrix, which means $XKX=X$, then 
\bez
    B_i = A_i^{-1} \, , \qquad \tilde{B}_i = \tilde{A}_i^{-1} \, ,  
\eez
if $\lambda \notin \{q_i,p_i\}$.

\begin{theorem}
\label{thm:Lax}
Let $p_1,p_2,q_1,q_2$ be pairwise distinct and $X_i$, $i=1,2$, rank one $K$-projections, 
hence $X_i \in \boldsymbol{S}$. Then the 
refactorization equations\footnote{These are local 1-simplex equations, see \cite{DMH15}, for example.}
\be
   && A_1(\lambda,X_1) \, A_2(\lambda,X_2) = A_2(\lambda,X_2') \, A_1(\lambda,X_1') \, , 
      \nonumber \\
   && \tilde{A}_1(\lambda,X_1) \, \tilde{A}_2(\lambda,X_2) 
       = \tilde{A}_2(\lambda,X_2') \, \tilde{A}_1(\lambda,X_1')   \label{Lax_eqs}
\ee
imply the map $\mathcal{R}(1,2)$, defined in the introduction (see (\ref{KP_YB_map})).
\end{theorem}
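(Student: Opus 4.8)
The plan is to read off the rank-one matrices $X_1'$ and $X_2'$ directly from the pole-and-kernel structure, in $\lambda$, of the rational matrix identities (\ref{Lax_eqs}), and then to fix the overall scalars by the $K$-projection normalisation. Throughout I use that, for a rank-one $K$-projection $X$, the matrix $XK$ is a rank-one idempotent with $\mathrm{tr}(XK)=1$, that $A_i(p_i,X)=1_m-XK$ and $\tilde{A}_i(p_i,X)=1_n-KX$ are the complementary (rank $m-1$, resp.\ $n-1$) projections, that $\det A_i(\lambda,X)=(\lambda-p_i)/(\lambda-q_i)$ so $A_i(\lambda,X)$ is invertible precisely for $\lambda\notin\{q_i,p_i\}$ with inverse $B_i(\lambda,X)$, and the intertwining $K A_i=\tilde{A}_i K$.

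First I determine the column space of $X_1'$. Evaluating the first equation in (\ref{Lax_eqs}) at $\lambda=p_1$ gives $A_1(p_1,X_1)A_2(p_1,X_2)=A_2(p_1,X_2')A_1(p_1,X_1')$. Here $A_2(p_1,X_2')$ is invertible and $A_1(p_1,X_1')=1_m-X_1'K$ annihilates $\mathrm{range}(X_1')$, so the kernel of the right-hand side is $\mathrm{range}(X_1')$; on the left, $A_1(p_1,X_1)=1_m-X_1K$ has kernel $\mathrm{range}(X_1)$ while $A_2(p_1,X_2)$ is invertible. Equating kernels yields $\mathrm{range}(X_1')=B_2(p_1,X_2)\,\mathrm{range}(X_1)$, exactly the column space produced by the left factor $B_2(p_1,X_2)=1_m-\tfrac{p_2-q_2}{p_2-p_1}X_2K$ of (\ref{KP_YB_map}). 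Next I determine the row space of $X_1'$ from the residue of the second (tilde) equation in (\ref{Lax_eqs}) at the pole $\lambda=q_1$. Only $\tilde{A}_1(\cdot,X_1)$ on the left and $\tilde{A}_1(\cdot,X_1')$ on the right are singular there, so matching residues gives $KX_1\,\tilde{A}_2(q_1,X_2)=\tilde{A}_2(q_1,X_2')\,KX_1'$. Since $\tilde{A}_2(q_1,X_2')$ is invertible, left multiplication by it leaves the row space unchanged, whence the row space of $X_1'$ equals that of $X_1\,\tilde{A}_2(q_1,X_2)$, i.e.\ the one produced by the right factor $\tilde{A}_2(q_1,X_2)=1_n-\tfrac{p_2-q_2}{q_1-q_2}KX_2$ of (\ref{KP_YB_map}). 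A rank-one matrix is fixed by its column and row spaces up to a scalar, so $X_1'=c_1\,B_2(p_1,X_2)\,X_1\,\tilde{A}_2(q_1,X_2)$ for some $c_1$.

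The scalar is pinned down by requiring $X_1'$ to be again a rank-one $K$-projection, equivalently $\mathrm{tr}(KX_1')=1$. Using $KB_2=\tilde{B}_2K$ and cyclicity, $\mathrm{tr}(KX_1')=c_1\,\mathrm{tr}\big(\tilde{A}_2(q_1,X_2)\tilde{B}_2(p_1,X_2)\,KX_1\big)$; expanding the product of the two factors, both polynomials in $KX_2$, and using $(KX_2)^2=KX_2$, $\mathrm{tr}(KX_2)=1$ and $\mathrm{tr}(KX_1KX_2)=\mathrm{tr}(KX_2KX_1)$ collapses the bracket to $1-\tfrac{(p_1-q_1)(p_2-q_2)}{(p_2-p_1)(q_2-q_1)}\mathrm{tr}(KX_1KX_2)=\alpha_{12}^{-1}$, so that $c_1=\alpha_{12}$ and $X_1'$ is precisely (\ref{KP_YB_map}). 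The formula for $X_2'$ follows by the mirror-image argument: its column space comes from the residue at $q_2$ of the first equation (giving the left factor $A_1(q_2,X_1)$), its row space from the evaluation at $p_2$ of the tilde equation together with $KX_2'\,\tilde{A}_2(p_2,X_2')=0$ (giving the right factor $\tilde{B}_1(p_2,X_1)$), and the same normalisation forces the identical scalar, $\alpha_{12}=\alpha_{21}$, by symmetry of $\mathrm{tr}(KX_1KX_2)$.

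The one genuine subtlety — and the main thing to get right — is that the naive residue or kernel matching couples $X_1'$ and $X_2'$: for instance the residue at $q_1$ of the first equation expresses $\mathrm{range}(X_1')$ through $B_2(q_1,X_2')$, involving the unknown partner. The device that removes this coupling is to extract each one-dimensional space from the equation in which the partner matrix enters only as an invertible left factor (for column spaces) or right factor (for row spaces), which leaves the space in question unchanged; this is exactly why the column space of $X_1'$ must be read at the regular point $p_1$ while its row space must be read from the pole at $q_1$ (and conversely for $X_2'$), and why both members of (\ref{Lax_eqs}), linked by $KA_i=\tilde{A}_iK$, enter — one supplying the column data in $\mathbb{R}^m$, the other the row data in $\mathbb{R}^n$.
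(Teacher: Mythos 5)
Your proof is correct, but it takes a genuinely different route from the paper's (Appendix~\ref{app:unique_map}). The paper multiplies (\ref{Lax_eqs}) by $(\lambda-q_1)(\lambda-q_2)$, expands in powers of $\lambda$, uses the coefficient linear in $\lambda$ to eliminate $X_2'$, and thereby reduces the problem to the single linear equation (\ref{X'_1_X_1_relation}) for $X_1'$, whose coefficient matrix $1_m+\alpha\,X_1K+\beta\,X_2K$ is then inverted explicitly via Lemma~\ref{lemma:invert_sum_of_rank1_proj}. You instead localize at the four distinguished values of $\lambda$: evaluation at $p_1$ (where $A_1$ degenerates to the complementary projection) and the residue at $q_1$ (where $\tilde{A}_1$ has its pole) yield the one-dimensional column and row spaces of $X_1'$ by kernel/cokernel matching, which determines $X_1'$ up to the scalar fixed by $\mathrm{tr}(KX_1')=1$; the mirror argument at $q_2$ and $p_2$ handles $X_2'$, and your trace computation correctly reproduces $\alpha_{12}$. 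What your route buys: it avoids both the elimination step and the inversion lemma, it makes structurally transparent why the left factor of (\ref{KP_YB_map}) is $B_2(p_1,X_2)$ and the right factor is $\tilde{A}_2(q_1,X_2)$, and it explains cleanly why both members of (\ref{Lax_eqs}) are needed (one supplies column data in $\mathbb{R}^m$, the other row data in $\mathbb{R}^n$). What the paper's route buys: its intermediate relation (\ref{X'_1_X_1_relation}) is derived without any rank restriction, which is precisely what the remark following the proof exploits to treat higher-rank $K$-projections, whereas your column/row-space argument is wedded to the rank-one hypothesis from the very first step. Note finally that both proofs are uniqueness arguments: one assumes a solution $(X_1',X_2')\in\boldsymbol{S}\times\boldsymbol{S}$ of (\ref{Lax_eqs}) exists and shows it must be (\ref{KP_YB_map}); in this respect you match, but do not exceed, the paper's level of completeness.
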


A proof is given in Appendix~\ref{app:unique_map}. 
Recalling a well-known argument (see \cite{DMH15} and references cited there), exploiting associativity in 
different ways, we obtain
\bez
    A_1(\lambda,X_1) \, A_2(\lambda,X_2) \, A_3(\lambda,X_3)
    &\stackrel{\mathcal{R}_{\bsy{12}}}{=}& A_2(\lambda,Y_2) \, A_1(\lambda,Y_1)  \, A_3(\lambda,X_3) \\
    &\stackrel{\mathcal{R}_{\bsy{13}}}{=}& A_2(\lambda,Y_2)  \, A_3(\lambda,Y_3) \, A_1(\lambda,Z_1)  \\
    &\stackrel{\mathcal{R}_{\bsy{23}}}{=}& A_3(\lambda,Z_3) \, A_2(\lambda,Z_2)  \, A_1(\lambda,Z_1)  \, ,
\eez
where we abbreviated $\mathcal{R}_{\bsy{ij}}(i,j)$ to $\mathcal{R}_{\bsy{ij}}$ and set, for example, 
$\mathcal{R}(1,3)(Y_1,X_3) =: (Z_1,Y_3)$, and also 
\bez
    A_1(\lambda,X_1) \, A_2(\lambda,X_2) \, A_3(\lambda,X_3)
    &\stackrel{\mathcal{R}_{\bsy{23}}}{=}& A_1(\lambda,X_1) \, A_3(\lambda,Y_3') \, A_2(\lambda,Y_2')   \\
    &\stackrel{\mathcal{R}_{\bsy{13}}}{=}& A_3(\lambda,Z_3') \, A_1(\lambda,Y_1') \, A_2(\lambda,Y_2')   \\
    &\stackrel{\mathcal{R}_{\bsy{12}}}{=}& A_3(\lambda,Z_3') \, A_2(\lambda,Z_2')  \,  A_1(\lambda,Z_1')   \, . 
\eez
There are corresponding chains with $A_i$ replaced by $\tilde{A}_i$.
If
\bez
  &&  A_1(\lambda,X_1) \, A_2(\lambda,X_2) \, A_3(\lambda,X_3) = A_3(\lambda,Z_3) \, A_2(\lambda,Z_2)  \, A_1(\lambda,Z_1) \, , \\
  && \tilde{A}_1(\lambda,X_1) \, \tilde{A}_2(\lambda,X_2) \, \tilde{A}_3(\lambda,X_3) = \tilde{A}_3(\lambda,Z_3) 
     \, \tilde{A}_2(\lambda,Z_2)  \, \tilde{A}_1(\lambda,Z_1) 
\eez
determines a unique map $(X_1,X_2,X_3) \mapsto (Z_1, Z_2, Z_3)$, which means that\footnote{Also see 
Proposition~3.1 in \cite{Koul+Papa11BCP}.} 
\bez
  \left. \begin{array}{l} A_3(\lambda,Z_3) \, A_2(\lambda,Z_2) \, A_1(\lambda,Z_1) 
             = A_3(\lambda,Z'_3) \, A_2(\lambda,Z'_2) \, A_1(\lambda,Z'_1)  \\
         \tilde{A}_3(\lambda,Z_3) \, \tilde{A}_2(\lambda,Z_2) \, \tilde{A}_1(\lambda,Z_1) 
             = \tilde{A}_3(\lambda,Z'_3) \, \tilde{A}_2(\lambda,Z'_2) \, \tilde{A}_1(\lambda,Z'_1)   
             \end{array} \right\} 
    \quad \Rightarrow \quad Z_i'=Z_i \, , \, i=1,2,3 \, , \quad  
\eez
we can conclude the statement of the following theorem. But it can also be verified directly, 
using computer algebra.

\begin{theorem}
Let $X_i \in \boldsymbol{S}$. Then $\mathcal{R}$, given by (\ref{KP_YB_map}), is a Yang-Baxter map.
\end{theorem}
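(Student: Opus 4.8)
The plan is to prove the Yang--Baxter equation (\ref{YB_eq}) not by inserting the explicit formulas (\ref{KP_YB_map}) and (\ref{alpha_12}) into both composites, but by exploiting the Lax representation of Theorem~\ref{thm:Lax}. Since $\mathcal{R}(i,j)$ is exactly the map implied by the refactorization equations (\ref{Lax_eqs}) for the matrices $A_i,\tilde{A}_i$ of (\ref{A,tA,B,tB}), the two composite maps in (\ref{YB_eq}) can be read off from successive refactorizations of the triple products $A_1(\lambda,X_1)\,A_2(\lambda,X_2)\,A_3(\lambda,X_3)$ and $\tilde{A}_1(\lambda,X_1)\,\tilde{A}_2(\lambda,X_2)\,\tilde{A}_3(\lambda,X_3)$. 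This is precisely the associativity argument displayed above: the first chain, applying $\mathcal{R}_{\bsy{12}}$, then $\mathcal{R}_{\bsy{13}}$, then $\mathcal{R}_{\bsy{23}}$, realizes the right-hand side of (\ref{YB_eq}) and brings the product into the reversed order $A_3 A_2 A_1$ with output triple $(Z_1,Z_2,Z_3)$; the second chain, in the opposite order, realizes the left-hand side and yields $(Z_1',Z_2',Z_3')$. As matrix multiplication is associative, both reorderings produce one and the same product.

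The entire argument therefore collapses to a single uniqueness statement: if
\bez
 A_1(\lambda,X_1) \, A_2(\lambda,X_2) \, A_3(\lambda,X_3)
   &=& A_3(\lambda,Z_3) \, A_2(\lambda,Z_2) \, A_1(\lambda,Z_1) \, , \\
 \tilde{A}_1(\lambda,X_1) \, \tilde{A}_2(\lambda,X_2) \, \tilde{A}_3(\lambda,X_3)
   &=& \tilde{A}_3(\lambda,Z_3) \, \tilde{A}_2(\lambda,Z_2) \, \tilde{A}_1(\lambda,Z_1) \, ,
\eez
hold as identities in $\lambda$, then the triple $(Z_1,Z_2,Z_3)$ of rank one $K$-projections is uniquely determined by $(X_1,X_2,X_3)$. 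Granting this, the two chains force $Z_i'=Z_i$ for $i=1,2,3$, which is exactly (\ref{YB_eq}).

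To establish the uniqueness I would argue via the pole structure in $\lambda$, in the spirit of the proof of Theorem~\ref{thm:Lax}. For a $K$-projection one computes $\det A_i(\lambda,X)=(\lambda-p_i)/(\lambda-q_i)$, so $A_i(\lambda,X)$ has a single simple pole, at $\lambda=q_i$, with residue proportional to $Z_i K$, and $A_j(\lambda,\cdot)$ is invertible at $\lambda=q_i$ whenever $i\neq j$ (here the pairwise distinctness of the $p_i,q_i$ is used). Consequently the outer factors $Z_3$ and $Z_1$ are read off from the residues of the reversed product at the extreme poles $\lambda=q_3$ and $\lambda=q_1$, where the leftmost (resp.\ rightmost) factor alone is singular; combining the $A$- and $\tilde{A}$-factorizations, related by $K A_i=\tilde{A}_i K$, recovers both the column and the row data of each rank one projection, and the normalization $\mathrm{tr}(K Z_i)=1$ fixes the remaining scale. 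Once $Z_3$ and $Z_1$ are known, the inner factor $Z_2$ is obtained by stripping off the determined outer factors and reading the residue at $\lambda=q_2$. This pole-residue bookkeeping is the technical heart of the proof and the step most likely to require care, precisely because $Z_2$ can only be recovered after the outer factors have been pinned down, so the determination is sequential rather than symmetric.

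Finally, as a fully independent check one may verify (\ref{YB_eq}) by direct substitution of (\ref{KP_YB_map}) and (\ref{alpha_12}) into both composites and simplifying. The cancellations are substantial but purely algebraic, and are most conveniently confirmed with computer algebra, as already noted above.
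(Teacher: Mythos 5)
Your overall strategy is exactly the paper's: deduce (\ref{YB_eq}) from the Lax representation of Theorem~\ref{thm:Lax} by reordering $A_1A_2A_3$ (and the $\tilde{A}$ companion) in the two possible ways and invoking uniqueness of the three-fold refactorization, with direct computer-algebra verification as a fallback. The paper leaves the uniqueness step to a citation (Proposition~3.1 of Kouloukas--Papageorgiou) and to the computer check, so your proposal only differs in attempting to sketch that lemma.

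That sketch, however, contains a concrete error. At $\lambda=q_3$ only the leftmost factor of the reversed product is singular, and the residue of $A_3(\lambda,Z_3)A_2(\lambda,Z_2)A_1(\lambda,Z_1)$ there equals $-(p_3-q_3)\,Z_3K\,A_2(q_3,Z_2)A_1(q_3,Z_1)$. Its column space does give the column vector of the rank one matrix $Z_3$, but its row space is the row vector of $Z_3K$ multiplied by the \emph{unknown} invertible factors $A_2A_1$, so the row data of $Z_3$ cannot be read off; passing to the $\tilde{A}$-product does not help, since the residue of $\tilde{A}_3\tilde{A}_2\tilde{A}_1$ at $q_3$ only returns $K\xi_3$ again, not $\eta_3$. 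Symmetrically, the residues at $\lambda=q_1$ determine only the row vector of $Z_1$, not its column vector. Thus your claim that the two extreme residues recover ``both the column and the row data'' of the outer factors is false as stated. The missing halves must come from the zeros of $\det A_i(\lambda,Z_i)=(\lambda-p_i)/(\lambda-q_i)$: for instance, multiplying the known product on the right by $B_1(\lambda,Z_1)=A_1(\lambda,Z_1)^{-1}$ and demanding regularity at $\lambda=p_1$ forces $P(p_1)\,Z_1K=0$, so the column vector of $Z_1$ spans the (generically one-dimensional) kernel of the known matrix $P(p_1)$; the left kernel at $\lambda=p_3$ similarly yields the row vector of $Z_3$, and only then can $Z_2$ be stripped out. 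This is repairable, and your computer-algebra fallback keeps the overall proof intact, but the uniqueness argument as written does not go through.
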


(\ref{Lax_eqs}) is called a ``Lax representation'' for the map $\mathcal{R}$.
\vspace{.2cm}

Using (\ref{A,tA,B,tB}), (\ref{KP_YB_map}) can be expressed as
\be
   X_1' = \frac{B_2(p_1,X_2) \, X_1 \, \tilde{A}_2(q_1,X_2)}
                      {\mathrm{tr}[B_2(p_1,X_2) \, X_1 \, \tilde{A}_2(q_1,X_2) \, K]} \, , \quad
   X_2' = \frac{A_1(q_2,X_1) \, X_2 \, \tilde{B}_1(p_2,X_1)}
                      {\mathrm{tr}[A_1(q_2,X_1) \, X_2 \, \tilde{B}_1(p_2,X_1) \, K]} \, .
                      \label{X_1,2,out}
\ee
In particular, we have
\bez
      \alpha_{12}^{-1} = \mathrm{tr}[B_2(p_1,X_2) \, X_1 \, \tilde{A}_2(q_1,X_2) \, K]
                  = \mathrm{tr}[A_1(q_2,X_1) \, X_2 \, \tilde{B}_1(p_2,X_1) \, K] \, . 
\eez

\begin{remark}
We also have
\bez
      \alpha_{12}^{-1} &=& 1 - \frac{(p_1-q_1)(p_2-q_2)}{(p_1-p_2)(q_1-q_2)} \mathrm{tr}(X_1' K X_2' K) \\
                  &=& \mathrm{tr}[B_2(p_1,X_2') \, X_1' \, \tilde{A}_2(q_1,X_2') \, K]  
                  = \mathrm{tr}[A_1(q_2,X_1') \, X_2' \, \tilde{B}_1(p_2,X_1') \, K] \, , 
\eez
which in particular means that $\alpha_{12}$ is an invariant of the map $\mathcal{R}$.
\end{remark}

\section{Further aspects of the Lax representation}
\label{sec:Lax_system}
According to Section~\ref{sec:Lax}, 
\be
       A_1(\lambda,X_1) \, A_2(\lambda,X_2) = A_2(\lambda,X_2') \, A_1(\lambda,X_1')    \label{Lax_R}
\ee
is a Lax representation for the Yang-Baxter map $\mathcal{R}$. More precisely, 
we have to supplement this equation by $\tilde{A}_1(\lambda,X_1) \, \tilde{A}_2(\lambda,X_2) 
= \tilde{A}_2(\lambda,X_2') \, \tilde{A}_1(\lambda,X_1')$, if $K$ is not an invertible square matrix.
A corresponding extension is also necessary for the other versions of (\ref{Lax_R}) considered below, but 
for simplicity we will suppress it. 

\paragraph{1.}
A Lax representation for the inverse of $\mathcal{R}$ is given by
\be
   B_1(\lambda,X_1) \, B_2(\lambda,X_2) = B_2(\lambda,X_2') \, B_1(\lambda,X_1') \, .  
    \label{Lax_R^-1}
\ee
As a consequence, we have
\bez
    \mathcal{R}(1,2)^{-1} : \; (X_1,X_2) \mapsto (X_1',X_2') \, , 
\eez    
where
\be
   X_1' = \frac{A_2(q_1,X_2) \, X_1 \, \tilde{B}_2(p_1,X_2)}
                      {\mathrm{tr}[A_2(q_1,X_2) \, X_1 \, \tilde{A}_2(p_1,X_2) \, K]} \, , \quad
   X_2' = \frac{B_1(p_2,X_1) \, X_2 \, \tilde{A}_1(q_2,X_1)}
                      {\mathrm{tr}[B_1(p_2,X_1) \, X_2 \, \tilde{A}_1(q_2,X_1) \, K]} \, .   
                      \label{X_in->X_out_inverse}
\ee
Comparison with (\ref{X_1,2,out}) shows that this is obtained from the latter by 
exchanging the two indices 1 and 2. 
This means that $\mathcal{R}$ is a \emph{reversible} Yang-Baxter map,
\bez
      \mathcal{R}_{\bsy{21}}(2,1) \circ \mathcal{R}_{\bsy{12}}(1,2) = \mathrm{id} \, .     
\eez

\paragraph{2.} Let us write 
\be
    A_1(\lambda,X_1) \, B_2(\lambda,X_2) = B_2(\lambda,X_2') \, A_1(\lambda,X_1') \, .   \label{Lax_T}
\ee
instead of (\ref{Lax_R}). As in Section~\ref{sec:Lax} and Appendix~\ref{app:unique_map}, it can be shown 
that this equation uniquely determines the map
\bez
    \mathcal{T}(1,2) := \mathcal{T}(p_1,q_1;p_2,q_2): 
      && \hspace{.35cm} \boldsymbol{S} \times \boldsymbol{S} \rightarrow \boldsymbol{S} \times \boldsymbol{S} \\
      && (X_1, X_2) \mapsto (X_1' , X_2')  \, ,
\eez 
where
\be
  X_1' &=& \alpha_{12}^{-1} \, \Big( 1_m - \frac{p_2-q_2}{p_1-q_2} X_2 K \Big) \, X_1 \, 
                    \Big( 1_n - \frac{p_2-q_2}{p_2-q_1} K X_2 \Big)  \nonumber \\ 
       &=& \frac{A_2(p_1,X_2) \, X_1 \, \tilde{B}_2(q_1,X_2)} 
                      {\mathrm{tr}[A_2(p_1,X_2) \, X_1 \, \tilde{B}_2(q_1,X_2) \, K]} \, ,  \nonumber \\                      
  X_2' &=& \alpha_{12}^{-1} \, \Big( 1_m - \frac{p_1-q_1}{p_2-q_1} X_1 K \Big) \, X_2 \, 
                    \Big( 1_n - \frac{p_1-q_1}{p_1-q_2} K X_1 \Big) \nonumber \\
       &=& \frac{A_1(p_2,X_1) \, X_2 \, \tilde{B}_1(q_2,X_1)}
                      {\mathrm{tr}[A_1(p_2,X_1) \, X_2 \, \tilde{B}_1(q_2,X_1) \, K]}  \, .     
                    \label{X_in->X_out_T}  
\ee
The denominators of the final expressions are both equal to $\alpha_{12}$. 
This map is invariant under exchange of the two indices 1 and 2, hence
\bez
      \mathcal{T}_{\bsy{21}}(2,1) =  \mathcal{T}_{\bsy{12}}(1,2) \, .
\eez
Although (\ref{X_in->X_out_T}) resembles (\ref{KP_YB_map}), in contrast to the latter 
it does \emph{not} yield a Yang-Baxter map. This can be checked using computer algebra. 
As a consequence of associativity, we have
\bez
    A_1(\lambda,X_1) \, B_2(\lambda,X_2) \, A_3(\lambda,X_3)
    &\stackrel{\mathcal{T}_{\bsy{12}}}{=}& B_2(\lambda,Y_2) \, A_1(\lambda,Y_1)  \, A_3(\lambda,X_3) \\
    &\stackrel{\mathcal{R}_{\bsy{13}}}{=}& B_2(\lambda,Y_2)  \, A_3(\lambda,Y_3) \, A_1(\lambda,Z_1)  \\
    &\stackrel{\mathcal{T}^{-1}_{\bsy{23}}}{=}& A_3(\lambda,Z_3) \, B_2(\lambda,Z_2)  \, A_1(\lambda,Z_1)  \, ,
\eez
and also
\bez
    A_1(\lambda,X_1) \, B_2(\lambda,X_2) \, A_3(\lambda,X_3)
    &\stackrel{\mathcal{T}^{-1}_{\bsy{23}}}{=}&  A_1(\lambda,X_1) \, A_3(\lambda,Y'_3) \, B_2(\lambda,Y'_2)   \\
    &\stackrel{\mathcal{R}_{\bsy{13}}}{=}& A_3(\lambda,Z_3') \,  A_1(\lambda,Y'_1) \, B_2(\lambda,Y'_2)   \\
    &\stackrel{\mathcal{T}_{\bsy{12}}}{=}& A_3(\lambda,Z_3') \, B_2(\lambda,Z_2')  \,  A_1(\lambda,Z_1')   \, ,
\eez
where we used (\ref{Lax_R}) and set, for example, $\mathcal{T}^{-1}(X_2,X_3) =: (Y'_2,Y'_3)$. 
One can argue that this implies $Z_i'=Z_i$, $i=1,2,3$, in which case we can conclude that
\be
      \mathcal{T}^{-1}_{\bsy{23}}(2,3) \circ \mathcal{R}_{\bsy{13}}(1,3) \circ \mathcal{T}_{\bsy{12}}(1,2)
    =  \mathcal{T}_{\bsy{12}}(1,2) \circ \mathcal{R}_{\bsy{13}}(1,3) \circ \mathcal{T}^{-1}_{\bsy{23}}(2,3)  
       \, .    \label{YB_mixed}
\ee
This can also be verified using computer algebra. 
Hence, writing the Lax representation (\ref{Lax_R}) in the form (\ref{Lax_T}), we are led to a ``mixed Yang-Baxter equation''
for the two maps $\mathcal{R}$ and $\mathcal{T}$.  

\paragraph{3.} 
The inverse $\mathcal{T}^{-1}$ of $\mathcal{T}$ is given by $(X_1,X_2) \mapsto (X_1',X_2')$, where
\be
  X_1' = \frac{B_2(q_1,X_2) \, X_1 \, \tilde{A}_2(p_1,X_2)}
                      {\mathrm{tr}[B_2(q_1,X_2) \, X_1 \, \tilde{A}_2(p_1,X_2) \, K]} 
                            \, , \quad   
  X_2' = \frac{B_1(q_2,X_1) \, X_2 \, \tilde{A}_1(p_2,X_1)}
                      {\mathrm{tr}[B_1(q_2,X_1) \, X_2 \, \tilde{A}_1(p_2,X_1) \, K]} \, .
                      \label{X_in->X_out_alt_inverse}
\ee 
A corresponding Lax representation is the version 
\be
    B_1(\lambda,X_1) \, A_2(\lambda,X_2) = A_2(\lambda,X_2') \, B_1(\lambda,X_1')  \label{Lax_T^-1}
\ee
of (\ref{Lax_R}).

\begin{remark}
In the special case where $n=1$, besides $\mathcal{R}$ also $\mathcal{T}$ becomes \emph{linear}:
\bez
     (X'_1,X'_2) = (X_1,X_2) \,  T(1,2) \, , \qquad
     T(i,j) := \left(\begin{array}{cc} \frac{q_i-p_j}{q_i-q_j} & \frac{q_j-p_i}{q_i-q_j} \\
                                       \frac{p_j-q_j}{q_i-q_j} & \frac{p_i-q_j}{q_i-q_j} \end{array}\right) \, .
\eez
It is easily verified that the matrix $T$ does \emph{not} satisfy the Yang-Baxter equation.
\end{remark}

\subsection{Further consequences of the Lax representation}
There are actually further consequences of the fact that (\ref{Lax_R}), (\ref{Lax_R^-1}), (\ref{Lax_T}) and (\ref{Lax_T^-1})
uniquely determine maps. 
\begin{enumerate}
\item The two ways to rewrite $A_1(\lambda,X_1) \, A_2(\lambda,X_2) \, B_3(\lambda,X_3)$ in the form 
$B_3(\lambda,Z_3) \, A_2(\lambda,Z_2)  \, A_1(\lambda,Z_1)$, by using (\ref{Lax_R}) and (\ref{Lax_T}),
allow us to deduce that
\be
        \mathcal{T}_{\bsy{23}}(2,3) \circ \mathcal{T}_{\bsy{13}}(1,3) \circ \mathcal{R}_{\bsy{12}}(1,2)
    =  \mathcal{R}_{\bsy{12}}(1,2) \circ \mathcal{T}_{\bsy{13}}(1,3) \circ \mathcal{T}_{\bsy{23}}(2,3)    
     \, . \label{TTR=RTT}
\ee
\item Rewriting $A_1(\lambda,X_1) \, B_2(\lambda,X_2) \, B_3(\lambda,X_3)$ as 
$B_3(\lambda,Z_3) \, B_2(\lambda,Z_2)  \, A_1(\lambda,Z_1)$ in the two possible ways, with certain $Z_i$, 
using (\ref{Lax_R^-1}) and (\ref{Lax_T}), leads to
\be
        \mathcal{R}^{-1}_{\bsy{23}}(2,3) \circ \mathcal{T}_{\bsy{13}}(1,3) \circ \mathcal{T}_{\bsy{12}}(1,2)
    =  \mathcal{T}_{\bsy{12}}(1,2) \circ \mathcal{T}_{\bsy{13}}(1,3) \circ \mathcal{R}^{-1}_{\bsy{23}}(2,3)         
              \, .  \label{RinvTT=TTRinv}
\ee
\item Moreover, transforming $B_1(\lambda,X_1) \, B_2(\lambda,X_2) \, A_3(\lambda,X_3)$ to 
$A_3(\lambda,Z_3) \, B_2(\lambda,Z_2)  \, A_1(\lambda,Z_1)$, with certain $Z_i$, using (\ref{Lax_R^-1}) 
and (\ref{Lax_T^-1}), we obtain
\be
        \mathcal{T}^{-1}_{\bsy{23}}(2,3) \circ \mathcal{T}^{-1}_{\bsy{13}}(1,3) \circ \mathcal{R}^{-1}_{\bsy{12}}(1,2)
    =  \mathcal{R}^{-1}_{\bsy{12}}(1,2) \circ \mathcal{T}^{-1}_{\bsy{13}}(1,3) \circ \mathcal{T}^{-1}_{\bsy{23}}(2,3)  
       \, .   \label{TinvTinvRinv=RinvTinvTinv}
\ee
But this equivalent to (\ref{TTR=RTT}). 
\item Finally, rewriting $B_1(\lambda,X_1) \, A_2(\lambda,X_2) \, A_3(\lambda,X_3)$ in the form
$A_3(\lambda,Z_3) \, A_2(\lambda,Z_2)  \, B_1(\lambda,Z_1)$, using (\ref{Lax_R}) and (\ref{Lax_T^-1}),
implies
\be
        \mathcal{R}_{\bsy{23}}(2,3) \circ \mathcal{T}^{-1}_{\bsy{13}}(1,3) \circ \mathcal{T}^{-1}_{\bsy{12}}(1,2)
    =  \mathcal{T}^{-1}_{\bsy{12}}(1,2) \circ \mathcal{T}(1,3)^{-1}_{\bsy{13}} \circ \mathcal{R}_{\bsy{23}}(2,3) \, ,
               \label{RTinvTinv=TinvTinvR}
\ee
which, however, is equivalent to (\ref{RinvTT=TTRinv}).
\end{enumerate}

\begin{remark}
A system of equations like that given by the Yang-Baxter equation (\ref{YB_eq}), supplemented by 
(\ref{TTR=RTT}) or  (\ref{RTinvTinv=TinvTinvR}), appeared 
in \cite{Hlav94} under the name ``braided Yang-Baxter equations''. 
The same holds for the Yang-Baxter equation for $\mathcal{R}^{-1}$, supplemented by
 (\ref{RinvTT=TTRinv}) or (\ref{TinvTinvRinv=RinvTinvTinv}). Also see \cite{Vlad93}.
Via (\ref{TTR=RTT}) and (\ref{RinvTT=TTRinv}), as well as via (\ref{TinvTinvRinv=RinvTinvTinv}) 
and (\ref{RTinvTinv=TinvTinvR}), we have examples of what has been called ``WXZ system'' in \cite{Hlav+Snob99}, 
later also named ``Yang-Baxter system'' \cite{Brze+Nich05}. This system apparently first appeared in \cite{Vlad93}. 
Here we obtained solutions of these systems. Equation (\ref{TTR=RTT}) also appeared in \cite{KNW09}, where a solution 
emerged in the context of the \emph{scalar} discrete KP hierarchy. Since the solutions considered in the present work become 
trivial in the scalar case, the latter solution is of a different nature. 
\end{remark}

\section{The p2DTL$_K$ equation}
\label{sec:Toda}
In this section, we address the following matrix version of the potential two-dimensional Toda lattice equation, 
\be
  \varphi_{xy} - \varphi^+ + 2 \varphi - \varphi^- = (\varphi^+ - \varphi) K \varphi_y - \varphi_y K (\varphi - \varphi^-)
       \, ,   \label{matrixToda_K}
\ee
where $\varphi$ is an $m \times n$ matrix of (real or complex) functions and $K$ a constant $n \times m$ matrix of maximal rank. 
A subscript indicates a partial derivative with respect to the respective variable, here $x$ or $y$. 
A superscript $+$ or $-$ means a shift, respectively inverse shift, in a discrete variable, which we will denote by $k$. 
We refer to this equation as p2DTL$_K$. 

In the vector case $n=1$, writing $K=(k_1,\ldots,k_m)$, (\ref{matrixToda_K}) reads
\bez
   \varphi_{i,xy} - \varphi_i^+ + 2 \varphi_i - \varphi_i^- = (\varphi_i^+ - \varphi_i) \sum_j k_j \varphi_{j,y} 
      - \varphi_{i,y} \sum_j k_j (\varphi_j - \varphi_j^-) \qquad i=1,\ldots,m \, .
\eez
By a transformation and redefinition of $\varphi$, we can then achieve that $K = (1,0,\ldots,0)$, so that
\bez
  && \varphi_{1,xy} - \varphi_1^+ + 2 \varphi_1 - \varphi_1^- = (\varphi_1^+ - \varphi_1) \varphi_{1,y} 
      - \varphi_{1,y} (\varphi_1 - \varphi_1^-) \, , \\
  && \varphi_{j,xy} - \varphi_j^+ + 2 \varphi_j - \varphi_j^- = (\varphi_j^+ - \varphi_j) \varphi_{1,y} 
      - \varphi_{j,y} (\varphi_1 - \varphi_1^-)  \qquad j=2,\ldots,m \, ,    
\eez
which is the scalar potential 2DTL equation, extended by $m-1$ linear equations. 

In terms of new independent variables  
\bez
    t = x+y \, , \qquad  z = x-y \, ,
\eez    
equation (\ref{matrixToda_K}) reads
\be
    \varphi_{tt} - \varphi_{zz}  - \varphi^+ + 2 \varphi - \varphi^- 
  = (\varphi^+ - \varphi) K (\varphi_t - \varphi_z) - (\varphi_t - \varphi_z) K (\varphi - \varphi^-) \, .
      \label{matrixToda_K_2}
\ee
If $\varphi$ is independent of $z$, the last equation reduces to 
\be
 \varphi_{tt}  - \varphi^+ + 2 \varphi - \varphi^- 
  = (\varphi^+ - \varphi) K \varphi_t - \varphi_t K (\varphi - \varphi^-) \, .      \label{p1DTL_K}
\ee
We will refer to this equation as p1DTL$_K$. 

\begin{remark}
\label{rem:scalar2DTL}
In terms of
\bez
      u := \varphi_y \, ,
\eez     
in the scalar case ($n=m=1$), and after differentiation with respect to $y$, (\ref{matrixToda_K}) with $K=1$ leads to
the \emph{two-dimensional Toda lattice (2DTL) equation} \cite{Mikh79} 
(also see \cite{HIK88,Ueno+Taka84,Hiro04,Bion+Wang10,Taka18}, for example)
\be
     (\ln(1+u))_{xy} = u^+ - 2u + u^-  \, .  \label{2DTL}
\ee     
A continuum limit of the 2DTL equation is the KP-II equation \cite{Bion+Wang10}. 
 If $u$ is independent of $z$, the 2DTL equation (\ref{2DTL}) reduces to the one-dimensional 
Toda lattice equation \cite{Toda89}
\bez
      (\ln(1+u))_{tt} = u^+ - 2u + u^- \, . 
\eez
Correspondingly, we may regard (\ref{p1DTL_K}) as a 
matrix version of the potential one-dimensional Toda lattice equation.
\end{remark}

\begin{remark}
\label{rem:from_scalar_to_matrix_solutions}
Multiplying any solution of the \emph{scalar} version of (\ref{matrixToda_K}) by an arbitrary constant  
$K$-projection matrix, yields a solution of the matrix equation (\ref{matrixToda_K}). 
In this way, a single scalar soliton solution determines single matrix soliton solutions of any rank up to
the maximal. 
\end{remark}

\subsection{A binary Darboux transformation for the p2DTL$_K$ equation}
\label{subsec:Toda_bDT}
The following binary Darboux transformation is a special case of a general result in 
bidifferential calculus, see Appendix~\ref{app:BDT}. 
Let $N \in \bbN$. 
The integrability condition of the linear system
\be
   \theta_x = \theta^+ - \theta + (\varphi_0^+ - \varphi_0) K \theta \, , \qquad
   \theta_y = \theta - \theta^- - \varphi_{0,y} K \theta^-  \, ,   \label{Toda_lin_sys}
\ee
where $\theta$ is an $m \times N$ matrix, is the p2DTL$_K$ equation for $\varphi_0$. The same holds for 
the adjoint linear system
\be
   \chi_x = \chi - \chi^- - \chi K (\varphi_0^+ - \varphi_0)  \, , \qquad 
   \chi_y = \chi^+ - \chi + \chi^+ K \varphi_{0,y}^+ \, ,  \label{Toda_adj_lin_sys}
\ee
where $\chi$ is an $N \times n$ matrix. So let $\varphi_0$ be a given solution of (\ref{matrixToda_K}).
Let the Darboux potential $\Omega$ satisfy the consistent  system of $N \times N$ matrix equations
\be
   \Omega - \Omega^- = - \chi K \theta \, , \qquad
   \Omega_x = - \chi K \theta^+ \, , \qquad
   \Omega_y = - \chi^+ K \theta - \chi^+ K \varphi^+_{0,y} \theta \, . \label{Toda_Omega}
\ee
Where $\Omega$ is invertible,  
\be
    \varphi = \varphi_0 - \theta (\Omega^-)^{-1} \chi^-    \label{Toda_new_solution}
\ee
is then a new solution of the p2DTL$_K$ equation (\ref{matrixToda_K}).

\begin{remark}
\label{rem:BDT_transf}
The equations (\ref{Toda_lin_sys}) - (\ref{Toda_new_solution}) are invariant under the transformation
\bez
    \theta \mapsto \theta \, C_1 \, , \qquad 
    \chi \mapsto C_2 \, \chi \, , \qquad
    \Omega \mapsto C_2 \, \Omega \, C_1 \, ,
\eez
with any invertible constant $N \times N$ matrices $C_a$, $a=1,2$. This observation is helpful in order 
to reduce the set of parameters, on which a generated solution depends. 
\end{remark}

Using (\ref{Toda_new_solution}) and the second of (\ref{Toda_Omega}), we find
\be
    \mathrm{tr}( K \varphi) &=& \mathrm{tr}(K \varphi_0) - \mathrm{tr}(K \theta \, (\Omega^-)^{-1} \chi^-) 
                 = \mathrm{tr}(K \varphi_0) - \mathrm{tr}( (\Omega^-)^{-1} \chi^- \, K \, \theta) \nonumber \\
                 &=& \mathrm{tr}(K \varphi_0) + \mathrm{tr}(\Omega^{-1} \Omega_x)^- 
                  = \mathrm{tr}(K \varphi_0) + (\log \det \Omega)_x^- \, ,
                     \label{trKvarphi}
\ee
so that $\det \Omega$ plays a role similar to the (Hirota) $\tau$-function of the (scalar) 2DTL equation.

\subsubsection{Solutions for vanishing seed}
\label{subsec:zero_seed}
The linear system (\ref{Toda_lin_sys}) with $\varphi_0=0$ reads
\bez
   \theta_x = \theta^+ - \theta \, , \qquad 
   \theta_y = \theta - \theta^- \, .
\eez
It possesses solutions of the form
\bez
    \theta = \sum_{a=1}^A \theta_a \, e^{\tilde{\vartheta}(P_a)} \, P_a^k \, .
\eez
Here  $\theta_a$, $a=1,\dots,A$, are constant $m \times N$ matrices,
$k$ denotes the discrete variable, $P_a$, $a=1,\ldots,A$, are constant $N \times N$ matrices, and
\be
     \tilde{\vartheta}(P) = (P-I) \, x + (I-P^{-1}) \, y \, .   \label{Toda_phase}
\ee
Correspondingly, the adjoint linear system (\ref{Toda_adj_lin_sys}) takes the form
\bez
   \chi_x = \chi - \chi^-  \, , \qquad 
   \chi_y = \chi^+ - \chi  \, ,  
\eez
which is solved by
\bez
    \chi = \sum_{b=1}^B e^{-\tilde{\vartheta}(Q_b)} \, Q_b^{-k} \, \chi_b \, ,
\eez
where $\chi_b$, $b=1,\ldots,B$, are constant $N \times n$ matrices and 
$Q_b$, $b=1,\ldots,B$, are constant $N \times N$ matrices. 

The equations for the Darboux potential $\Omega$ are reduced to
\bez
   \Omega - \Omega^- = - \chi K \theta \, , \qquad 
   \Omega_x = - \chi K \theta^+ \, , \qquad 
   \Omega_y = - \chi^+ K \theta  \, . 
\eez
Writing
\be
   \Omega = \Omega _0 + \sum_{a,b} e^{-\tilde{\vartheta}(Q_b)} Q_b^{-k} W_{ba} \, P_a^{k+1} \, e^{\tilde{\vartheta}(P_a)} \, ,  
                 \label{Omega_ansatz}
\ee
with a constant $N \times N$ matrix $\Omega_0$, 
it follows that $W_{ba}$ has to satisfy the Sylvester equation
\be
    Q_b W_{ba} - W_{ba} P_a = \chi_b K \theta_a \, .   \label{Sylvester}
\ee
If 
\be 
      P_a = \mathrm{diag}(p_{1,a},\ldots,p_{N,a}) \, , \qquad
      Q_b = \mathrm{diag}(q_{1,b},\ldots,q_{N,b}) \, ,   \label{P,Q_diag}
\ee      
and if $p_{i,a} \neq q_{j,b}$ for all $i,j=1,\ldots,N$ and $a=1,\ldots,A$, $b=1,\ldots,B$, then the 
unique solution is known to be given by the Cauchy-like $N \times N$ matrices 
\bez
     W_{ba} = \Big( \frac{\chi_{ib} K \theta_{ja} }{q_{ib}-p_{ja} } \Big) \, .
\eez
Assuming that $\Omega_0$ is invertible, Remark~\ref{rem:BDT_transf} shows that we can set $\Omega_0 = 1_N$ 
without loss of generality. The remaining transformations, according to Remark~\ref{rem:BDT_transf}, 
can be used to reduce the parameters in $\theta$ or $\chi$.

\subsection{Pure solitons}
\label{subsec:pure}
We further restrict the class of p2DTL$_K$ solutions specified in Section~\ref{subsec:zero_seed} by setting $A=B=1$ 
and assume that the matrices $P := P_1$ and $Q := Q_1$ are diagonal (so that (\ref{P,Q_diag}) holds). Solutions from 
this class which are regular and satisfy the spectrum condition
\bez
      \mathrm{spec}(P) \cap \mathrm{spec}(Q) = \emptyset
\eez
will be called ``pure solitons''.

Let us write
\bez
  && P = \mathrm{diag}(p_{1,1},\ldots,p_{N,1}) =: \mathrm{diag}(p_1,\ldots,p_N)  \, , \\
  && Q =: \mathrm{diag}(p_{1,2},\ldots,p_{N,2}) =: \mathrm{diag}(q_1,\ldots,q_N) \, ,  \\
  && \theta_1 = (\xi_1, \ldots, \xi_N)(Q-P) \, , \qquad
     \chi_1 = \left( \begin{array}{c} \eta_1 \\ \vdots \\ \eta_N \end{array} \right) \, ,
\eez
where $\xi_i$, $i=1,\ldots,N$, are constant $m$-component column vectors and $\eta_i$, $i=1,\ldots,N$, are 
constant $n$-component row vectors.
We shall assume that $p_i>0$ and $q_i>0$, $i=1,\ldots,N$, since otherwise the generated solution of (\ref{matrixToda_K}) will 
be singular. The above spectrum condition means $p_i \neq q_j$ for $i,j=1,\ldots,N$, and we have
\bez
    W := W_{1,1} = \Big( \frac{\kappa_{ij} \, (q_j-p_j)}{q_i-p_j} \Big) \, , \qquad \kappa_{ij} = \eta_i K \xi_j \, .
\eez
Introducing
\bez
    \vartheta(p) := p \, x - p^{-1} y + k \, \log p
                       = \frac{1}{2} (p-p^{-1}) \, t + \frac{1}{2} (p+p^{-1}) \, z  + k \, \log p    \, ,
\eez
provisionally\footnote{Finally we only have to make sure that the expressions for $\vartheta_I$ (see below), appearing in 
a generated solution of the p2DTL$_K$ equation, are real.} 
assuming $p>0$, we obtain
\bez
    \Omega_{ij} = \delta_{ij} + \frac{\kappa_{ij} (q_j-p_j)}{q_i-p_j} \, e^{\vartheta(p_j)^+ - \vartheta(q_i)} \, .
\eez
Let us introduce 
\bez
   && \vartheta_{i,1} := \vartheta(p_i)^+ \, , \qquad \vartheta_{i,2} := \vartheta(q_i) \, , \\
   && \vartheta_I := \sum_{i=1}^N \vartheta_{i,a_i}  \qquad \mbox{if} \quad
    I = (a_1,\ldots,a_N) \in \{1,2\}^N  \, .
\eez
Instead of using $(a_1,\ldots,a_N)$ as a subscript, we simply write $a_1 \ldots a_N$ 
in the following. For example, $\vartheta_{a_1 \ldots a_N} = \vartheta_{(a_1,\ldots,a_N)}$. 
 From (\ref{Toda_new_solution}) we find that a pure soliton solution of the p2DTL$_K$ equation 
can be expressed as
\be
    \varphi^+ = \frac{F}{\tau} \, ,   \label{varphi^+=F/tau}
\ee
with
\be
    \tau &:=& e^{\vartheta_{\boldsymbol{2}}} \, \det \Omega  \, ,   \label{tau_pure}  \\
    F &:=& - e^{\vartheta_{\boldsymbol{2}}} \, \theta_1 \, e^{\vartheta(P)^+} \, 
          \mathrm{adj}(\Omega) \, e^{-\vartheta(Q)} \, \chi_1 \, ,    \label{F_pure}
\ee          
where $\mathrm{adj}(\Omega)$ denotes the adjugate of the matrix $\Omega$ and $\boldsymbol{2} := 2\ldots 2 = (2,\ldots,2)$.
The following result is proved in the same way as Proposition~3.1 in \cite{DMH19LMP}.

\begin{proposition}
\label{prop:tau,F_expansion}
$\tau$ and $F$ have expansions
\be
      \tau = \sum_{I \in \{1,2\}^N} \mu_I \, e^{\vartheta_I} \, ,    \label{tau_pure_expansion}  \\
       F = \sum_{I \in \{1,2\}^N} M_I \, e^{\vartheta_I} \, , \label{F_pure_expansion}
\ee
with constants $\mu_I$ and constant $m \times n$ matrices $M_I$.
We have $\mu_{\boldsymbol{2}}=1$ and $M_{\boldsymbol{2}}=0$.  \hfill $\Box$
\end{proposition}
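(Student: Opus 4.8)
The plan is to substitute $\Omega = 1_N + D$ into both (\ref{tau_pure}) and (\ref{F_pure}) and to track how the $(x,y,k)$-dependence is organized. Here $D=(D_{ij})$ has entries $D_{ij} = c_{ij}\,E_j\,G_i$ with the \emph{constants} $c_{ij} := \kappa_{ij}(q_j-p_j)/(q_i-p_j)$ and the diagonal factors $E_j := e^{\vartheta(p_j)^+} = e^{\vartheta_{j,1}}$, $G_i := e^{-\vartheta(q_i)} = e^{-\vartheta_{i,2}}$, so that $E_kG_k = e^{\vartheta_{k,1}-\vartheta_{k,2}}$ and $e^{\vartheta_{\boldsymbol{2}}} = \prod_{k=1}^N e^{\vartheta_{k,2}}$. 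All soliton data thus sits in the diagonal factors, while the $c_{ij}$ are constant.

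First I would treat $\tau$ via the principal-minor expansion $\det(1_N+D) = \sum_{S\subseteq\{1,\ldots,N\}}\det(D_S)$, where $D_S$ is the principal submatrix indexed by $S$ and $\det(D_\emptyset):=1$. The decisive observation is that for each permutation $\sigma$ of $S$ one has $\prod_{i\in S}E_{\sigma(i)} = \prod_{j\in S}E_j$, so that $\det(D_S) = c_S\,\prod_{j\in S}E_jG_j$ with the constant $c_S := \det\big((c_{ij})_{i,j\in S}\big)$; i.e.\ the exponential factor of a minor depends only on the index set $S$. Multiplying $\det\Omega$ by $e^{\vartheta_{\boldsymbol{2}}}$ turns $\prod_{j\in S}E_jG_j$ into $e^{\vartheta_I}$, where $I = I(S)\in\{1,2\}^N$ is defined by $a_j=1 \Leftrightarrow j\in S$. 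This yields (\ref{tau_pure_expansion}) with $\mu_I = c_{S(I)}$, and $S=\emptyset$ gives $\mu_{\boldsymbol{2}} = c_\emptyset = 1$.

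For $F$ I would first linearize the quadratic form in the adjugate. Writing $E = e^{\vartheta(P)^+}$ and $G = e^{-\vartheta(Q)}$ as diagonal matrices and fixing a matrix entry $(\alpha,\beta)$, the bordered-determinant identity $a^\top\mathrm{adj}(\Omega)\,b = -\det\!\left(\begin{smallmatrix}\Omega & b\\ a^\top & 0\end{smallmatrix}\right)$, applied with $a_i = (\theta_1)_{\alpha i}E_i$ and $b_j = G_j(\chi_1)_{j\beta}$, turns (\ref{F_pure}) into $F_{\alpha\beta} = e^{\vartheta_{\boldsymbol{2}}}\det\!\left(\begin{smallmatrix}\Omega & b\\ a^\top & 0\end{smallmatrix}\right)$. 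Expanding this $(N{+}1)\times(N{+}1)$ determinant, every contributing permutation must map $N{+}1\mapsto i$ and $j\mapsto N{+}1$ for some $i,j\le N$ (the corner entry being zero), producing border factors $E_i$ and $G_j$, while the remaining entries contribute a Kronecker delta or a $c\,E\,G$ term exactly as for $\tau$. The step to verify is that the index set carrying the $E$-factors and the one carrying the $G$-factors coincide, equal to a common set $\tilde U$ that necessarily contains $i$ and $j$; then each term has exponential $\prod_{k\in\tilde U}E_kG_k$, and multiplication by $e^{\vartheta_{\boldsymbol{2}}}$ gives $e^{\vartheta_I}$ with $I = I(\tilde U)$. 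Since $\tilde U\neq\emptyset$ always, no term reaches $I=\boldsymbol{2}$, whence $M_{\boldsymbol{2}}=0$; constancy of the $M_I$ is clear because all soliton-dependence has been absorbed into the exponentials.

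I expect the $F$-part to be the main obstacle. The $\tau$-part is a routine minor expansion, but for $F$ one must check carefully that the $E$- and $G$-index sets of each cofactor term agree (so the exponents pair up as $\vartheta_{k,1}-\vartheta_{k,2}$ and no unmatched factor survives) and that this common set contains the border indices, which is exactly what forces $M_{\boldsymbol{2}}=0$. The reduction $\Omega_0 = 1_N$ together with the regularity and spectrum assumptions of Section~\ref{subsec:pure} keeps every expression well defined. As the statement records, this is the same mechanism as in the proof of Proposition~3.1 of \cite{DMH19LMP}, to which one could alternatively defer by transcribing that argument onto the present Cauchy-type matrix $\Omega$.
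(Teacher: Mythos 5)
Your argument is correct and is essentially the one the paper invokes by deferring to Proposition~3.1 of \cite{DMH19LMP}: organize $\det\Omega$ by principal minors and the adjugate bilinear form by a bordered determinant, and observe that in every contributing permutation the set of indices carrying $E$-factors equals the set carrying $G$-factors (by bijectivity of the permutation on its non-fixed points), so each term pairs up into a single $e^{\vartheta_I}$ with constant coefficient. The point you flagged as the main obstacle is indeed exactly the check that makes $M_{\boldsymbol{2}}=0$ come out (the common index set must contain the two border indices and is therefore nonempty), and your treatment of it is sound.
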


Besides conditions imposed on $p_i$ and $q_i$ such that all the $\vartheta_I$ appearing in (\ref{varphi^+=F/tau}) 
are real, the regularity of a pure $N$-soliton solution requires 
$\mu_I \geq 0$ for all $I \in \{1,2\}^N$, and $\mu_J >0$ for at 
least one $J \in \{1,2\}^N$. We will impose the slightly stronger condition $\mu_I > 0$ for all $I \in \{1,2\}^N$.
\vspace{.2cm}

It follows that
\be
    u^+ = \varphi_y^+ = \Big( \frac{1}{\tau} \sum_{I \in \{1,2\}^N} M_I \, e^{\vartheta_I} \Big)_y
        = \frac{1}{2 \tau^2} \sum_{I,J \in \{1,2\}^N} (\tilde{p}_J - \tilde{p}_I)(M_I \mu_J - \mu_I M_J) 
          \, e^{\vartheta_I} e^{\vartheta_J} \, ,     \label{u(mu,M)}
\ee
where
\bez
    \tilde{p}_I := \sum_{i=1}^N \frac{1}{p_{i,a_i}} \qquad \mbox{if} \quad I = (a_1,\ldots,a_N) \in \{1,2\}^N  \, .
\eez

\begin{example}
For $N=1$, writing $p_1=p$, $q_1=q$, $\xi_1=\xi$, $\eta_1=\eta$ and $\kappa = \eta K \xi$, we have the 
single soliton solution
\bez
   \varphi = \frac{\kappa \, (p-q) \, e^{\vartheta(p)}}{e^{\vartheta(q)^-} + \kappa \, e^{\vartheta(p)}} 
              \, \frac{\xi \otimes \eta}{\kappa} \, , 
\eez
which leads to
\bez
   u = \varphi_y 
     = \frac{(p-q)^2}{4pq} \, \mathrm{sech}^2\Big[\frac{1}{2} ( \vartheta(p) - \vartheta(q)^- + \log \kappa ) \Big] 
      \, \frac{\xi \otimes \eta}{\kappa}   \, .        
\eez
In terms of the variables $t = x+y$ and $z = x-y$, it reads
\bez
    u= \frac{(p-q)^2}{4 p q} \text{sech}^2\left[\frac{1}{2} \Big( \frac{1}{2}(p-q-p^{-1} + q^{-1}) t 
       - \frac{1}{2} (p-q+p^{-1}-q^{-1}) z + \log(p/q) \, k + \log(q \kappa) \Big) \right] 
          \, \frac{\xi \otimes \eta}{\kappa} \, .
\eez
We have to restrict the parameters such that $p/q > 0$ and $q \kappa >0$. 
The solution becomes independent of $z$ if we choose $q = p^{-1}$, in which case the above $\varphi$ reduces 
to a single soliton solution of the p1DTL$_K$ equation (\ref{p1DTL_K}), and we have
\bez
    u= \frac{(p-p^{-1})^2}{4} \text{sech}^2\left[ \frac{1}{2}(p-p^{-1}) \, t 
       + \log(p) \, k + \frac{1}{2} \log(\kappa/p) \right] 
          \, \frac{\xi \otimes \eta}{\kappa} \, . 
\eez
It is obvious from (\ref{Toda_new_solution}) and the sizes of its matrix constituents that, for $N=1$, 
the binary Darboux transformation with zero seed can only yield a \emph{rank one} solution.  
\end{example}

\subsubsection{Tropical limit of pure solitons}
\label{subsec:trop}
We define the tropical limit of a matrix soliton solution via the tropical 
limit of the scalar function $\tau$ (cf. \cite{DMH11KPT,DMH12KPBT,DMH14KdV}). Let
\be
    \varphi_I := \varphi\Big|_{\vartheta_J \to -\infty, J \neq I} = \frac{M_I}{\mu_I} \, . \label{varphi_I}
\ee
In the region of $\mathbb{R}^2 \times \mathbb{Z}$, where a phase $\vartheta_I$ dominates all others, in the sense that 
$\log(\mu_I \, e^{\vartheta_I}) > \log(\mu_J \, e^{\vartheta_J})$ for all participating $J \neq I$, 
the tropical limit of the potential $\varphi$ is given by (\ref{varphi_I}).\footnote{Such ``dominating phase regions'' 
have also been used, for example, in \cite{IWS03,Maru+Bion04,Bion+Chak06,Chak+Koda08JPA}, mostly for the 
asymptotic analysis of solitons. In our work we apply it to the whole soliton solution, not just in asymptotic regions. 
Also see \cite{DMH11KPT,DMH12KPBT,DMH14KdV,DMH19LMP,DMH18p,DMHC19}.}
These expressions do not depend on the variables $x,y,k$ (respectively, $z,t,k$). 

The boundary between the regions associated with the phases $\vartheta_I$ and $\vartheta_J$ 
is determined by the condition 
\be
     \mu_I \, e^{\vartheta_I} = \mu_J \, e^{\vartheta_J} \, .    \label{phase_boundary}
\ee
Not all parts of such a boundary are ``visible'', in general, since some of them may lie in a region where 
a third phase dominates the two phases. The tropical limit of a soliton solution, more precisely, of the variable $u$, 
has support on the visible parts of the boundaries between the regions associated with phases appearing in $\tau$. 

For $I = (a_1,\ldots,a_N)$ we set
\bez
     I_j(a) = (a_1,\ldots,a_{j-1},a,a_{j+1}, \ldots,a_N) \, .
\eez
The $j$-th soliton (having parameters $p_j$ and $q_j$) lives, in the tropical limit, on the set of 
two-dimensional plane segments determined, via (\ref{phase_boundary}), by 
\bez
    e^{\vartheta_{I_j(1)} - \vartheta_{I_j(2)}} = \frac{\mu_{I_j(2)}}{\mu_{I_j(1)}} \, ,
\eez
for all $I \in \{1,2\}^N$. More explicitly, the last equation reads
\bez
    (p_j - q_j) \, x + (q_j^{-1} - p_j^{-1}) \, y + \log (p_j/q_j) \, k  
        + \log\left(p_j \, \frac{\mu_{I_j(1)}}{\mu_{I_j(2)}}\right) = 0 \, ,
\eez
which requires
\be
    p_j/q_j >0 \, , \qquad  p_j \, \frac{\mu_{I_j(1)}}{\mu_{I_j(2)}} > 0 \quad \forall I \in \{1,2\}^N \, .  
    \label{reg_cond}
\ee   
All these plane segments are parallel. 
In general there are relative shifts between the segments, they do not constitute together a single plane. 
This gives rise to the familiar (asymptotic) ``phase shift'' of solitons caused by their interaction. 
Fig.~\ref{fig:vec2DToda_2s_21} below shows this for a 2-soliton example, considered at constant time, so that 
the configuration of planes is projected to a graph in two dimensions. 
For $j=1,\ldots,N$, the regularity conditions (\ref{reg_cond}) will be assumed in the following.

On a (visible) boundary segment, the value of $u$ is given by
\bez
     u_{IJ} = - \frac{1}{4} (\tilde{p}_I - \tilde{p}_J) \left( \varphi_I - \varphi_J \right)  \, .
\eez
This follows from (\ref{u(mu,M)}) by use of (\ref{varphi_I}) and (\ref{phase_boundary}).
Instead of the above expressions for the tropical values of $u$, we will rather consider
\be
    \hat{u}_{IJ} = \frac{\varphi_I - \varphi_J}{p_I - p_J}  \, ,  \label{hatu_IJ}
\ee
where
\bez
    p_I := \sum_{i=1}^N p_{i,a_i}  \qquad \mbox{if} \quad  I = (a_1,\ldots,a_N) \in \{1,2\}^N \, .
\eez
(\ref{hatu_IJ}) has the form of a discrete derivative.

Using (\ref{trKvarphi}), (\ref{varphi^+=F/tau}), (\ref{tau_pure_expansion}) and (\ref{F_pure_expansion}), we find 
\be
     \mathrm{tr}(K M_I) = (p_I - p_{\boldsymbol{2}}) \, \mu_I \, ,  \label{tr(K phi_I)}
\ee
and thus
\bez
    \mathrm{tr}(K \varphi_I) = p_I - p_{\boldsymbol{2}} \, .
\eez
As a consequence, we have the normalization
\be
     \mathrm{tr}(K \hat{u}_{IJ}) = 1 \, .    \label{normalization}
\ee

\begin{remark}
We note that 
\bez
     \mathrm{tr}(K u_{I_j(1),I_j(2)}) 
   = - \frac{1}{4} (\tilde{p}_{I_j(1)} - \tilde{p}_{I_j(2)})(p_{I_j(1)} - p_{I_j(2)}) 
   = - \frac{1}{4} \Big( \frac{1}{p_j} - \frac{1}{q_j} \Big) (p_j-q_j) \, ,
\eez
which shows that its value is the same everywhere (i.e., for all $I$) on the tropical support of the 
$j$-th soliton. 
\end{remark}

\subsection{Pure 2-soliton solution and the Yang-Baxter map}
\label{subsec:2sol}
For $N=2$ we find $\varphi^+ = F/\tau$ with
\bez
  \tau &=& \alpha_{12} \, \kappa_{11} \kappa_{22} \, e^{\vartheta_{11}}  + \kappa_{11} \, e^{\vartheta_{12}} 
           + \kappa_{22} \, e^{\vartheta_{21}} + e^{\vartheta_{22}}  \, ,  \\         
     F &=& (p_1-q_1)(p_2-q_2) \Big( \frac{\kappa_{22}}{p_2-q_2} \, \xi_1 \otimes \eta_1 
           + \frac{\kappa_{11}}{p_1-q_1} \, \xi_2 \otimes \eta_2
           - \frac{\kappa_{12}}{p_2-q_1} \, \xi_1 \otimes \eta_2 - \frac{\kappa_{21}}{p_1-q_2} \, \xi_2 \otimes \eta_1 \Big) 
           \, e^{\vartheta_{11}} \\
       &&  + (p_1-q_1) \, \xi_1 \otimes \eta_1 \, e^{\vartheta_{12}} + (p_2-q_2) \, \xi_2 \otimes \eta_2 \, e^{\vartheta_{21}} \, , 
\eez
where
\bez
     \alpha_{12} = 1 - \frac{(p_1-q_1)(p_2-q_2) \, \kappa_{12} \, \kappa_{21}}{(p_2-q_1)(p_1-q_2) \, \kappa_{11} \, \kappa_{22}} 
\eez
and 
\bez
      \vartheta_{11} = \vartheta(p_1)^+ + \vartheta(p_2)^+ \, , \quad
      \vartheta_{12} = \vartheta(p_1)^+ + \vartheta(q_2) \, , \quad
      \vartheta_{21} = \vartheta(p_2)^+ + \vartheta(q_1) \, , \quad
      \vartheta_{22} = \vartheta(q_1) + \vartheta(q_2) \, .
\eez
The above expressions for $\tau$ and $F$ coincide with those derived in the KP$_K$ case \cite{DMH19LMP}. The only difference 
is in the expressions for the phases, but the latter do not enter the expressions for the polarizations.  

\begin{example}
\label{ex:vec2DToda_2s_21}
Let 
\bez
  K = \left(\begin{array}{cc} 1 & 1 \end{array} \right) \, , \;
  \xi_1 = \left(\begin{array}{c} 1 \\ 0 \end{array} \right) \, , \;
    \xi_2 = \left(\begin{array}{c} 0 \\ 1 \end{array} \right) \, , \;
    \eta_1 = \eta_2 = 1 \, , \;
    q_1 = 1/4, \; q_2 = 3, \; p_1 = 3/2, \; p_2 = 2 \, .
\eez
Fig.~\ref{fig:vec2DToda_2s_21} shows the phase constellation and the tropical limit graph of the corresponding 
2-soliton solution of the vector p2DTL$_K$ equation at $t=0$.\footnote{Here, and in all other plots in this work, 
we have chosen the parameters in such a way that, as the vertical coordinate tends to $-\infty$, the solitons 
are naturally ordered in horizontal direction.}

\begin{figure}[h] 
\begin{center}
\begin{minipage}{0.04\linewidth}
\vspace*{.15cm}
\bez
 \begin{array}{cc} k & \\ \uparrow & \\ & \rightarrow z \end{array}                
\eez
\end{minipage}
\includegraphics[scale=.25]{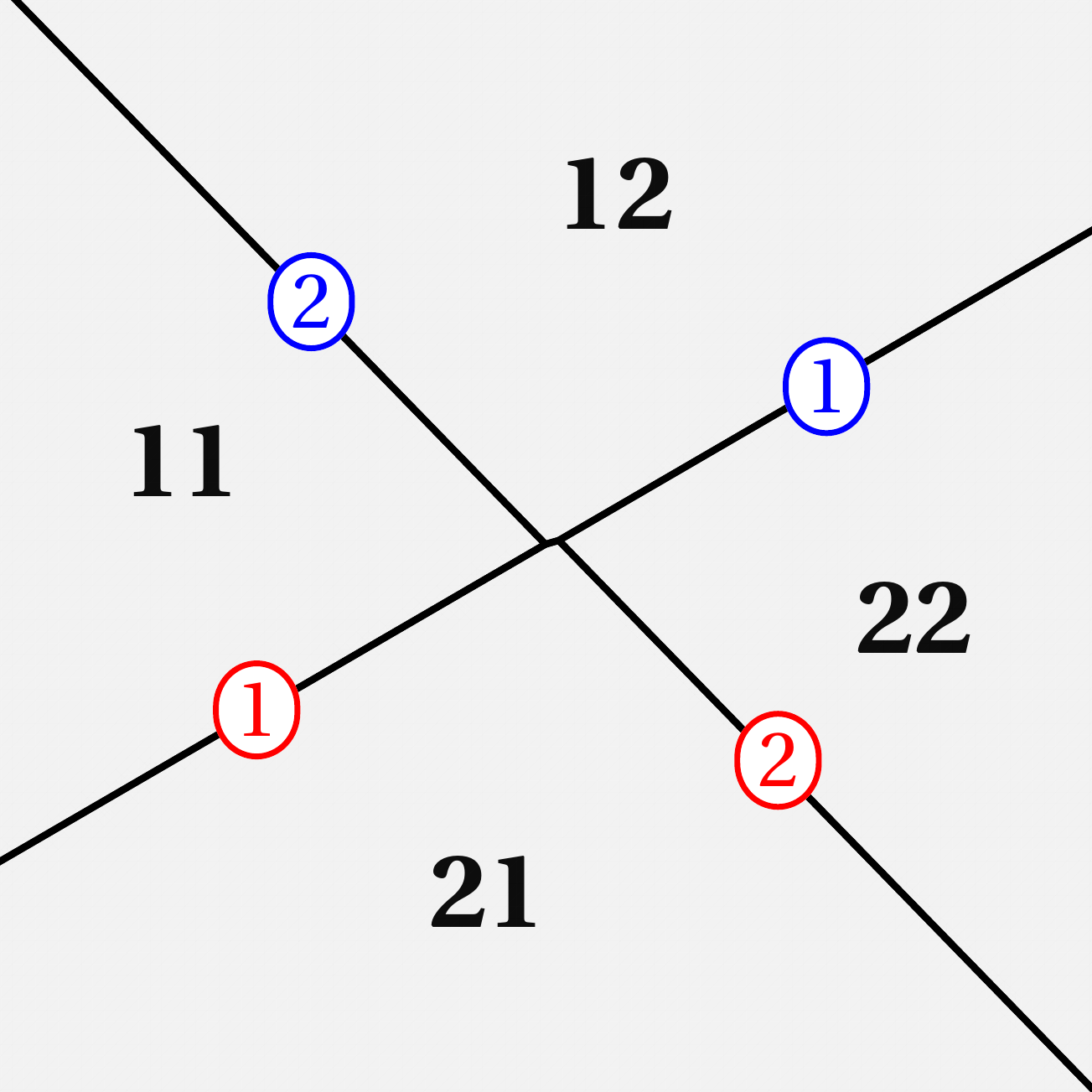} 
\parbox{15cm}{
\caption{Dominating phase regions and tropical limit graph of a pure 2-soliton solution of the vector p2DTL$_K$ equation, 
at $t=x + y=0$. The horizontal coordinate is $z=x-y$ and the discrete coordinate $k$ is continuously extended. 
The numbering of dominating phase regions corresponds to $I=(1,1),(1,2),(2,1),(2,2)$. 
We also marked the two parts of soliton 1, respectively 2.  
\label{fig:vec2DToda_2s_21} }
}
\end{center}
\end{figure}

\end{example}

Let us now consider the graph in Fig.~\ref{fig:vec2DToda_2s_21} as a scattering process evolving from bottom to top.
Defining
\be
    u_1 := \hat{u}_{11,21} \, , \quad
    u_2 := \hat{u}_{21,22} \, , \quad
    u_1' := \hat{u}_{12,22} \, , \quad
    u_2' := \hat{u}_{11,12} \, ,     \label{choice_of_initial/final_polarizations}
\ee
we find
\be
  u_1 &=& \frac{1}{\alpha_{12}  \kappa_{11}} \, 
     \Big( \xi_1 - \frac{ (p_2-q_2) \kappa_{21}}{(p_1-q_2) \kappa_{22}} \, \xi_2 \Big) \otimes
     \Big( \eta_1 - \frac{ (p_2-q_2) \kappa_{12}}{(p_2-q_1) \kappa_{22}} \, \eta_2 \Big) \nonumber \\
      &=& \frac{1}{\alpha_{12}} \, A_2(p_1,u_2) \, u'_1 \, \tilde{B}_2(q_1,u_2) \, , \nonumber \\ 
  u_2 &=& \frac{\xi_2 \otimes \eta_2}{\kappa_{22}} \, ,  \nonumber\\
  u_1' &=& \frac{\xi_1 \otimes \eta_1}{\kappa_{11}}  \, ,  \nonumber\\
  u_2' &=& \frac{1}{\alpha_{12}  \kappa_{22}} \, 
       \Big( \xi_2 - \frac{ (p_1-q_1) \kappa_{12}}{(p_2-q_1) \kappa_{11} } \, \xi_1 \Big) \otimes
       \Big( \eta_2 - \frac{ (p_1-q_1) \kappa_{21}}{(p_1-q_2) \kappa_{11} } \, \eta_1 \Big)   \nonumber \\
      &=& \frac{1}{\alpha_{12}} A_1(p_2,u'_1) \, u_2 \, \tilde{B}_1(q_2,u'_1) \, .    
          \label{2soliton_polarization_relations}
\ee
We observe that $u_i$ and $u_i'$ all have rank one. They are $K$-projections, i.e.,
\be
     u_i K u_i = u_i \, , \qquad u_i' K u_i' = u_i' \, .   \label{u_i_K-proj}
\ee
Furthermore, they satisfy
\bez
     (p_1-q_1)(u_1'-u_1) +(p_2-q_2)(u_2'-u_2) = 0 \, .   
\eez
The equations (\ref{2soliton_polarization_relations}) imply
\bez
    u'_1 = \alpha_{12} \, B_2(p_1,u_2) \, u_1 \, \tilde{A}_2(q_1,u_2) \, , \qquad
    u'_2 = \alpha_{12} \, A_1(q_2,u_1) \, u_2 \, \tilde{B}_1(p_2,u_1) \, ,
\eez
provided that $\{p_1,p_2\} \cap \{q_1,q_2\} = \emptyset$.
Comparison with (\ref{X_1,2,out}) shows that $(u_1,u_2) \mapsto (u_1',u_2')$ provides us with 
a realization of the Yang-Baxter map $\mathcal{R}$.\footnote{The 
definition of $\alpha_{12}$ in Section~\ref{subsec:2sol} is in accordance with the expression 
in (\ref{alpha_12}). }
 
\begin{remark}
\label{rem:vector2DTL_K}
If $n=1$, we are dealing with an $m$-component vector 2DTL equation. Then $\eta_i$ and $K \xi_i$, $i=1,\ldots,N$, 
are scalars. In this case the Yang-Baxter map is linear,
\bez
      (u_i', u_j') = (u_i , u_j) R(i,j) \, , 
\eez
with $R(i,j)$ defined in (\ref{R-matrix}). It solves the Yang-Baxter equation 
$R_{\bsy{12}}(1,2) \, R_{\bsy{13}}(1,3) \, R_{\bsy{23}}(2,3) 
   = R_{\bsy{23}}(2,3) \,  R_{\bsy{13}}(1,3) \, R_{\bsy{12}}(1,2)$ 
on a threefold direct sum. Also see \cite{DMH19LMP} for the case of the vector KP$_K$ equation. Introducing
\bez
   \tilde{R}(i,j) := \left(\begin{array}{cc} 1 & 0 \\
                                     0 & -\frac{p_j-q_i}{p_i-q_j} \end{array}\right) \, , \qquad
   S(i,j) := \left(\begin{array}{cc} \frac{p_i-q_i}{p_j-q_j} & -1 \\
                                    1 &  1 \end{array}\right) \, , 
\eez
we have $R(i,j) = S(i,j) \tilde{R}(i,j) S(i,j)^{-1}$, and $\tilde{R}(i,j)$ also satisfies the Yang-Baxter equation. We further note 
that $RA(i,j) := A(i,j)^{-1} R(i,j) A(i,j)$ with $A(i,j)=\mathrm{diag}(a_{ij},b_{ij})$ solves the Yang-Baxter equation
if the constants $a_{ij},b_{ij}$ satisfy the relations $ a_{ik} b_{ij} b_{jk} = a_{ij} a_{jk} b_{ik}$ for pairwise 
distinct $i,j,k$. If we drop the normalization condition (\ref{normalization}) in the computation of the Yang-Baxter map, 
the resulting $R$-matrix turns out to be of the latter form. The same holds if we consider $v = \varphi^+ - \varphi$ 
instead of $\hat{u}$. 
\end{remark}

\subsection{Yang-Baxter and non-Yang-Baxter maps at work}
\label{subsec:other_maps}
In Section~\ref{subsec:2sol} we looked at the relation between the polarizations associated with the boundary segments of 
dominant phase regions of a pure 2-soliton solution, selecting a ``propagation direction''. But there is actually no preferred direction. 
It is therefore more adequate to regard (\ref{2soliton_polarization_relations}) just as determining a relation 
between four polarizations, and there are several ways in which this determines a map from two ``incoming'' to two  
``outgoing'' polarizations. 

\begin{figure}[h] 
\begin{center}
\begin{minipage}{0.04\linewidth}
\vspace*{.15cm}
\bez
 \begin{array}{cc} k & \\ \uparrow & \\ & \rightarrow z \end{array}                
\eez
\end{minipage}
\includegraphics[scale=.25]{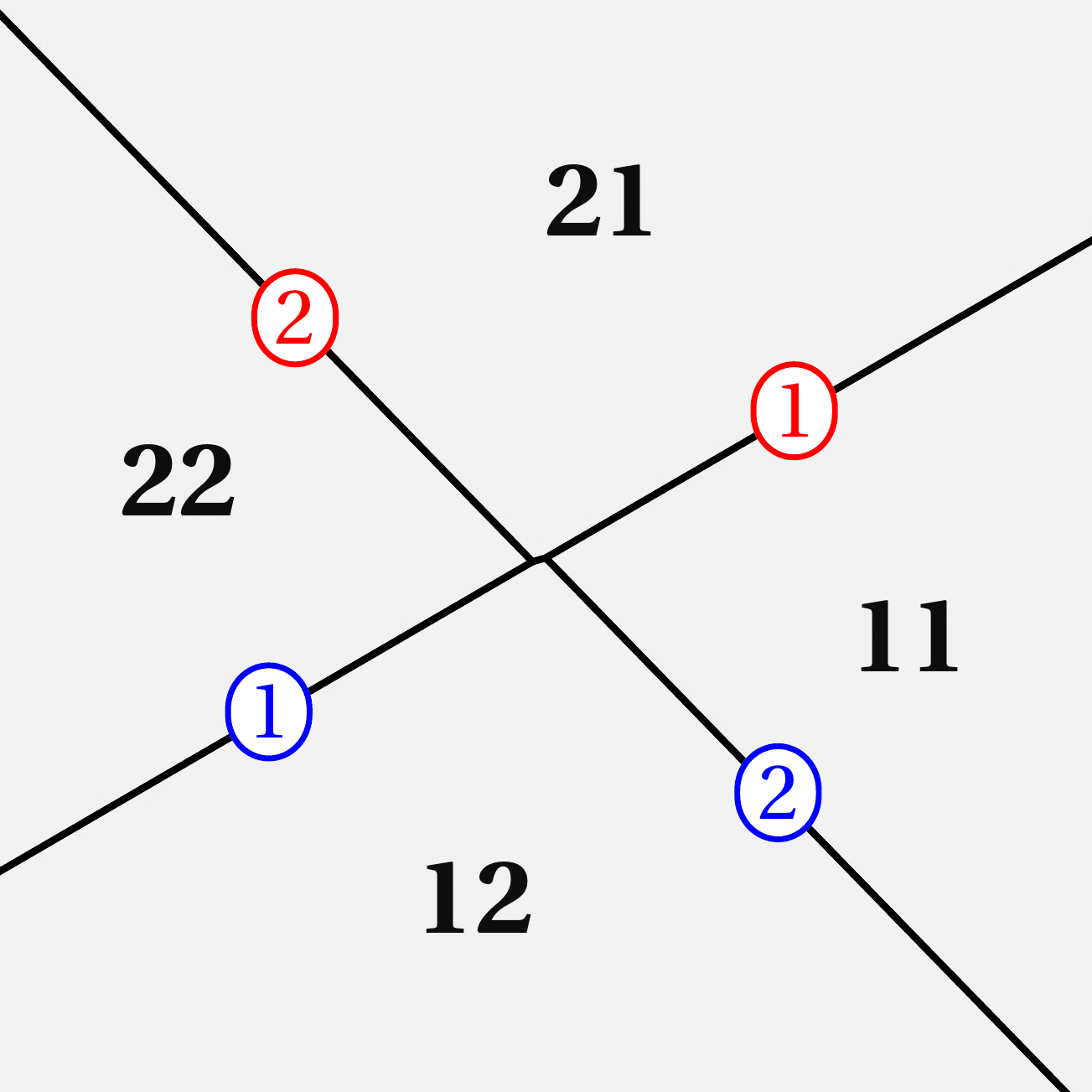} 
\hspace{.5cm}
\includegraphics[scale=.25]{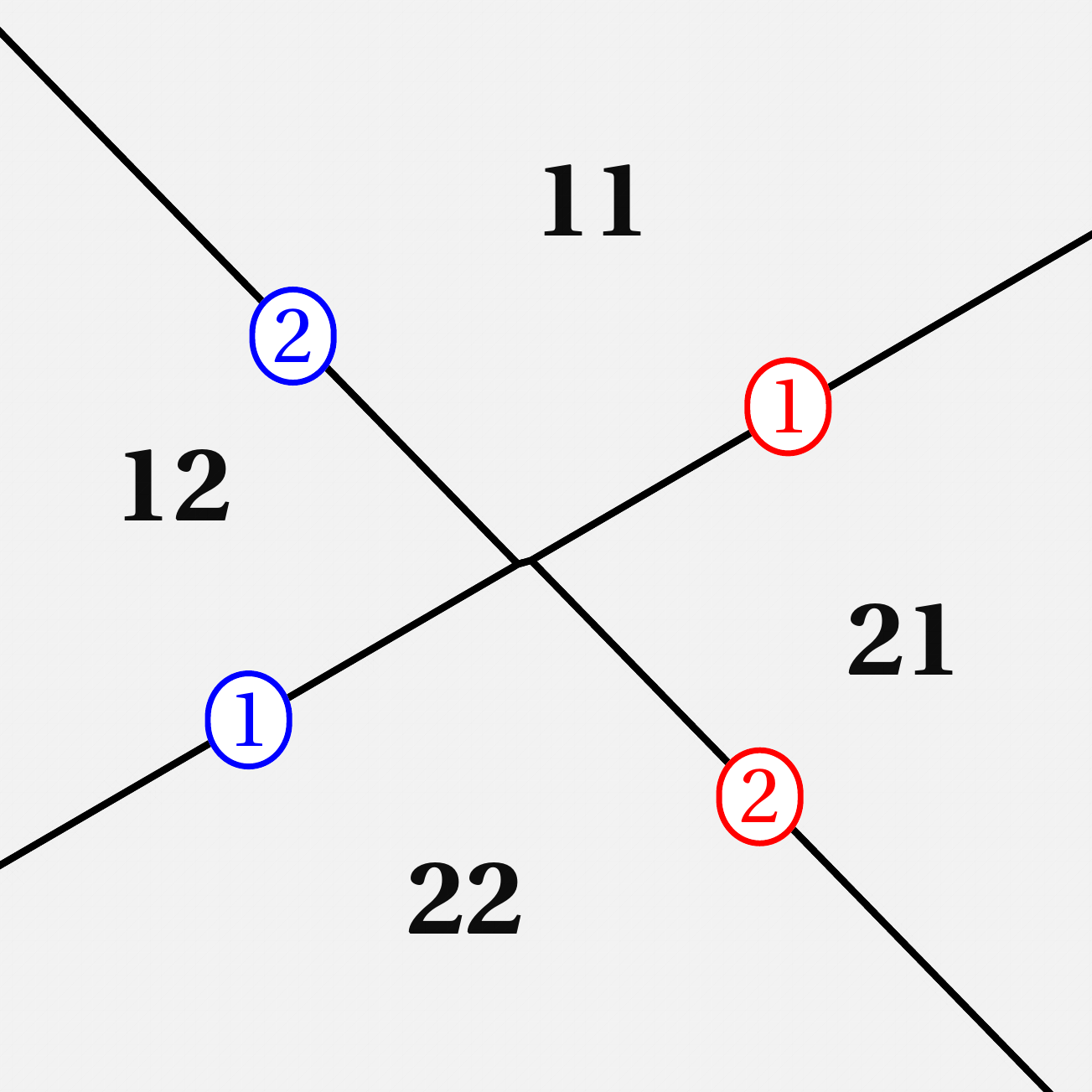}
\hspace{.5cm}
\includegraphics[scale=.25]{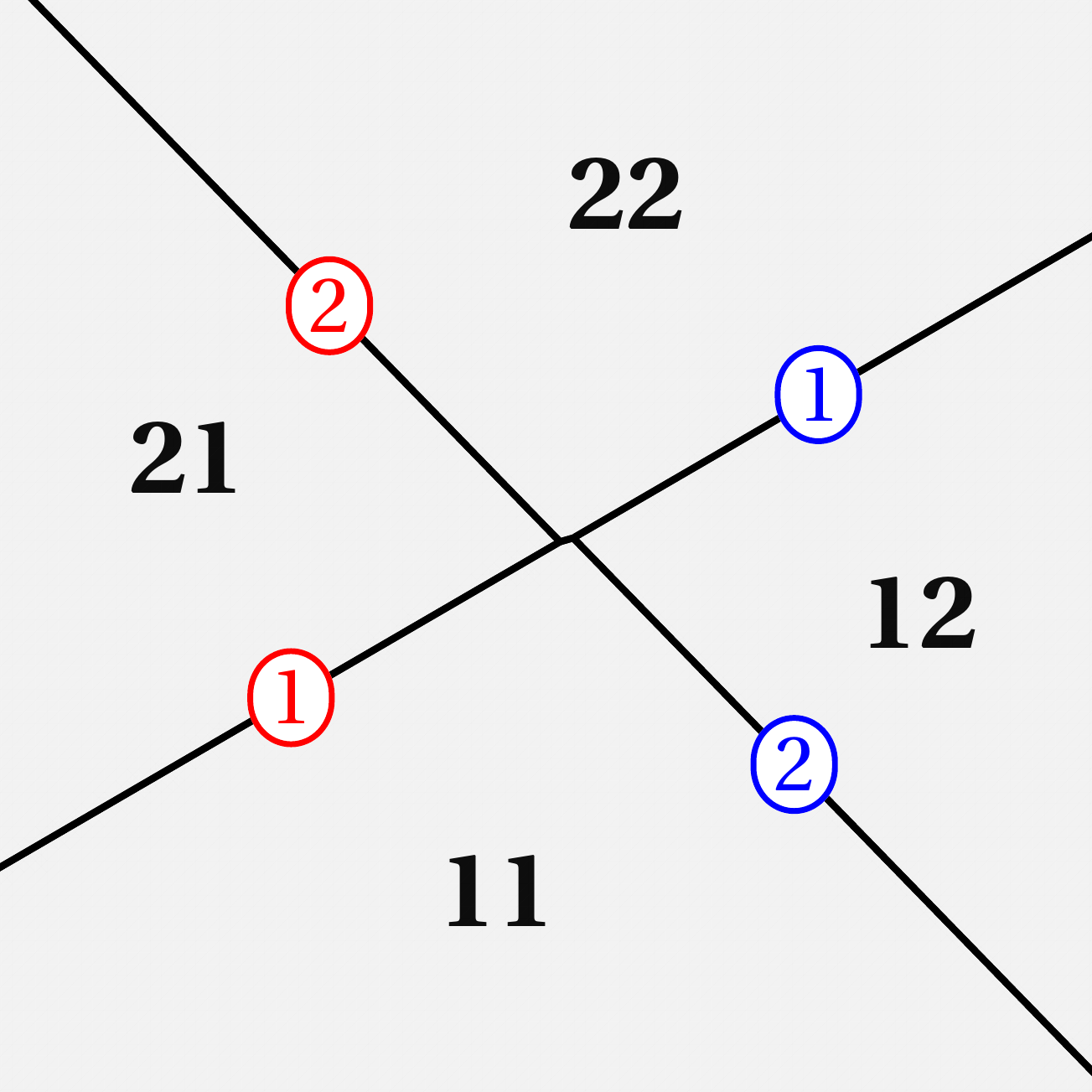} 
\parbox{15cm}{
\caption{Dominating phase regions and tropical limit graph of a pure 2-soliton solution of the vector p2DTL$_K$ equation, 
at $t=x + y=0$, for different values of the parameters $p_i$ and $q_i$, $i=1,2$. See Example~\ref{ex:vec2DToda_2s_other}. 
Viewing the first graph as a process from top to bottom, i.e., mapping the polarizations along the upper two legs to 
those along the lower two, we recover the Yang-Baxter map $\mathcal{R}$. For the second (third) graph,  $\mathcal{R}$
is obtained by viewing it as a process from right (left) to left (right). In the colored figure, this means mapping the 
polarizations along the  red-labeled to those along the blue-labeled soliton lines.  
\label{fig:vec2DToda_2s_other} }
}
\end{center}
\end{figure}

\begin{example}
\label{ex:vec2DToda_2s_other}
We keep the choices for $K$, $\xi_i$ and $\eta_i$, made in Example~\ref{ex:vec2DToda_2s_21}. Then
$\kappa_{ij} =1$ and $\alpha_{12} = (q_2-q_1)/(q_2-p_1)$, so that the regularity conditions (\ref{reg_cond}) 
read $p_2,q_2 > 0$, $p_1 (q_2-q_1)/(q_2-p_1) >0$ and $p_1/q_1 >0$. Furthermore, without loss of generality, 
we can choose the parameters such that, for large enough negative value of $k$, soliton 1 appears in 
$z$-direction to the left of soliton 2.  
\begin{enumerate}
\item $q_1 < p_1 < p_2 < q_2$ or $q_1 < p_2 < q_2 < p_1$. In this case the phase constellation is 
that shown in Fig.~\ref{fig:vec2DToda_2s_21}. Regarding it as a process \emph{in $k$-direction},
the map of polarizations is the Yang-Baxter map $\mathcal{R}$.
\item $p_1 < q_1 < q_2 < p_2$ or $p_1 < q_2 < p_2 < q_1$. The phase constellation is that shown in the first 
plot of Fig.~\ref{fig:vec2DToda_2s_other} (which is generated with $p_1 = 1/4$, $q_1 = 3/2$, $q_2 = 2$ 
and $p_2 = 3$). The map of polarizations, in $k$-direction, leads us to the inverse of the Yang-Baxter 
map $\mathcal{R}$, which is also a Yang-Baxter map. 
\item $q_1 < p_1 < q_2 < p_2$ or $q_1 < q_2 < p_2 < p_1$. The phase constellation is that shown in the 
second plot of Fig.~\ref{fig:vec2DToda_2s_other} (which is generated with $q_1 = 1/4$, $p_1 = 3/2$, 
$q_2 = 2$ and $p_2 = 3$).
Instead of (\ref{choice_of_initial/final_polarizations}), here we define initial and final polarizations as
\bez
    u_1 := \hat{u}_{12,22} \, , \quad
    u_2 := \hat{u}_{21,22} \, , \quad
    u_1' := \hat{u}_{11,21} \, , \quad
    u_2' := \hat{u}_{11,12} \, ,
\eez
and obtain the map $\mathcal{T}$, given by (\ref{X_in->X_out_T}), which is \emph{not} a Yang-Baxter map.
\item $p_1 < q_1 < p_2 < q_2$ or $p_1 < p_2 < q_2 < q_1$. The phase constellation is that shown 
in the third plot of Fig.~\ref{fig:vec2DToda_2s_other} 
(which is generated with $p_1 = 1/4$, $q_1 = 3/2$, $p_2 = 2$ and $q_2 = 3$).
In this case, the map of polarizations, from bottom to top, is the inverse of $\mathcal{T}$. 
\end{enumerate}
For any one of the plots in Figs.~\ref{fig:vec2DToda_2s_21} or \ref{fig:vec2DToda_2s_other}, we 
obtain realizations of \emph{all} the maps by choosing different directions. 
\end{example}

The naive expectation that 2-soliton scattering yields a Yang-Baxter map is therefore wrong. 
But we have to keep in mind that the Yang-Baxter property is a statement about \emph{three} solitons. 
A 3-soliton solution involves three 2-particle interactions. In the tropical limit, this means that 
a composition of three of the above maps carries the polarizations along the tropical limit support.
We will see in the next subsection that the Yang-Baxter equation is indeed only required to hold 
for a mixture of the maps, but not for each map separately. 
 
\begin{remark} 
As seen above, the constellation of dominant phase regions depends on the concrete values of the parameters.
If, for a certain constellation, we select a direction and obtain a Yang-Baxter map, then the latter has the 
Yang-Baxter property for all choices of parameters. From the above, we conclude that the map, relating the (relativ to our 
choice of direction) incoming and outgoing polarizations, is a Yang-Baxter map if and only if the phase 
region that lies between the two incoming solitons is that of $\vartheta_{12}$ or $\vartheta_{21}$. 
\end{remark}

\begin{figure}[h] 
\begin{center}
\begin{minipage}{0.04\linewidth}
\vspace*{.15cm}
\bez
 \begin{array}{cc} t & \\ \uparrow & \\ & \rightarrow k \end{array}                
\eez
\end{minipage}
\includegraphics[scale=.2]{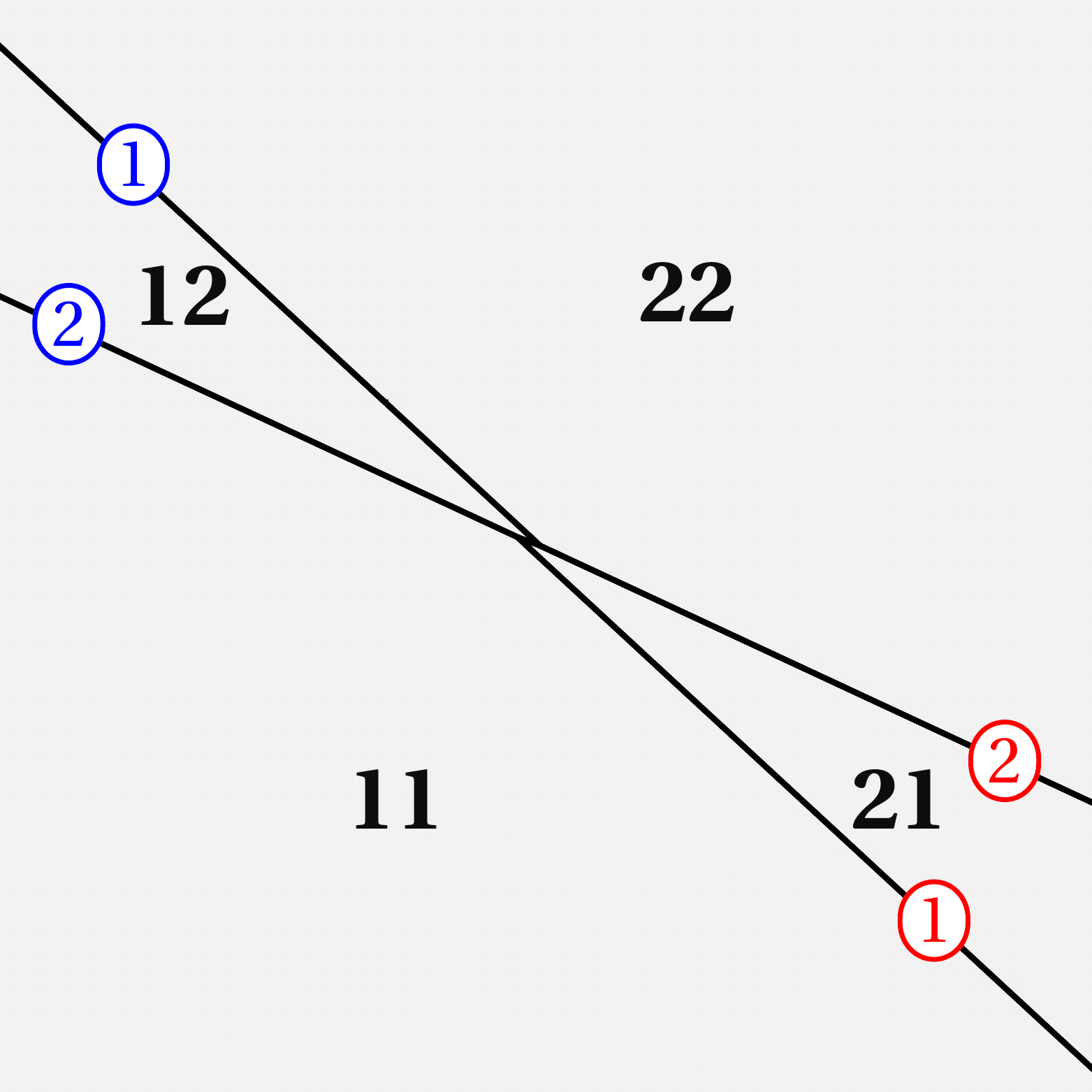} 
\hspace{.5cm}
\includegraphics[scale=.2]{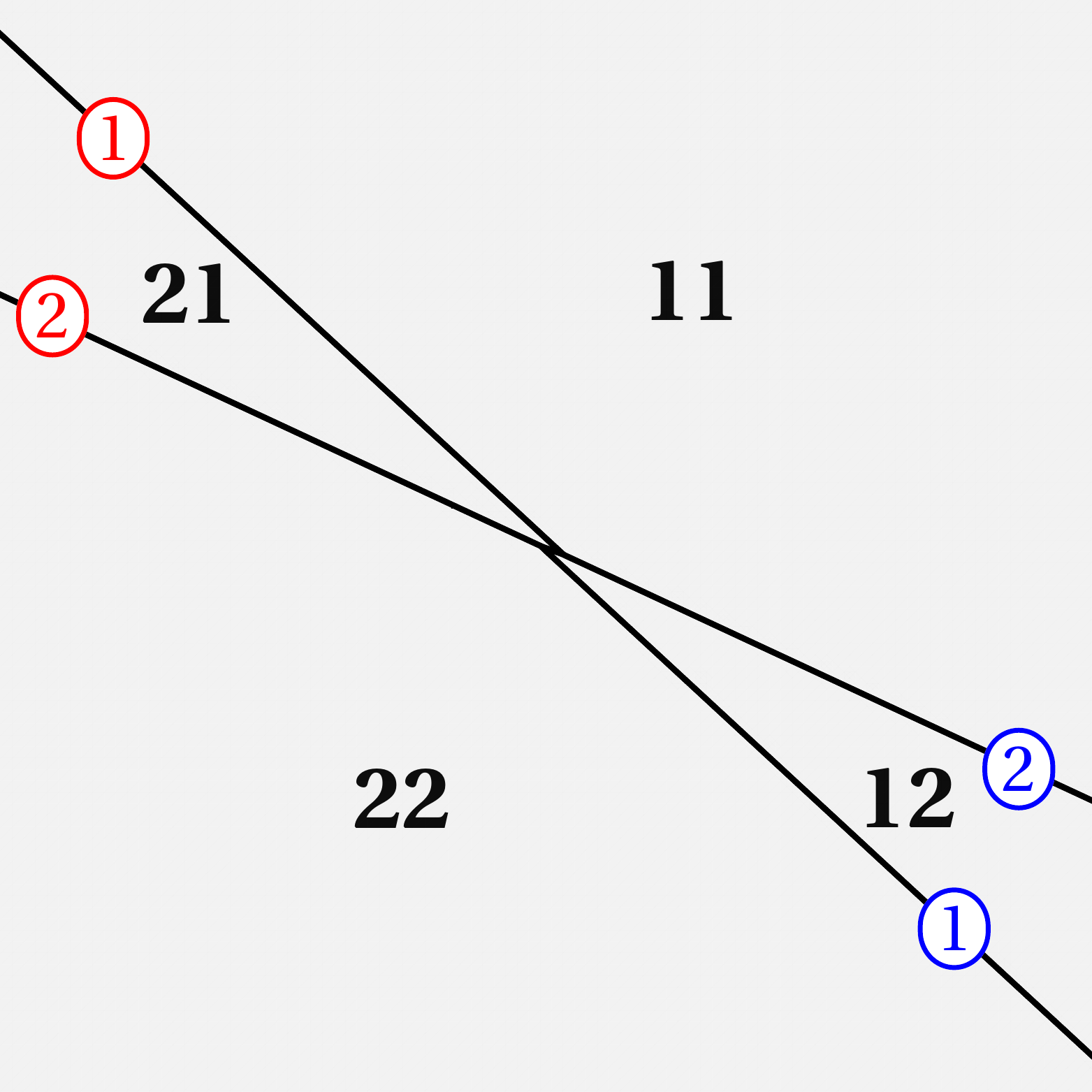} 
\hspace{.5cm}
\includegraphics[scale=.2]{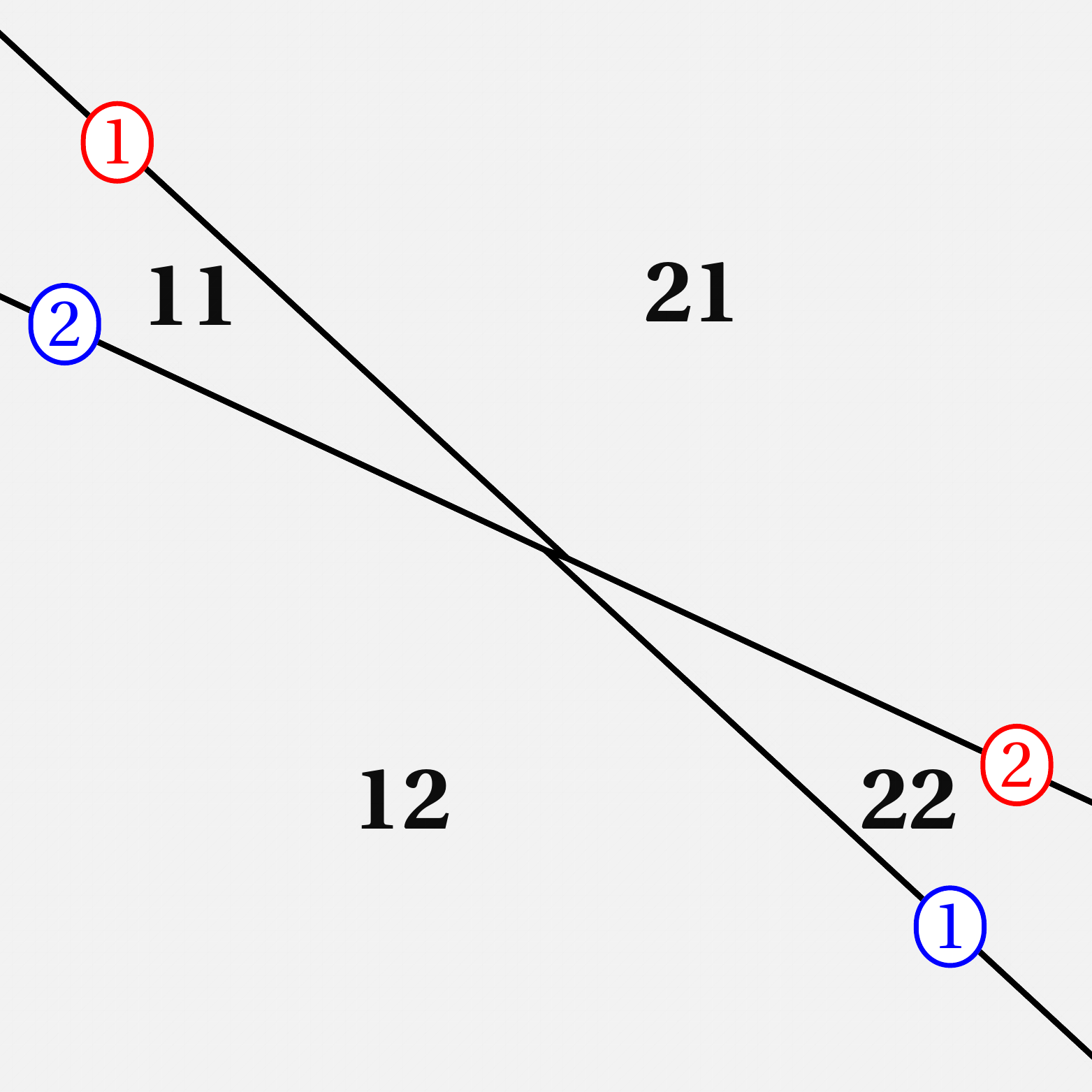} 
\hspace{.5cm}
\includegraphics[scale=.2]{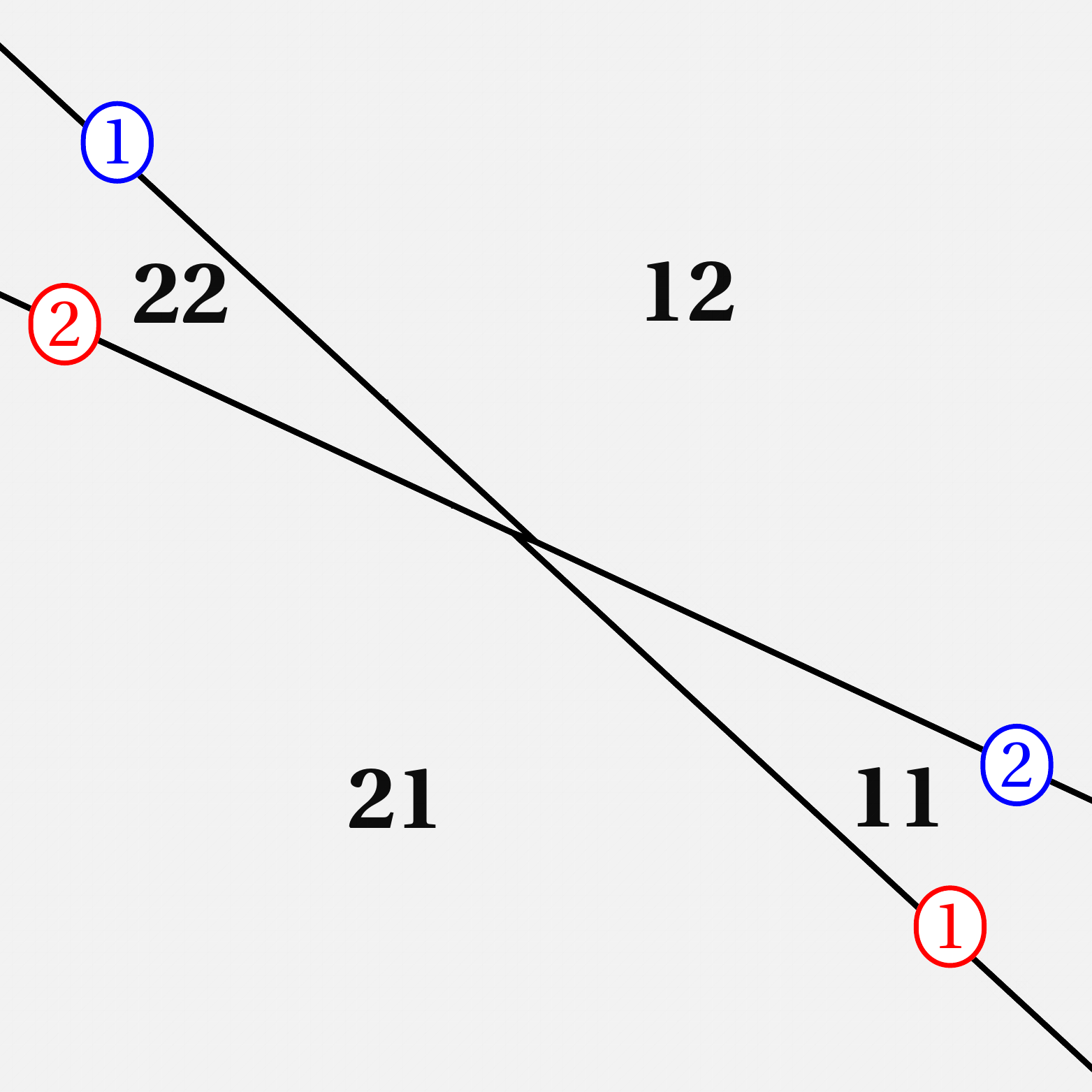}
\parbox{15cm}{
\caption{Dominating phase regions and tropical limit graph of pure 2-soliton solution of the vector p1DTL$_K$ equation, 
with the data given in Example~\ref{ex:1dToda_2s}. The discrete coordinate $k$ is smoothed out. 
\label{fig:1dToda_2s} }
}
\end{center}
\end{figure}

\begin{example}
\label{ex:1dToda_2s}
Imposing the reduction condition $q_i = p_i^{-1}$ on the solutions of the p2DTL$_K$ equation, determines solutions of 
the p1DTL$_K$ equation (\ref{p1DTL_K}). The corresponding Yang-Baxter map is obtained from (\ref{KP_YB_map}) 
by applying this reduction condition. Let us choose again
\bez
  K = \left(\begin{array}{cc} 1 & 1 \end{array} \right) \, , \;
    \xi_1 = \left(\begin{array}{c} 1 \\ 0 \end{array} \right) \, , \;
    \xi_2 = \left(\begin{array}{c} 0 \\ 1 \end{array} \right) \, , \;
    \eta_1 = \eta_2 = 1 \, .
\eez
The regularity condition (\ref{reg_cond}) is then $(p_2-p_1)/(p_2-1) > 0$.
\begin{enumerate}
\item $0 < p_2 <p_1 < 1$. The map of incoming to outgoing polarizations, in $t$-direction, is the reduced Yang-Baxter map.
The tropical limit graph, for parameters $p_1 = 1/2$ and $p_2 = 1/10$ is shown in the first plot
of Fig.~\ref{fig:1dToda_2s}. 
\item $1<p_1<p_2$. In this case, the polarization map is the inverse of the (reduced) Yang-Baxter map.
For $p_1 = 2$ and $p_2 = 10$, we obtain the second plot of Fig.~\ref{fig:1dToda_2s}.
\item $0 < p_1 < 1 < p_2$ and $p_1<p_2^{-1}$. In this case, the (reduced) map $\mathcal{T}$ is at work. For $p_1=1/2$ 
and $p_2=10$, the third plot of Fig.~\ref{fig:1dToda_2s} is obtained.
\item $0 < p_2 < 1 < p_1$ and $p_1^{-1}<p_2$. Here $\mathcal{T}^{-1}$ applies. For $p_1=2$ 
and $p_2=1/10$, we obtain the fourth plot of Fig.~\ref{fig:1dToda_2s}.
\end{enumerate}
\end{example}

\subsection{Pure 3-soliton solution}
\label{subsec:3sol}
For $N=3$ we find 
\bez
   \tau &=& \kappa_{11} \kappa_{22} \kappa_{33} \, \beta \, e^{\vartheta_{111}} 
          + \kappa_{11} \kappa_{22} \, \alpha_{12} \, e^{\vartheta_{112}}
          + \kappa_{11} \kappa_{33} \, \alpha_{13} \, e^{\vartheta_{121}}   \nonumber \\
        &&  + \kappa_{22} \kappa_{33} \, \alpha_{23} \, e^{\vartheta_{211}}
          + \kappa_{11} \, e^{\vartheta_{122}}
          + \kappa_{22} \, e^{\vartheta_{212}}
          + \kappa_{33} \, e^{\vartheta_{221}}
          + e^{\vartheta_{222}} \, , 
\eez
where 
\bez
    \alpha_{ij} &=& 1- \frac{(p_i-q_i)(p_j-q_j)}{(p_i-q_j)(p_j-q_i)} \, 
                   \frac{\kappa_{ij} \kappa_{ji}}{\kappa_{ii} \kappa_{jj}} \, , \\
    \beta &=& -2 + \alpha_{12} + \alpha_{13} + \alpha_{23} 
       + \frac{(p_1-q_1)(p_2-q_2)(p_3-q_3)}{(p_1-q_3)(p_2-q_1)(p_3-q_2)} \, 
                   \frac{\kappa_{12} \kappa_{23} \kappa_{31}}{\kappa_{11} \kappa_{22} \kappa_{33}} \\
     && + \frac{(p_1-q_1)(p_2-q_2)(p_3-q_3)}{(p_1-q_2)(p_2-q_3)(p_3-q_1)} \, 
                   \frac{\kappa_{13} \kappa_{21} \kappa_{32}}{\kappa_{11} \kappa_{22} \kappa_{33}} \, ,
\eez
and 
\bez
    && \vartheta_{111} = \vartheta(p_1)^+ + \vartheta(p_2)^+ + \vartheta(p_3)^+ \, , \quad
      \vartheta_{112} = \vartheta(p_1)^+ + \vartheta(p_2)^+ + \vartheta(q_3) \, , \\
    && \vartheta_{121} = \vartheta(p_1)^+ + \vartheta(q_2) + \vartheta(p_3)^+ \, , \quad
       \vartheta_{211} = \vartheta(q_1) + \vartheta(p_2)^+ + \vartheta(p_3)^+ \, , \\ 
    && \vartheta_{122} = \vartheta(p_1)^+ + \vartheta(q_2) + \vartheta(q_3) \, , \quad
       \vartheta_{212} = \vartheta(q_1) + \vartheta(p_2)^+ + \vartheta(q_3) \, , \\
    && \vartheta_{221} = \vartheta(q_1) + \vartheta(q_2) + \vartheta(p_3)^+ \, , \quad
       \vartheta_{222} = \vartheta(q_1) + \vartheta(q_2) + \vartheta(q_3) \, .
\eez
Again, we set $\kappa_{ij} = \eta_i K \xi_j$. Furthermore, we have
\bez
   F &=& \left(p_1-q_1\right) \left(p_2-q_2\right) \left(p_3-q_3\right) \Big( 
     \frac{\alpha_{12} \kappa_{11} \kappa_{22}}{\left(p_1-q_1\right) \left(p_2-q_2\right)} \xi _3 \otimes \eta _3
      + \frac{\alpha_{13} \kappa_{11} \kappa_{33}}{\left(p_1-q_1\right) \left(p_3-q_3\right)} \xi _2\otimes \eta _2 \\
     && + \frac{\alpha_{23} \kappa_{22} \kappa_{33}}{\left(p_2-q_2\right) \left(p_3-q_3\right)} \xi _1\otimes \eta _1
      - \frac{\alpha_{123} \, \kappa_{11} \kappa_{23}}{\left(p_1-q_1\right) \left(p_3-q_2\right)} \xi _2\otimes \eta _3
      - \frac{\alpha_{132} \, \kappa_{11} \kappa_{32}}{\left(p_1-q_1\right) \left(p_2-q_3\right)} \xi _3\otimes \eta _2 \\
     && - \frac{\alpha_{213} \, \kappa_{13} \kappa_{22}}{\left(p_3-q_1\right) \left(p_2-q_2\right)} \xi _1\otimes \eta _3
      - \frac{\alpha_{231} \, \kappa_{22} \kappa_{31}}{\left(p_2-q_2\right) \left(p_1-q_3\right)} \xi _3\otimes \eta _1
      - \frac{\alpha_{312} \, \kappa_{12} \kappa_{33}}{\left(p_2-q_1\right) \left(p_3-q_3\right)} \xi _1\otimes \eta _2 \\
     && - \frac{\alpha_{321} \, \kappa_{21} \kappa_{33}}{\left(p_1-q_2\right) \left(p_3-q_3\right)} \xi _2\otimes \eta _1 \Big)
      \, e^{\vartheta_{111}}  \\
     && + \left(p_1-q_1\right) \left(p_2-q_2\right) \Big( \frac{\kappa_{11}}{p_1-q_1} \xi _2\otimes \eta _2
        - \frac{\kappa_{12}}{p_2-q_1} \xi _1\otimes \eta _2 - \frac{\kappa_{21}}{p_1-q_2} \xi _2\otimes \eta _1 
        + \frac{\kappa_{22}}{p_2-q_2} \xi _1\otimes \eta _1 \Big) e^{\vartheta_{112}} \\
     && + \left(p_1-q_1\right) \left(p_3-q_3\right) \Big( \frac{\kappa_{11}}{p_1-q_1} \xi _3\otimes \eta _3
         - \frac{\kappa_{13}}{p_3-q_1} \xi _1\otimes \eta _3 - \frac{\kappa_{31}}{p_1-q_3} \xi _3\otimes \eta _1
         + \frac{\kappa_{33}}{p_3-q_3} \xi _1\otimes \eta _1 \Big) e^{\vartheta_{121}} \\
     && + \left(p_2-q_2\right) \left(p_3-q_3\right) \Big( \frac{\kappa_{22}}{p_2-q_2} \xi _3\otimes \eta _3 
          - \frac{\kappa_{23}}{p_3-q_2} \xi _2\otimes \eta _3 -\frac{\kappa_{32}}{p_2-q_3} \xi _3\otimes \eta _2 
          + \frac{\kappa_{33}}{p_3-q_3} \xi _2\otimes \eta _2 \Big) e^{\vartheta_{211}} \\
     && + \left(p_1-q_1\right)  \xi _1\otimes \eta _1 \, e^{\vartheta_{122}}
        + \left(p_2-q_2\right) \xi _2\otimes \eta _2 \, e^{\vartheta_{212}} 
        + \left(p_3-q_3\right) \xi _3\otimes \eta _3 \, e^{\vartheta_{221}} \, ,  
\eez
where
\bez
    \alpha_{kij} = 1 
     - \frac{(p_j-q_i)(p_k-q_k) \kappa_{ik} \, \kappa_{kj}}{(p_k-q_i)(p_j-q_k) \kappa_{ij} \, \kappa_{kk}} \, .
\eez
Note that $\alpha_{ij} = \alpha_{ijj}$. The above expressions coincide with those obtained in the KP$_K$ case \cite{DMH19LMP}. 
The only difference is in the phases entering the exponentials. Here we wrote $F$ in a more compact form.

\begin{example}
\label{ex:3s_1}
We choose
\bez
  &&  K = \left(\begin{array}{cc} 1 & 1 \\  1 & 1 \\ 1 & 2 \end{array} \right) \, , \quad
    \xi_1 = \left(\begin{array}{c} 1 \\ 0 \end{array} \right) \, , \quad
    \xi_2 = \left(\begin{array}{c} 0 \\ 1 \end{array} \right) \, , \quad
    \xi_3 = \left(\begin{array}{c} 1 \\ 1 \end{array} \right) \, , \\
  &&  \eta_1 = \left(\begin{array}{ccc} 1 & 0 & 0 \end{array}\right) \, , \quad
    \eta_2 = \left(\begin{array}{ccc} 0 & 1 & 0 \end{array}\right) \, , \quad
    \eta_3 = \left(\begin{array}{ccc} 0 & 0 & 1 \end{array}\right) \, , \\
  &&  p_1=1, \, p_2=1/4, \, p_3=3/2 \, , q_1 = 1/2 \, , q_2=6/5 \, , q_3 =3 \, .
\eez
Fig.~\ref{fig:3soliton_1} displays the tropical limit graphs of the corresponding 3-soliton solution at a large 
negative and a large positive value of $t$. 
The respective sequences of interactions correspond to the two sides of the Yang-Baxter equation. Since 
the polarizations do not depend on the variables $z,t,k$, we conclude that, starting with the same initial polarizations, 
in both cases we end up with the same polarizations. This implies that the maps acting along the tropical limit 
graph satisfy a mixed version of the Yang-Baxter equation. 
\begin{figure}[h] 
\begin{center}
\begin{minipage}{0.04\linewidth}
\vspace{-3cm}
\bez
 \begin{array}{cc} k & \\ \uparrow & \\ & \rightarrow z \end{array}                
\eez
\end{minipage}
\hspace{2cm}
\includegraphics[scale=.2]{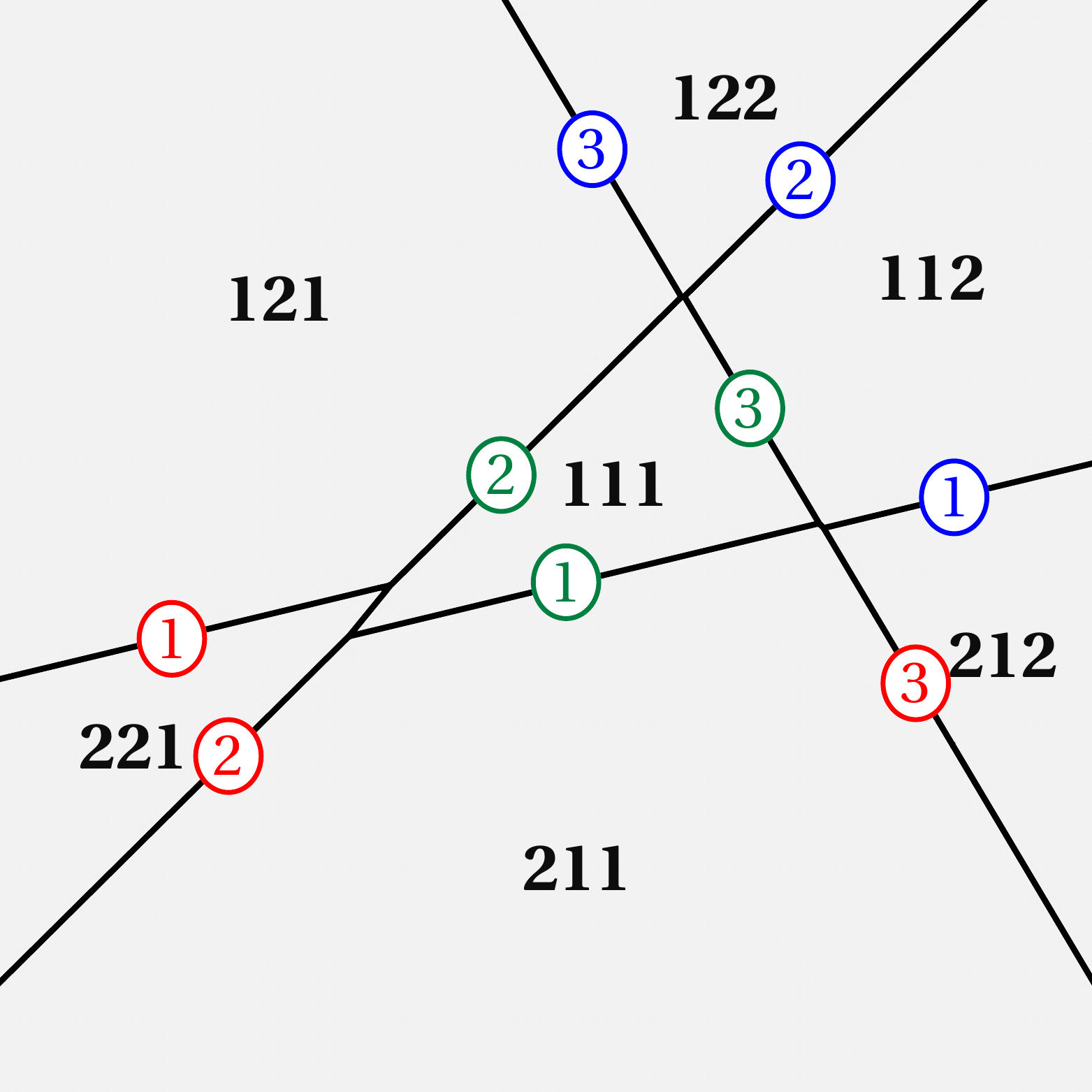}
\hspace{2.cm}
\includegraphics[scale=.2]{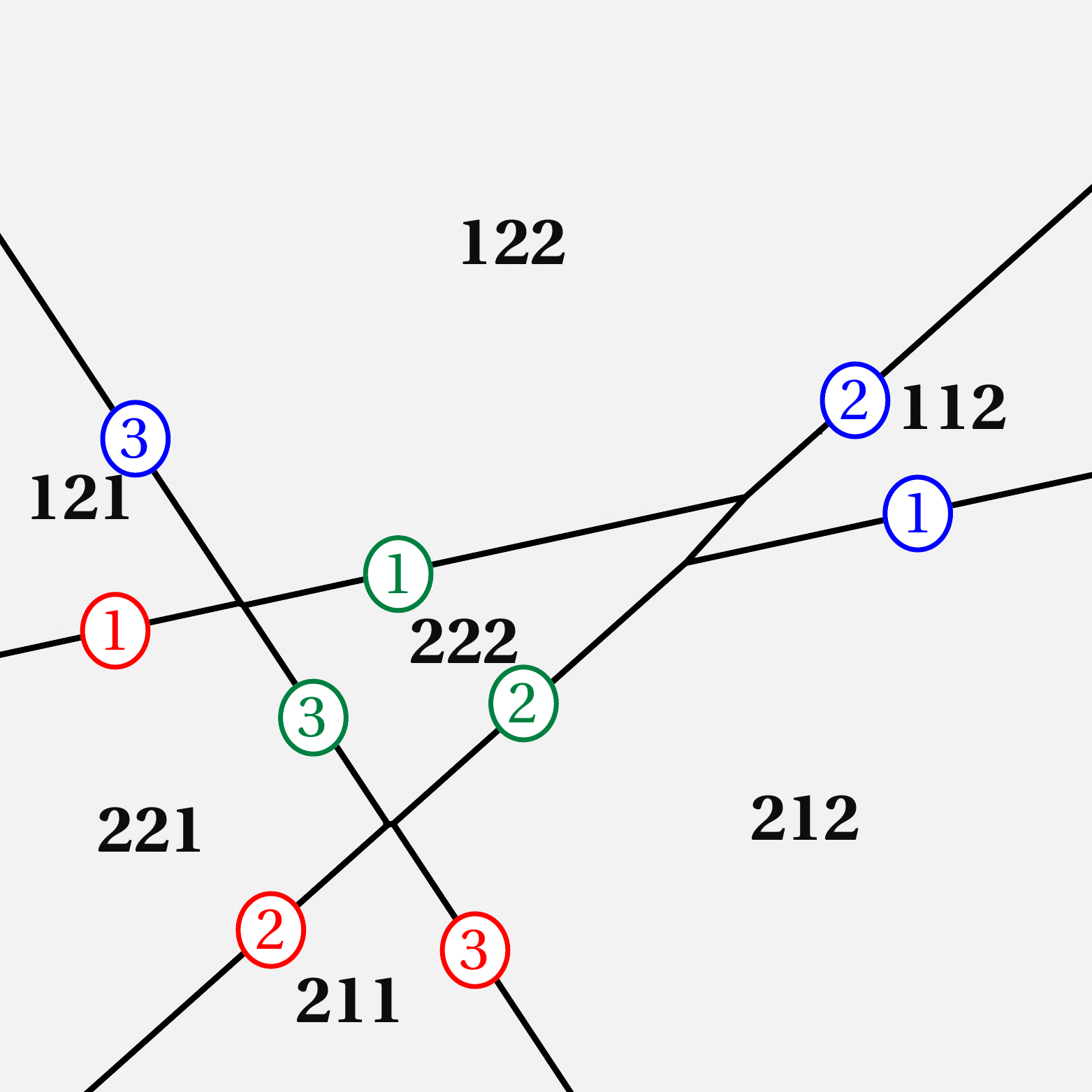}
\parbox{15cm}{
\caption{Dominating phase regions and tropical limit graphs of the pure 3-soliton solution of the p2DTL$_K$ equation 
with the data given in Example~\ref{ex:3s_1}. 
The left graph is obtained for $t=-50$. Regarding it as an evolution of three particles (solitons in the tropical limit) 
in $k$-direction, first particle 1 meets particle 2, then particle 1 and particle 3 meet, and finally particle 2 
interacts with particle 3. This corresponds to one side of the (mixed) Yang-Baxter equation. 
The right graph is obtained for $t=50$ and the sequence of interactions corresponds to the other side of the (mixed)
Yang-Baxter equation. 
\label{fig:3soliton_1} }
}
\end{center}
\end{figure}
\end{example}

The polarizations at a node of the tropical limit graph at $t = t_0$ are completely determined 
if we know the soliton numbers associated with two ``incoming lines'', e.g. soliton 1 with the left 
and soliton 2 with the right line, and the number of the enclosed phase region, for example $222$.
In this configuration, to the left of incoming 
soliton line 1 we have the phase with number $122$,  to the right of incoming 
soliton line 2 the phase with number $212$, and the remaining one has necessarily number $112$. Hence there are 
incoming polarizations $\hat{u}_{122,222},\hat{u}_{212,222}$ and outgoing polarizations $\hat{u}_{112,212},\hat{u}_{112,122}$ 
of solitons $1,2$. These can be computed from the above 3-soliton solution and it can be verified that they are related 
by one of the maps considered in Section~\ref{subsec:other_maps}. This also holds for all possible other nodes.  

In the following, we consider the situation shown in the left plot in Fig.~\ref{fig:3soliton_1}, regarding it as a 
process from bottom to top. Accordingly, we set
\bez
  &&  u_{1,\mathrm{in}} = \hat{u}_{121,221} \, , \quad
    u_{1,\mathrm{m1}} = \hat{u}_{111,211} \, , \quad
    u_{1,\mathrm{out}} = \hat{u}_{111,212} \, , \\
  && u_{2,\mathrm{in}} = \hat{u}_{221,211} \, , \quad
    u_{2,\mathrm{m1}} = \hat{u}_{121,111} \, , \quad
    u_{2,\mathrm{out}} = \hat{u}_{122,112} \, , \\  
  && u_{3,\mathrm{in}} = \hat{u}_{211,212} \, , \quad
    u_{3,\mathrm{m1}} = \hat{u}_{111,112} \, , \quad
    u_{3,\mathrm{out}} = \hat{u}_{121,122} \, ,  
\eez
where a subscript $\mathrm{m1}$ indicates that the line segment appears in the middle of the first plot in 
Fig.~\ref{fig:3soliton_1}. 
For the right plot in Fig.~\ref{fig:3soliton_1}, we only have to replace the polarizations of middle segments by
\bez
   u_{1,\mathrm{m2}} = \hat{u}_{122,222} \, , \quad 
   u_{2,\mathrm{m2}} = \hat{u}_{222,212} \, , \quad
   u_{3,\mathrm{m2}} = \hat{u}_{221,222} \, .
\eez

Now we can check that either the Yang-Baxter map $\mathcal{R}$, the map $\mathcal{T}$, or one of their inverses 
acts at each crossing of the tropical limit graph. 
Proceeding from bottom to top in the left plot of Fig.~\ref{fig:3soliton_1}, the first crossing involves only 
solitons 1 and 2, so we can ignore the last digit of the numbers of the involved phases. The situation is  
that of the 2-soliton interaction sketched in the first drawing of Fig.~\ref{fig:crossings}. Accordingly, 
 $\mathcal{T}$, given by (\ref{X_in->X_out_T}), should map 
$(u_{1,\mathrm{in}}, u_{2,\mathrm{in}})$ to $(u_{1,\mathrm{m1}},u_{2,\mathrm{m1}})$. 
Indeed, this can be varified using the above data. 

\begin{figure}[h]
\begin{center}
\includegraphics[scale=.2]{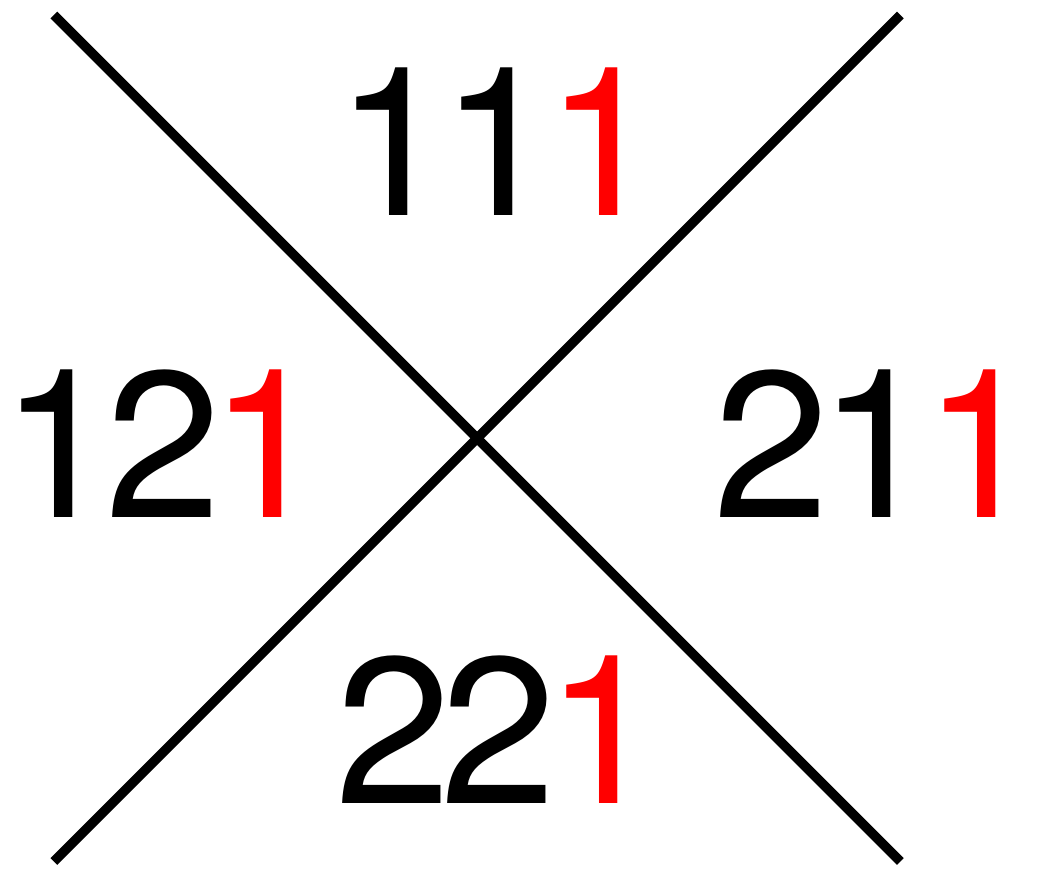}
\hspace{1cm}
\includegraphics[scale=.2]{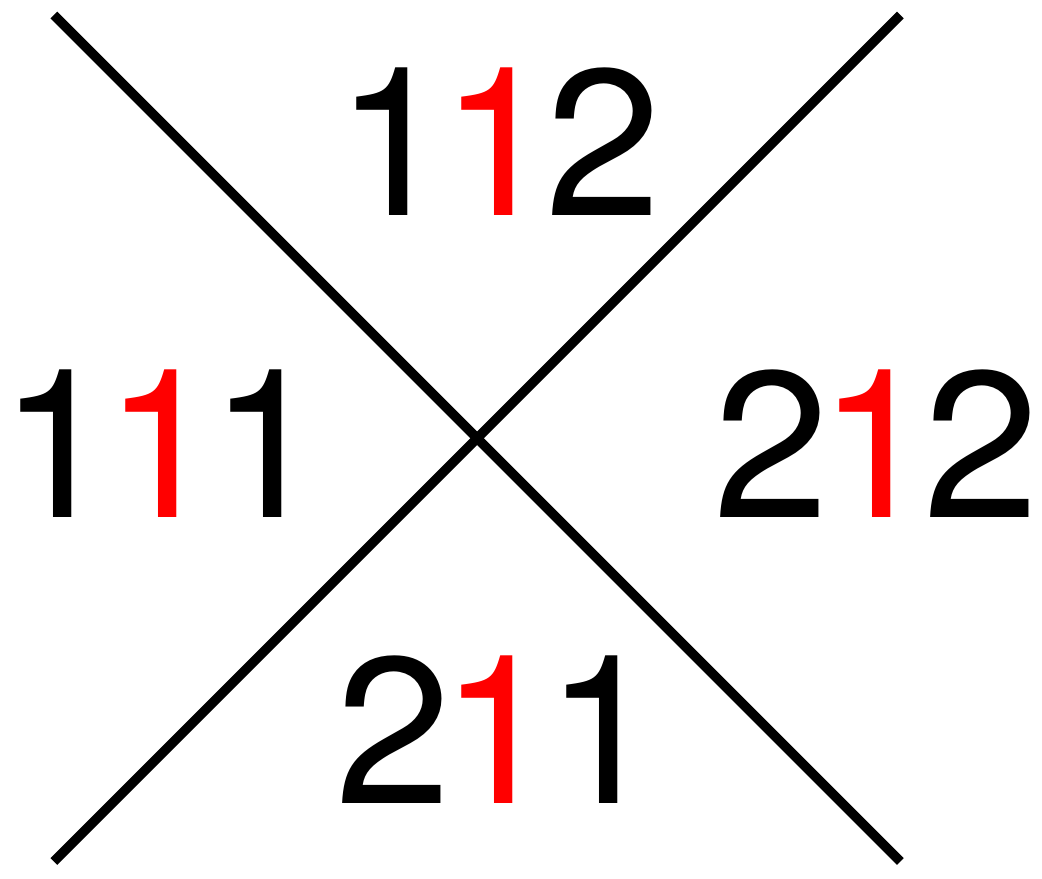}
\hspace{1cm}
\includegraphics[scale=.2]{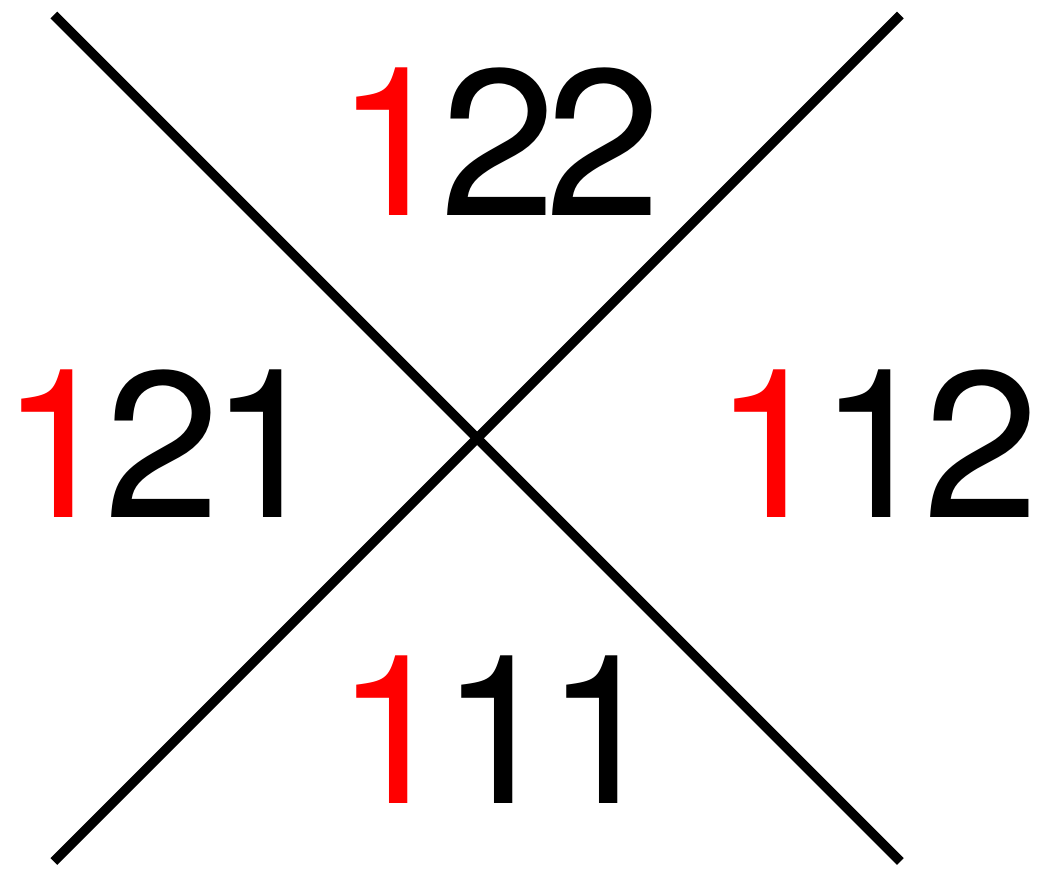}
\parbox{15cm}{
\caption{The dominating phase structures around the three crossings of solitons in the left plot of 
Fig.~\ref{fig:3soliton_1}, proceeding from bottom to top. In the phase numbers we marked (with red color) 
the digit corresponding to the soliton that does \emph{not} take part in the respective interaction. 
Disregarding this digit, the drawing describes the phase constellation of a 2-soliton interaction. 
In this way, for example, the first drawing corresponds to an application of $\mathcal{T}$.
\label{fig:crossings} }
}
\end{center}
\end{figure}

The next crossing, where solitons 1 and 3 interact, is sketched as a 2-soliton interaction in the second drawing 
of Fig.~\ref{fig:crossings}. In this situation, the Yang-Baxter map $\mathcal{R}$  
should yield $(u_{1,\mathrm{m1}},u_{3,\mathrm{in}}) \mapsto (u_{1,\mathrm{out}},u_{3,\mathrm{m1}})$. 
Indeed, we find
\bez
    u_{1,\mathrm{out}} = \frac{B_3(p_1,u_{3,\mathrm{in}}) \, u_{1,\mathrm{m1}} \, \tilde{A}_3(q_1,u_{3,\mathrm{in}})}
        {\mathrm{tr}[B_3(p_1,u_{3,\mathrm{in}}) \, u_{1,\mathrm{m1}} \, \tilde{A}_3(q_1,u_{3,\mathrm{in}}) \, K ]} \, .
\eez
(which can be deduced from (\ref{xi,eta_out_from_m1})).
Similarly,
\bez
    u_{3,\mathrm{m1}} = \frac{A_1(q_3,u_{1,\mathrm{m1}}) \, u_{3,\mathrm{in}} \, \tilde{B}_1(p_3,u_{1,\mathrm{m1}})}
        {\mathrm{tr}[A_1(q_3,u_{1,\mathrm{m1}}) \, u_{3,\mathrm{in}} \, \tilde{B}_1(p_3,u_{1,\mathrm{m1}}) \, K ]} \, .
\eez

Finally, at the last crossing solitons 2 and 3 interact. The situation is sketched in the last drawing in 
Fig.~\ref{fig:crossings}. Accordingly, we expect the map $\mathcal{T}^{-1}$ to be at work and this can 
indeed be verified.
\vspace{.2cm}

\begin{figure}[h]
\begin{center}
\includegraphics[scale=.2]{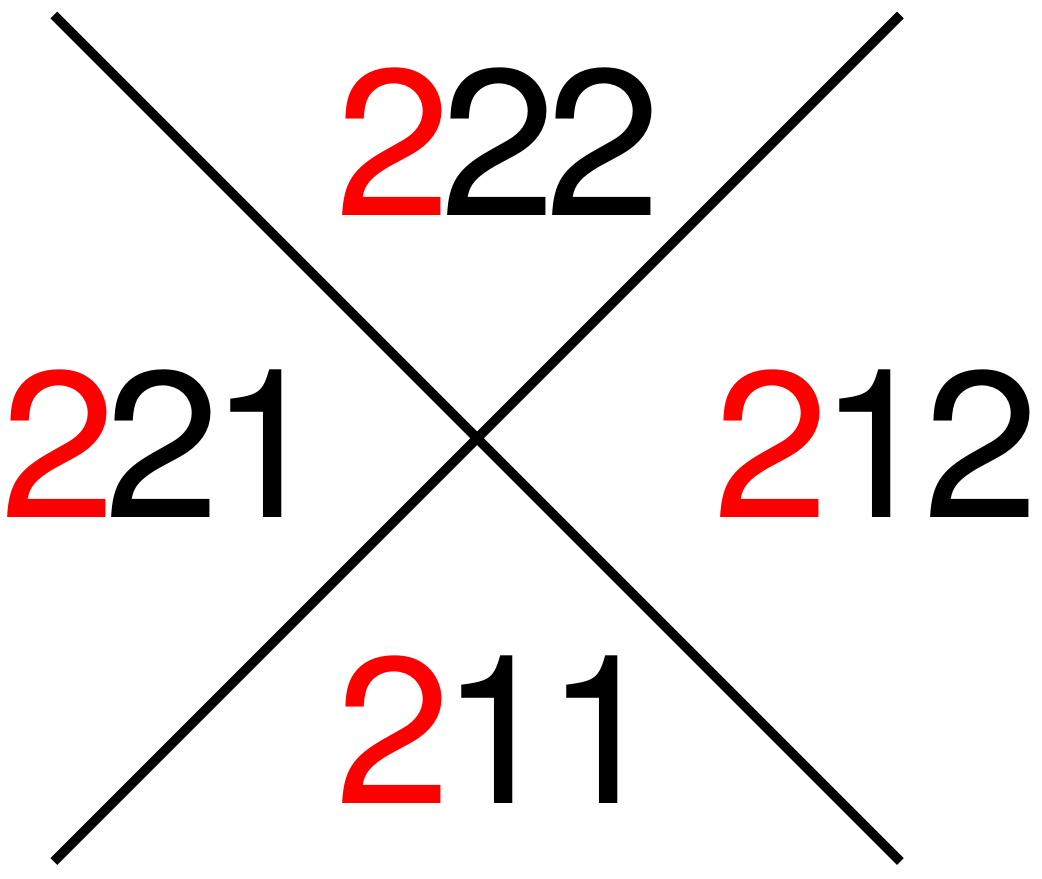}
\hspace{1cm}
\includegraphics[scale=.2]{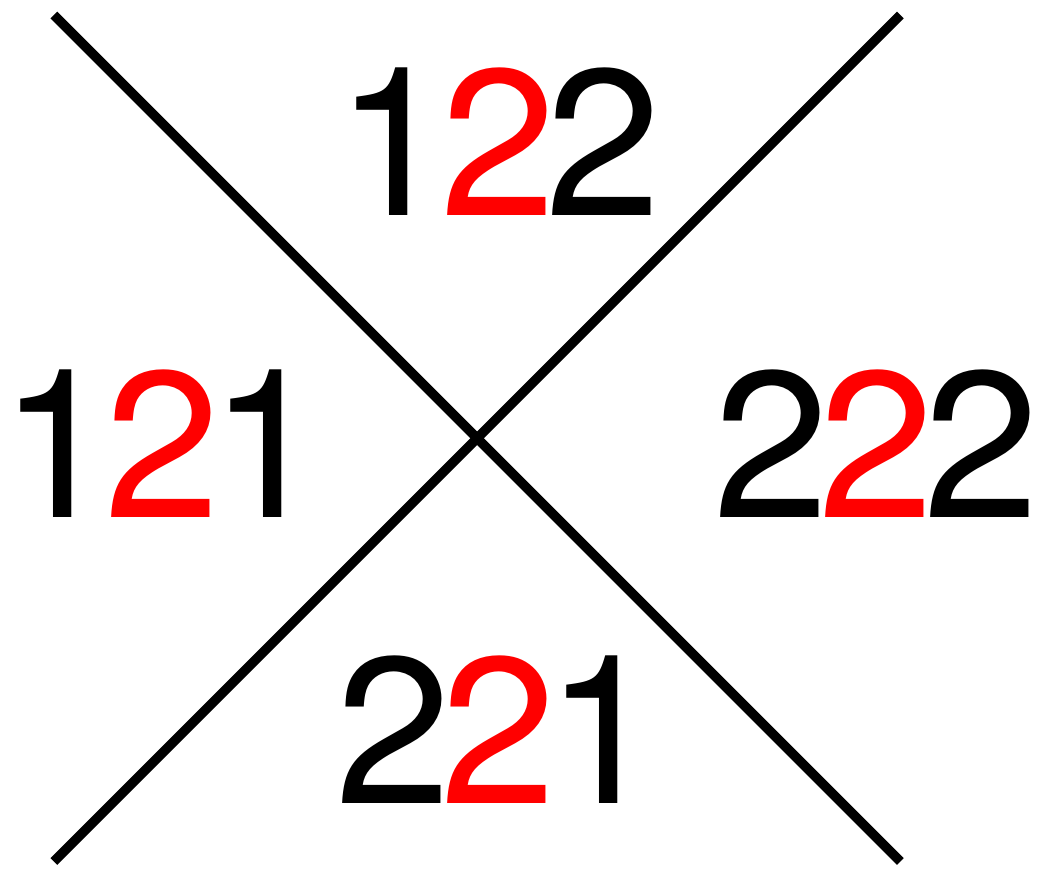}
\hspace{1cm}
\includegraphics[scale=.2]{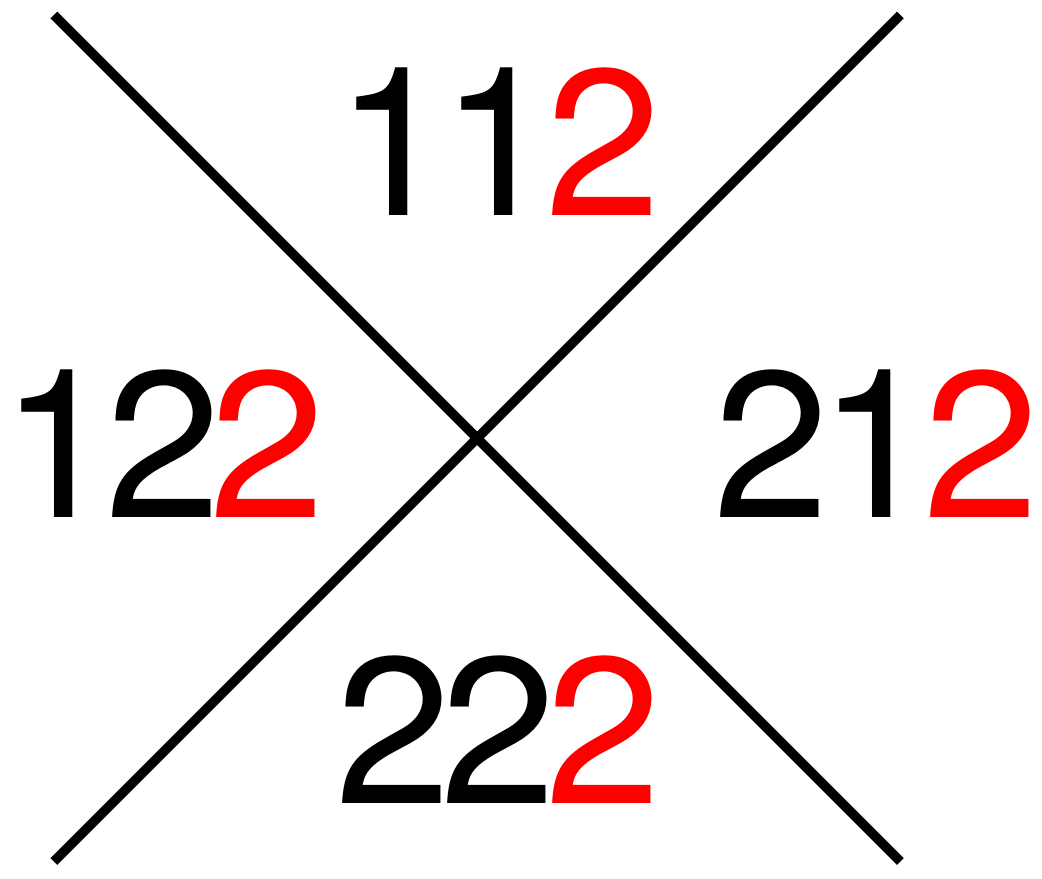}
\parbox{15cm}{
\caption{The dominating phase structures around the three crossings of solitons in the right plot of 
Fig.~\ref{fig:3soliton_1}, proceeding from bottom to top. 
\label{fig:crossings_r} }
}
\end{center}
\end{figure}

The 2-soliton subinteractions appearing in the right plot of Fig.~\ref{fig:3soliton_1} are sketched in 
Fig.~\ref{fig:crossings_r}. For the lowest, corresponding to the first drawing in Fig.~\ref{fig:crossings_r}, 
we can verify (see Appendix~\ref{app:comp_details}) that the polarizations are related by the map $\mathcal{T}^{-1}$, 
given by (\ref{X_in->X_out_alt_inverse}).
The second drawing in Fig.~\ref{fig:crossings_r} corresponds to an application of the Yang-Baxter map 
$\mathcal{R}$, the third to an application of $\mathcal{T}$.

At least for the parameter range, for which the tropical limit graphs at large negative, respectively positive 
time $t$ have the structure shown in Fig.~\ref{fig:3soliton_1}, we can now conclude that (\ref{YB_mixed}) holds.
This is so because the polarizations associated with line segments of the tropical limit graph do not depend on 
the variables $z,t,k$. Since the two situations shown in Fig.~\ref{fig:3soliton_1} belong to the same solution of 
the 2DTL$_K$ equation, the result of the application of the two sides of (\ref{YB_mixed}) are the same. We know  
that (\ref{YB_mixed}) indeed holds for all parameter values (provided that $p_i,q_i$ are all different). 

In the following examples, we present tropical limit graphs for pure 3-soliton solutions, which show a structure different 
from that in Fig.~\ref{fig:3soliton_1}.

\begin{example}
\label{ex:3s_2}
We choose $K$ and $\eta_i,\xi_i$, $i=1,2,3$, as in Example~\ref{ex:3s_1}, and set
\bez
     p_1=1/2, \, p_2=10, \, p_3=2 \, , q_1 = 1 \, , q_2=20 \, , q_3 =4 \, .
\eez
Corresponding tropical limit graph at constant values of $t$ are shown in Fig.~\ref{fig:3soliton_2}, from 
which we read off
\bez
      \mathcal{T}^{-1}_{\bsy{12}}(1,2) \circ \mathcal{T}^{-1}_{\bsy{13}}(1,3) \circ \mathcal{R}_{\bsy{23}}(2,3)
    =  \mathcal{R}_{\bsy{23}}(2,3) \circ \mathcal{T}^{-1}_{\bsy{13}}(1,3) \circ \mathcal{T}^{-1}_{\bsy{12}}(1,2) 
       \, , 
\eez
which is (\ref{RTinvTinv=TinvTinvR}). 
\begin{figure}[h] 
\begin{center}
\includegraphics[scale=.2]{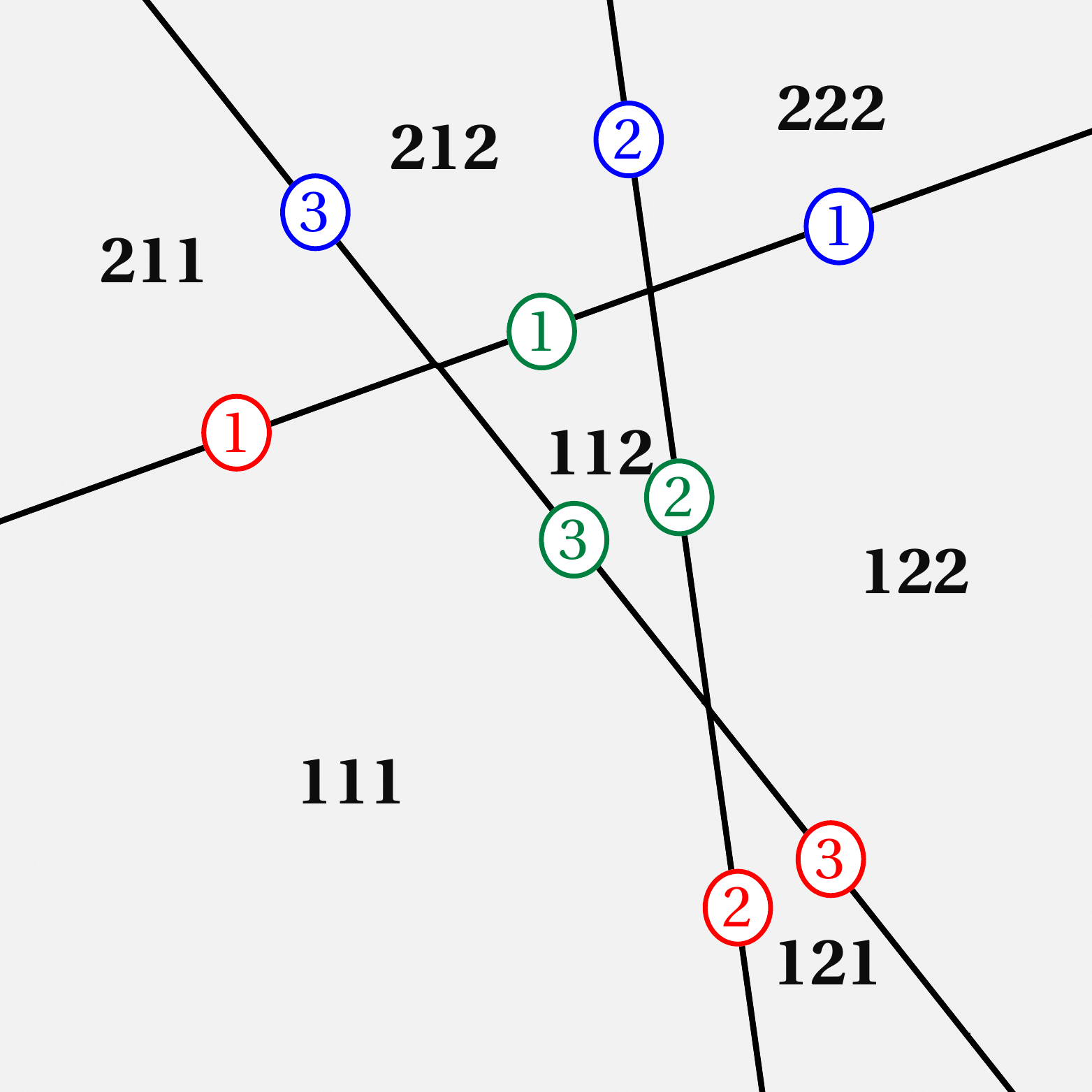}
\hspace{2.cm}
\includegraphics[scale=.2]{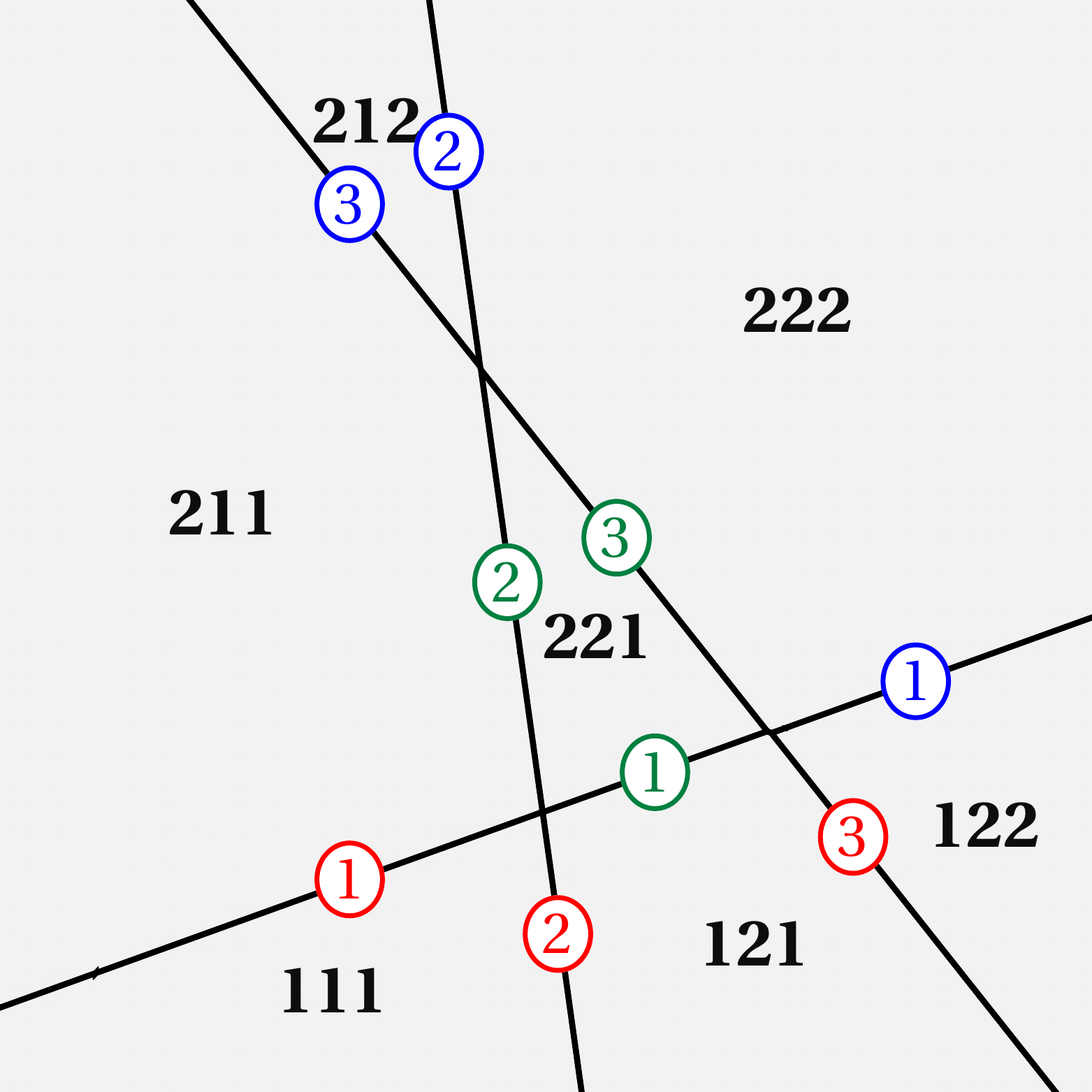} 
\parbox{15cm}{
\caption{Tropical limit graphs of a pure 3-soliton solution of the p2DTL$_K$ equation for a negative (left graph) 
and a positive (right graph) value of $t$, using the data of Example~\ref{ex:3s_2}.
\label{fig:3soliton_2} }
}
\end{center}
\end{figure}
\end{example}

\begin{example}
\label{ex:3s_3}
Again, we choose $K$ and $\eta_i,\xi_i$, $i=1,2,3$, as in Example~\ref{ex:3s_1}, but now
\bez
  p_1=1, \, p_2=10, \, p_3=1 \, , q_1 = 1/2 \, , q_2=10 \, , q_3 =2 \, .
\eez
Corresponding tropical limit graphs are shown in Fig.~\ref{fig:3soliton_3}, from which we read off
\bez
      \mathcal{R}_{\bsy{12}}(1,2) \circ \mathcal{R}_{\bsy{13}}(1,3) \circ \mathcal{R}_{\bsy{23}}(2,3)
    =  \mathcal{R}_{\bsy{23}}(2,3) \circ \mathcal{R}_{\bsy{13}}(1,3) \circ \mathcal{R}_{\bsy{12}}(1,2)  
       \, ,
\eez
which is the Yang-Baxter equation (\ref{YB_eq}).
\begin{figure}[h] 
\begin{center}
\includegraphics[scale=.2]{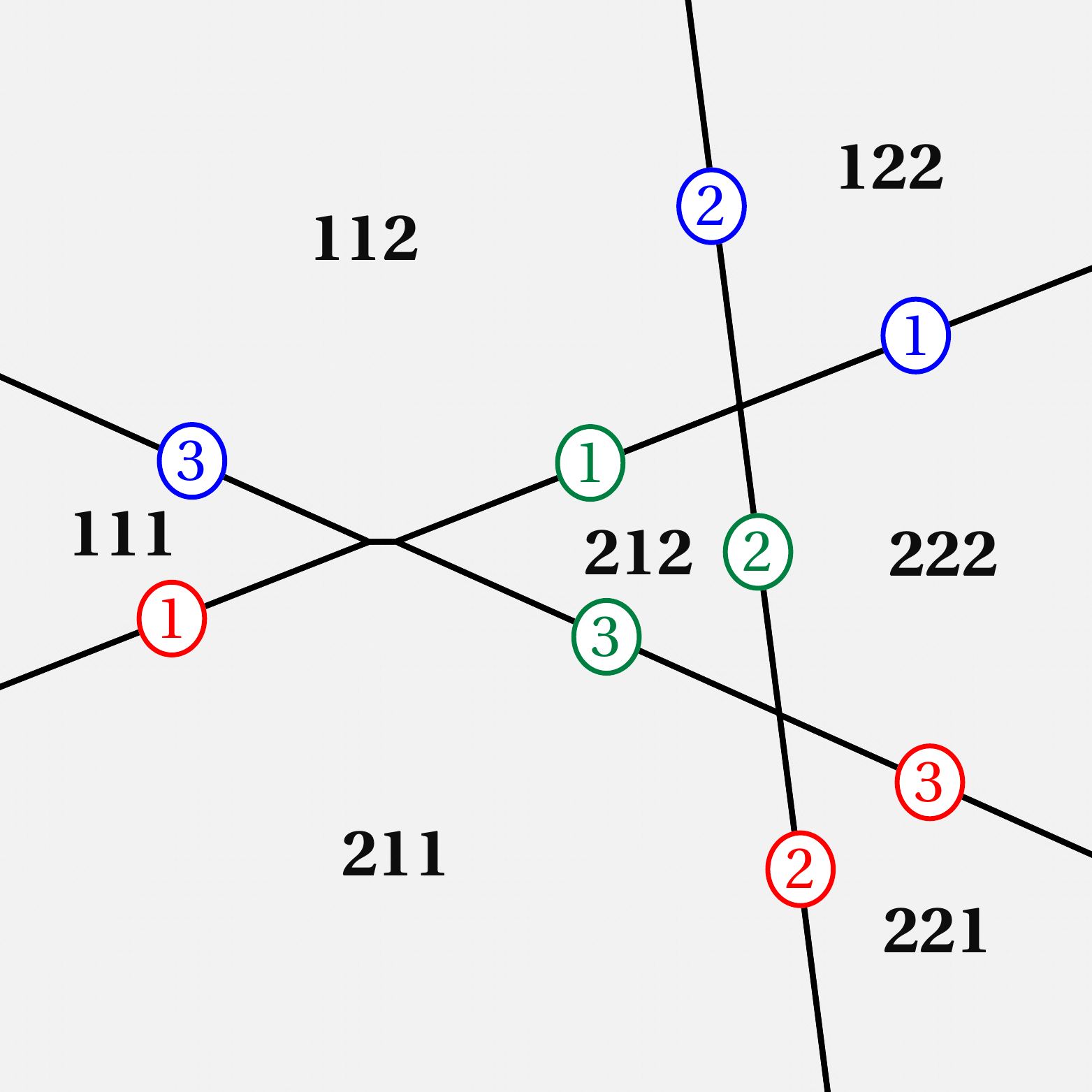}
\hspace{2.cm}
\includegraphics[scale=.2]{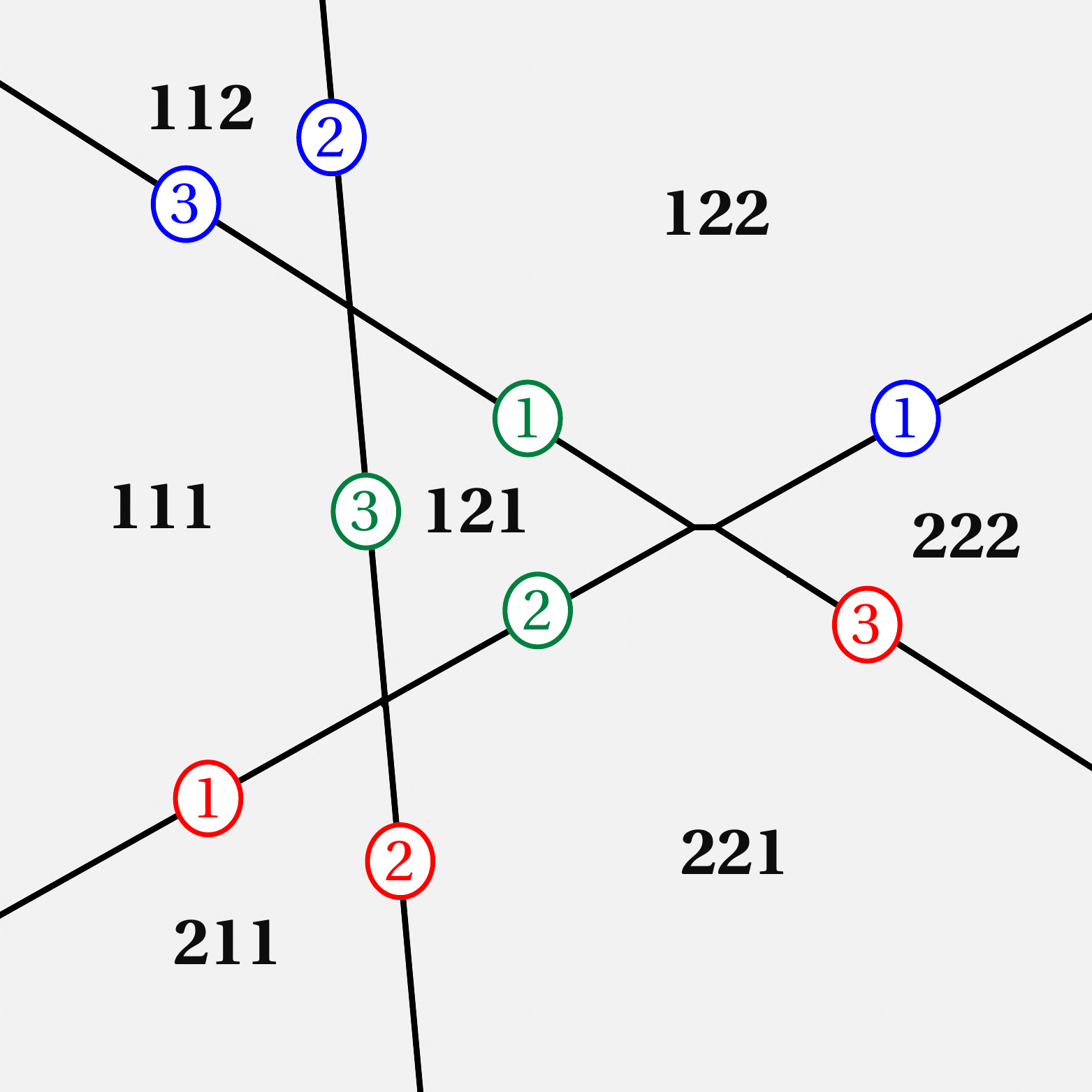}
\parbox{15cm}{
\caption{Tropical limit graphs of yet another pure 3-soliton solution of the p2DTL$_K$ equation for a negative (left graph) 
and a positive (right graph) value of $t$, using the data of Example~\ref{ex:3s_3}.
\label{fig:3soliton_3} }
}
\end{center}
\end{figure}
\end{example}

\section{Conclusions}
\label{sec:concl}
We presented a ``Lax representation'' for the Yang-Baxter map $\mathcal{R}$ obtained in \cite{DMH19LMP} from the pure 
2-soliton solution of the ($K$-modified) matrix KP equation. For the matrix KdV reduction, such a Lax representation 
had been found earlier in \cite{Gonc+Vese04}, also see \cite{Suri+Vese03}. Using also the inverse of the Lax matrix $A$, 
we were led to entwining Yang-Baxter maps, in the sense of \cite{Koul+Papa11BCP}. 
Besides the Yang-Baxter map, this involves another map, which is not Yang-Baxter, but both maps satisfy 
a ``mixed Yang-Baxter equation''. 

We demonstrated that this structure is indeed realized in 3-soliton interactions. Here we concentrated on an 
analysis of matrix generalizations of the two-dimensional Toda lattice equation, of which solitons can be 
generated via a binary Darboux transformation. For the subclass of ``pure solitons'', which only exhibit 
elastic scattering (i.e., no merging or splitting of solitons), we elaborated the tropical limit of  
the general 2- and 3-soliton solution. It turned out that exactly the same Yang-Baxter map as in the 
KP$_K$ case is a work. The crucial new insight is that the flow of polarizations for three solitons 
is not, in general, described by the Yang-Baxter map alone, but by entwining Yang-Baxter maps. 
This also holds for KP$_K$ and even for KdV$_K$ (with $K=1_n$, for example), also see Appendix~\ref{app:KdV_2s}. 
This result crucially relies on our tropical limit analysis of solitons.  

More precisely, in the preceding section we found that (\ref{YB_eq}), (\ref{YB_mixed}) and also (\ref{RTinvTinv=TinvTinvR})
are realized by pure 2DTL$_K$ solitons. We do not know whether this is also so for the remaining mixed 
Yang-Baxter equations in Section~\ref{sec:Lax_system}. 

Since the soliton with number $i$ is specified 
via the parameters $p_i,q_i$, and the polarization $u_i$, we can associate the matrices $A_i(\lambda,u_i)$
and $B_i(\lambda,u_i)$ with it (as well as $\tilde{A}_i(\lambda,u_i)$ and $\tilde{B}_i(\lambda,u_i)$).
Comparing the three-fold products of these Lax matrices, which imply a mixed Yang-Baxter equation, 
with the tropical limit plots for pure 3-solitons, one observes the following rule.  
Whether $A_i$ or $B_i$ is at work, depends on the constellation of phases to the left and to the right of 
the line. For example, if the first soliton has phase $1ab$ to the left and $2cd$ to the right 
($a,b,c,d \in \{1,2\}$), we have to choose $A_1$, whereas with $2ab$ to the left and $1cd$ to the 
right, it has to be $B_1$. The 3-soliton interaction shown in the first plot in Fig.~\ref{fig:3soliton_1} 
corresponds to
\bez
     A_1 B_2 A_3 \stackrel{\mathcal{T}}{\longmapsto} B_2 A_1 A_3 \stackrel{\mathcal{R}}{\longmapsto}
     B_2 A_3 A_1 \stackrel{\mathcal{T}^{-1}}{\longmapsto} A_3 B_2 A_1  
\eez
(hiding away the parameters). The Lax matrices, which are subject to the refactorization equations 
(\ref{Lax_eqs}), generate a Zamolodchikov-Faddeev algebra. In S-matrix 
theory of an integrable QFT, these matrices play a role as creation and annihilation operators
(see, e.g., Section 4.2 in \cite{Bomb16}, and references cited there). 

We also found that the two maps $\mathcal{R}$, $\mathcal{T}$, and their inverses, provide us with solutions 
of the ``WXZ system'' \cite{Hlav+Snob99}, called \emph{Yang-Baxter system} in \cite{Brze+Nich05}. 

We further note that the Yang-Baxter map obtained for the matrix Nonlinear Schr\"odinger (NLS) equation in 
\cite{APT04IP,Tsuc04,Tsuc15} has the same form as the Yang-Baxter map $\mathcal{R}$ for matrix KP and matrix 
two-dimensional Toda lattice. 

Like KP$_K$, also the 2DTL$_K$ equation possesses many soliton solutions beyond pure solitons, 
see Section~\ref{subsec:Toda_bDT}. 
A corresponding analysis, following \cite{DMH18p}, goes beyond Yang-Baxter and will be postponed to future work.

\renewcommand{\theequation} {\Alph{section}.\arabic{equation}}
\renewcommand{\thesection} {\Alph{section}}

\makeatletter
\newcommand\appendix@section[1]{
  \refstepcounter{section}
  \orig@section*{Appendix \@Alph\c@section: #1}
  \addcontentsline{toc}{section}{Appendix \@Alph\c@section: #1}
}
\let\orig@section\section
\g@addto@macro\appendix{\let\section\appendix@section}
\makeatother

\begin{appendix}

\section{2-soliton solutions of the KdV$_K$ equation} 
\label{app:KdV_2s}
The 2-soliton solution of (\ref{KdV_K}) is obtained from the formulas in Section~\ref{subsec:2sol} 
by replacing the expressions for $\vartheta_{ab}$ by
\bez
      \vartheta_{11} = \vartheta(p_1) + \vartheta(p_2) \, , \quad
      \vartheta_{12} = \vartheta(p_1) + \vartheta(-p_2) \, , \quad
      \vartheta_{21} = \vartheta(p_2) + \vartheta(-p_1) \, , \quad
      \vartheta_{22} = \vartheta(-p_1) + \vartheta(-p_2) \, ,
\eez
where now
\bez
      \vartheta(p) = p \, x + p^3 \, t \, .
\eez
To determine the asymptotics of soliton 1 in the 2-soliton solution, we set
\bez
     x + p_1^2 t = \Lambda_1^{(\pm)} \, ,
\eez
with constants $\Lambda_1^{(\pm)}$, and take the limit as $t \to \pm \infty$.
Then we choose $\Lambda_1^{(\pm)}$ in such a way that phase shifts are compensated.\footnote{In 
\cite{APT04IP,Tsuc04}, which deal with vector NLS solitons, the derived Yang-Baxter maps include 
factors due to phase shifts. } 
Correspondingly for soliton 2. In this way we obtain the following.
\begin{enumerate}
\item Let $p_2 < p_1 < 0$. 
Then $u_{11,21}$ ($u_{21,22}$) and $u_{12,22}$ ($u_{11,12}$) are the polarizations
of soliton 1 (soliton 2) as $t \to -\infty$, respectively, $t \to \infty$. 
This is the phase constellation shown in Fig.~\ref{fig:KdV_2s_21}.
We find the following relations with polarizations defined via the tropical limit,
\bez
  &&  \hat{u}_{11,21} = \lim_{t \to -\infty} u\Big|_{x\mapsto -p_1^2 t + \Lambda_1^{(-)}} \, , \qquad
      \hat{u}_{12,22} = \lim_{t \to +\infty} u\Big|_{x\mapsto -p_1^2 t + \Lambda_1^{(+)}} \, , \\
  &&  \hat{u}_{21,22} = \lim_{t \to -\infty} u\Big|_{x\mapsto -p_2^2 t + \Lambda_2^{(-)}} \, , \qquad
      \hat{u}_{11,12} = \lim_{t \to +\infty} u\Big|_{x\mapsto -p_2^2 t + \Lambda_2^{(+)}} \, ,
\eez
where $\Lambda_1^{(-)} = - \log(\alpha_{12} \kappa_{11})/(2p_1)$, $\Lambda_1^{(+)} = - \log(\kappa_{11})/(2p_1)$,
$\Lambda_2^{(-)} = - \log(\kappa_{22})/(2p_2)$, $\Lambda_2^{(+)} = - \log(\alpha_{12} \kappa_{22})/(2p_1)$.
\item Let $0 < p_1 < p_2$. 
Then $u_{12,22}$ ($u_{11,12}$) and $u_{11,21}$ ($u_{21,22}$) are the polarizations
of soliton 1 (soliton 2) as $t \to -\infty$, respectively, $t \to \infty$. 
This is the phase constellation shown in the first plot of Fig.~\ref{fig:KdV_2s_other}.
\item Let $p_2 < 0 < p_1$, $|p_1|<|p_2|$.
Then $u_{11,21}$ ($u_{11,12}$) and $u_{12,22}$ ($u_{21,22}$) are the polarizations
of soliton 1 (soliton 2) as $t \to -\infty$, respectively, $t \to \infty$. 
This is the phase constellation shown in the second plot of Fig.~\ref{fig:KdV_2s_other}.
\item Let $p_1 < 0 < p_2$, $|p_1|<|p_2|$.
Then $u_{12,22}$ ($u_{21,22}$) and $u_{11,21}$ ($u_{11,12}$) are the polarizations
of soliton 1 (soliton 2) as $t \to -\infty$, respectively, $t \to \infty$. 
This is the phase constellation displayed in the third plot of Fig.~\ref{fig:KdV_2s_other}.
\end{enumerate}
In all these cases we have $|p_1|<|p_2|$, so that, for large enough negative values of $t$, 
soliton 1 appears to the left of soliton 2 in $x$-direction.
The concrete tropical limit graphs in Figs.~\ref{fig:KdV_2s_21} and  \ref{fig:KdV_2s_other} 
are obtained with 
\bez
    K=(1,1) \, , \quad
    \xi_1 = \left(\begin{array}{c} 1 \\ 0 \end{array} \right) \, , \quad
    \xi_2 = \left(\begin{array}{c} 0 \\ 1 \end{array} \right)  \, , \quad
    \eta_1 = \eta_2 = 1 \, ,
\eez
and the following data, respectively:
\begin{enumerate}
\item $p_1 =-1/2$, $p_2 = -3/2$. 
\item $p_1 =1/2$, $p_2 = 3/2$. 
\item $p_1 =1/2$, $p_2 = -3/2$. 
\item $p_1 =-1/2$, $p_2 = 3/2$. 
\end{enumerate}

\section{Proof of Theorem~\ref{thm:Lax}}
\label{app:unique_map}
Let $K$ be an $n \times m$ matrix with maximal rank. 

\begin{lemma}
\label{lemma:invert_sum_of_rank1_proj}
Let $K$ have maximal rank and $X_i$, $i=1,2$, be rank one $K$-projections.
Let $\alpha,\beta$ be constants such that
\bez
      \gamma := (1+\alpha)(1+\beta) - \alpha \beta \, \mathrm{tr}(X_i K X_j K) \neq 0 \, .
\eez
Then
\bez
    \Big(1_m + \alpha \, X_1 K + \beta \, X_2 K \Big)^{-1}
 = 1_m -  \frac{\alpha (1+\beta)}{\gamma} \, X_1 K - \frac{(1+\alpha) \beta}{\gamma} \, X_2 K 
    + \frac{\alpha\beta}{\gamma} \, (X_1 K X_2 K + X_2K X_1 K) \, .
\eez
\end{lemma}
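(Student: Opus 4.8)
The plan is to reduce everything to elementary algebra in the two matrices $P := X_1 K$ and $Q := X_2 K$. First I would record that the $K$-projection property $X_i K X_i = X_i$ makes $P$ and $Q$ \emph{rank-one idempotents}: indeed $P^2 = X_1 K X_1 K = X_1 K = P$, and likewise $Q^2 = Q$, while $P$ and $Q$ have rank exactly one, since $X_i$ has rank one, $K$ has maximal rank, and $X_1 K X_1 = X_1 \neq 0$ forces $X_1 K \neq 0$. The matrix to be inverted is then $M := 1_m + \alpha P + \beta Q$, and the claimed inverse is the linear combination $N := 1_m + a P + b Q + c\,(PQ+QP)$ with $a = -\alpha(1+\beta)/\gamma$, $b = -(1+\alpha)\beta/\gamma$, $c = \alpha\beta/\gamma$. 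The structural fact I would establish next is the sandwich identity $P A P = \mathrm{tr}(AP)\,P$, valid for \emph{any} rank-one idempotent $P$ and arbitrary $A$: writing $P = u\,w$ with $u$ an $m$-column, $w$ a $1\times m$ row and $w u = 1$, one has $PAP = (wAu)\,P$ while $\mathrm{tr}(AP) = wAu$. Applying this with $A = Q$ and with $A = P$ gives
\[
 PQP = s\,P, \qquad QPQ = s\,Q, \qquad s := \mathrm{tr}(X_1 K X_2 K) = \mathrm{tr}(X_2 K X_1 K).
\]

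With $P^2 = P$, $Q^2 = Q$, $PQP = sP$, $QPQ = sQ$ in hand, the unital span of $\{1_m, P, Q, PQ, QP\}$ is closed under multiplication, so both $M$ and $N$ lie in this (at most five-dimensional) algebra. I would then expand the product $MN$ and collect the coefficients of $1_m$, $P$, $Q$, $PQ$ and $QP$, reducing every cubic word by the four relations above. The coefficient of $1_m$ is manifestly $1$; the four remaining coefficients vanish precisely because the defining relation $\gamma + \alpha\beta s = (1+\alpha)(1+\beta)$ turns $1 + cs$ into $(1+\alpha)(1+\beta)/\gamma$, which is exactly what pins $a$, $b$, $c$ to the stated values. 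This yields $MN = 1_m$, and since $M$ and $N$ are square matrices this one-sided identity already gives $N = M^{-1}$.

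I expect no genuine obstacle here: the computation is mechanical, and the only care needed is the bookkeeping of which reduction ($PQP\to sP$, $QPQ\to sQ$, $P^2\to P$, $Q^2\to Q$) feeds each off-diagonal coefficient. As a conceptual complement and a sanity check, I would note that $P$ and $Q$ map all of the ambient space into $W := \mathrm{span}(\mathrm{col}\,X_1, \mathrm{col}\,X_2)$, so $M$ preserves the at-most-two-dimensional space $W$ and induces the identity on the quotient; hence $\det M = \det(M|_W) = \gamma$, which explains why $\gamma \neq 0$ is the correct nondegeneracy hypothesis and suggests an alternative proof by inverting a $2\times 2$ block. I would nevertheless prefer the direct algebraic verification above, because it treats uniformly the degenerate case in which $\mathrm{col}\,X_1$ and $\mathrm{col}\,X_2$ coincide (where $W$ collapses to a line), without any case distinction.
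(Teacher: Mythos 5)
Your proposal is correct and follows essentially the same route as the paper: both write $X_iK$ as a rank-one idempotent $\xi_i\eta_i$ with $\eta_i\xi_i=1$, derive the sandwich relation $X_iK\,X_jK\,X_iK=\mathrm{tr}(X_iKX_jK)\,X_iK$, and then verify the inverse formula by direct multiplication in the algebra generated by $X_1K$ and $X_2K$. Your explicit coefficient bookkeeping and the determinant remark are slightly more detailed than the paper's ``can be directly verified,'' but the underlying argument is identical.
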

\begin{proof}
Since $X_i$ is a $K$-projection, $X_i K$ and $K X_i$ are ordinary projection (i.e., idempotent) matrices. 
If $K$ has maximal rank, they  
have rank one iff $X_i$ has rank one. Hence there is a column vector $\xi_i$ and a row vector $\eta_i$ such that
$X_i K = \xi_i \eta_i$ and $\eta_i \xi_i =1$, and correspondingly for $K X_i$. It follows that 
\bez
  && X_i K X_j K X_i K = \mathrm{tr}(X_i K X_j K) \, X_i K \, , \\
  && K X_i K X_j K X_i = \mathrm{tr}(K X_i K X_j) \, K X_i = \mathrm{tr}(X_i K X_j K) \, K X_i \, . 
\eez  
Since $K$ is assumed to have maximal rank, this implies
\be
    X_i K X_j K X_i = \mathrm{tr}(X_i K X_j K) \, X_i \, .  \label{3proj_trace_relation}
\ee
Using further the $K$-projection property of $X_i$, our assertion can be directly verified. 
\end{proof}

\begin{proposition}
\label{prop:unique_map}
Let $K$ have maximal rank, $p_1,p_2,q_1,q_2$ be pairwise distinct, and $X_i \in \boldsymbol{S}$, $i=1,2$.
The system  (\ref{Lax_eqs}) determines a map 
$\boldsymbol{S} \times \boldsymbol{S} \rightarrow \boldsymbol{S} \times \boldsymbol{S}$ 
via $(X_1,X_2) \mapsto (X_1',X_2')$, which is given by (\ref{KP_YB_map}).
\end{proposition}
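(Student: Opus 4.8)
The plan is to read each refactorization equation in (\ref{Lax_eqs}) as an identity between rational functions of $\lambda$ and to peel off the rank-one data of $X_1'$ and $X_2'$ one direction at a time. Write $X_i = \xi_i \eta_i$ and $X_i' = \xi_i' \eta_i'$ with columns $\xi_i,\xi_i'$, rows $\eta_i,\eta_i'$, and $\eta_i K \xi_i = \eta_i' K \xi_i' = 1$ (the rank-one form of the $K$-projection property already used in Lemma~\ref{lemma:invert_sum_of_rank1_proj}). Both sides of the first equation in (\ref{Lax_eqs}) tend to $1_m$ as $\lambda \to \infty$ and carry only simple poles at $\lambda = q_1,q_2$, so the equation is equivalent to equality of the residues there; the $\tilde A$-version is handled identically and is genuinely needed when $K$ is not invertible, since the $A$-equation only constrains $X_i' K$ while the $\tilde A$-equation constrains $K X_i'$.

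First I would fix the directions of $X_1'$. Evaluating the $A$-equation at the regular point $\lambda = p_1$, the factor $A_1(p_1,X_1) = 1_m - X_1 K$ becomes the complementary projection with kernel $\langle \xi_1 \rangle$, while $A_2(p_1,X_2)$ is invertible with inverse $B_2(p_1,X_2)$. Right-multiplying by $\xi_1'$ annihilates the right-hand side through $(1_m - X_1' K)\xi_1' = 0$, forcing $A_2(p_1,X_2)\xi_1' \in \langle \xi_1 \rangle$, i.e. $\xi_1' \propto B_2(p_1,X_2)\xi_1$; crucially the unknown $X_2'$ drops out. Comparing residues at $q_1$ in the $\tilde A$-equation gives a rank-one identity whose row factors yield $\eta_1' \propto \eta_1 \tilde A_2(q_1,X_2)$. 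Hence $X_1' \propto B_2(p_1,X_2)\,X_1\,\tilde A_2(q_1,X_2)$, and the mirror computation at $\lambda = p_2,q_2$ gives $X_2' \propto A_1(q_2,X_1)\,X_2\,\tilde B_1(p_2,X_1)$. The two overall scalars are then fixed by $\mathrm{tr}(K X_i') = 1$, reproducing (\ref{X_1,2,out}) and hence (\ref{KP_YB_map}); this already shows that a refactorization, if it exists, is unique and equals $\mathcal{R}(X_1,X_2)$.

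For existence I would substitute this candidate back and verify the refactorization by a pole count. Rewriting the $A$-equation as $B_2(\lambda,X_2')\,A_1(\lambda,X_1)\,A_2(\lambda,X_2) = A_1(\lambda,X_1')$, the left-hand side tends to $1_m$ at infinity, and its a priori poles at $\lambda = q_2$ (from $A_2$) and $\lambda = p_2$ (from $B_2$) must cancel: the $q_2$-residue is proportional to $B_2(q_2,X_2')A_1(q_2,X_1)\xi_2$, which vanishes because $A_1(q_2,X_1)\xi_2 \propto \xi_2'$ lies in the kernel of the projection $1_m - X_2'K = B_2(q_2,X_2')$, and the $p_2$-residue vanishes by the row-side mirror statement, using the intertwiners $K A_i = \tilde A_i K$ and $K B_i = \tilde B_i K$. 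What survives is the single simple pole at $q_1$, whose residue equation $X_1' K = B_2(q_1,X_2')\,X_1 K\,A_2(q_1,X_2)$ must then reduce to the candidate; the $\tilde A$-equation follows in the same way.

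The step I expect to be the real obstacle is the scalar bookkeeping hidden in that last residue equation and in the normalization. Concretely one must show $B_2(q_1,X_2')\xi_1 = \alpha_{12}\,B_2(p_1,X_2)\xi_1$ with the candidate $X_2'$ inserted, and that the two independently imposed normalizations $\mathrm{tr}[B_2(p_1,X_2)X_1\tilde A_2(q_1,X_2)K]$ and $\mathrm{tr}[A_1(q_2,X_1)X_2\tilde B_1(p_2,X_1)K]$ coincide and equal $\alpha_{12}^{-1}$ from (\ref{alpha_12}). The tools that make this tractable are the idempotency of $X_i K$, which collapses $(1_m + c X_i K)(1_m + d X_i K) = 1_m + (c+d+cd)X_i K$, the rank-one trace identity (\ref{3proj_trace_relation}), and Lemma~\ref{lemma:invert_sum_of_rank1_proj} wherever an inverse of $1_m + \alpha X_1 K + \beta X_2 K$ is required; together these reduce every trace to the single combination $\mathrm{tr}(K X_1 K X_2)$ that defines $\alpha_{12}$.
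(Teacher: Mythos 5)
Your argument is correct, and the key identities you defer to ``bookkeeping'' do check out (e.g.\ $B_2(q_1,X_2')\xi_1=\alpha_{12}B_2(p_1,X_2)\xi_1$ follows from $\alpha_{12}^{-1}=1-\frac{(p_1-q_1)(p_2-q_2)}{(p_2-p_1)(q_2-q_1)}\kappa_{12}\kappa_{21}$ after expanding $A_1(q_2,X_1)\xi_2$ and $B_1(p_2,X_1)\xi_1$), but your route is genuinely different from the paper's. The paper clears denominators by multiplying (\ref{Lax_eqs}) by $(\lambda-q_1)(\lambda-q_2)$ and expands in powers of $\lambda$: the linear coefficient gives the conservation law $X_2'=X_2-\frac{p_1-q_1}{p_2-q_2}(X_1'-X_1)$, which is substituted into the constant coefficient to produce a single linear equation (\ref{X'_1_X_1_relation}) for $X_1'$; this is then solved by the explicit inverse of $1_m+\alpha X_1K+\beta X_2K$ supplied by Lemma~\ref{lemma:invert_sum_of_rank1_proj}. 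You instead evaluate at the regular points $\lambda=p_i$ (where $A_i$ degenerates to the projection $1_m-X_iK$) and compare residues at $\lambda=q_i$, extracting the column and row factors of the rank-one unknowns separately; this bypasses Lemma~\ref{lemma:invert_sum_of_rank1_proj} entirely (you list it among your tools, but your argument never actually needs to invert a two-projection sum, only the single-projection factors $B_i=A_i^{-1}$), determines $X_1'$ and $X_2'$ independently rather than through the linear relation between them, and makes transparent why the answer factorizes as $B_2(p_1,X_2)\,X_1\,\tilde A_2(q_1,X_2)$. Your version is also more explicit about the existence half (the pole-cancellation check that the candidate really solves the refactorization), which the paper leaves at ``can be verified''. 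What the paper's route buys in exchange is rank-independence: equation (\ref{X'_1_X_1_relation}) is derived without assuming $X_1'$ has rank one, which is what the Remark following the proof exploits to treat higher-rank $K$-projections, whereas your factorization $X_i'=\xi_i'\eta_i'$ is tied to the rank-one case.
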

\begin{proof}
We multiply (\ref{Lax_eqs}) by $(\lambda-q_1)(\lambda-q_2)$ and expand in powers of $\lambda$.
Since $K$ is assumed to have maximal rank, from the coefficient linear in $\lambda$ we obtain
\bez
      X_2' = X_2 - \frac{p_1-q_1}{p_2-q_2} \, (X_1' - X_1)  \, ,
\eez
which can be used to replace $X_2'$ in the $\lambda$-independent part of the expression we started with,
\bez
  0 &=& q_1 (p_2-q_2) (X_2'-X_2) + q_2 (p_1-q_1)(X_1'-X_1) + (p_1-q_1)(p_2-q_2)(X_2' K X_1'-X_1 K X_2)  \\
    &=& (p_1-q_1)\Big( (q_2-q_1)(X_1'-X_1) + (p_2-q_2)(X_2 K X_1' - X_1 K X_2)
            - (p_1-q_1)(X_1'-X_1) K X_1' \Big) \, .
\eez 
Since $X_1'$ is a $K$-projection, this becomes
\bez
   (q_2-q_1)(X_1'-X_1) + (p_2-q_2)(X_2 K X_1' - X_1 K X_2)
            - (p_1-q_1) X_1'  + (p_1-q_1) X_1 K X_1'  = 0 \, ,
\eez 
so that
\be
     \Big(1_m - \frac{p_1-q_1}{p_1-q_2} X_1 K  
    -\frac{p_2-q_2}{p_1-q_2} X_2 K \Big)  X_1'
  = \frac{q_1-q_2}{p_1-q_2} X_1 \Big( 1_n - \frac{ p_2-q_2}{q_1-q_2}  K X_2 \Big) \label{X'_1_X_1_relation}
\ee
If $X_i$, $i=1,2$, have rank one, the inverse of the matrix multiplying $X_1'$ is given in the 
preceding lemma. We have to apply it to the right hand side of the last expression. 
First we compute
\bez
  &&  \left (1_m -  \frac{\alpha (1+\beta)}{\gamma} \, X_1 K - \frac{(1+\alpha) \beta}{\gamma} \, X_2 K 
    + \frac{\alpha\beta}{\gamma} \, (X_1 K X_2 K + X_2K X_1 K) \right) X_1  \\
 && \hspace{2cm} = \frac{1}{\gamma} \left( (1+\beta) \, 1_m - \beta \, X_2 K \right) X_1 \, ,
\eez 
using the $K$-projection property of $X_i$, $i=1,2$, and (\ref{3proj_trace_relation}). Here we have
\bez
       \alpha = -\frac{p_1-q_1}{p_1-q_2} \, , \quad
       \beta = - \frac{p_2-q_2}{p_1-q_2} \, , \quad
       \gamma = \frac{(q_1-q_2)(p_1-p_2)}{(p_1-q_2)^2} \, \alpha_{12}^{-1} \, ,
\eez
with $\alpha_{12}$ defined in (\ref{alpha_12}). A straightforward computation now leads to
\bez
   X_1' = \alpha_{12} \, \Big( 1_m - \frac{p_2-q_2}{p_2-p_1} X_2 K \Big) \, X_1 \, 
                    \Big( 1_n - \frac{p_2-q_2}{q_1-q_2} K X_2 \Big) \, ,
\eez
which is the first of equations (\ref{KP_YB_map}). The second equation 
is obtained in the same way and we can verify that $X_i' \in \boldsymbol{S}$, $i=1,2$.
\end{proof}

\begin{remark}
In \cite{Gonc+Vese04} it has been noted that, in the case associated with the matrix KdV equation, (\ref{Lax_eqs}) 
determines, more generally, a Yang-Baxter map if the set $\boldsymbol{S}$ is extended to the set  
of \emph{all} $K$-projections, i.e., without restriction to rank one. In the more general situation considered in the 
present work, we observe that (\ref{X'_1_X_1_relation}), which has been derived without restriction of the rank,
can be rewritten as
\bez
     \Big(1_m - \frac{p_1-q_1}{p_1-q_2} X_1 K  
    -\frac{p_2-q_2}{p_1-q_2} X_2 K \Big)  X'_1
   = X_1 \Big(1_m - \frac{p_1-q_1}{p_1-q_2} K X_1 -\frac{p_2-q_2}{p_1-q_2} K X_2 \Big) 
\eez
(cf. \cite{Tsuc15} for the NLS case). 
If the matrix multiplying $X'_1$ is invertible, it follows that the latter is a $K$-projection. A corresponding 
argument shows that also $X'_2$ is a $K$-projection. A convenient formula for the inverse of the matrix 
multiplying $X'_1$ (and correspondingly for $X'_2$), like that given in Lemma~\ref{lemma:invert_sum_of_rank1_proj} 
for the rank one case, is not available, however. 
At least in the special case where $X_1$ and $X_2$ both have rank $r$ and satisfy
\be
     X_i K X_j K X_i = \mu \, X_i \qquad \quad i,j=1,2, \quad i \neq j \, ,     \label{X_1X_2X_1_propto_X_1}
\ee
which implies that the scalar $\mu$ is given by $r^{-1} \mathrm{tr}(X_i K X_j K)$, one can show that
\bez
   X_1' = r \, \frac{B_2(p_1,X_2) \, X_1 \, \tilde{A}_2(q_1,X_2)}
                      {\mathrm{tr}[B_2(p_1,X_2) \, X_1 \, \tilde{A}_2(q_1,X_2) \, K]} \, , \quad
   X_2' = r \, \frac{A_1(q_2,X_1) \, X_2 \, \tilde{B}_1(p_2,X_1)}
                      {\mathrm{tr}[A_1(q_2,X_1) \, X_2 \, \tilde{B}_1(p_2,X_1) \, K]} \, .
\eez
Since (\ref{X_1X_2X_1_propto_X_1}) holds for any two \emph{rank one} $K$-projections, we recover (\ref{X_1,2,out}). 
\end{remark}

\section{Derivation of the binary Darboux transformation for the p2DTL$_K$ equation}
\label{app:BDT}
We recall a binary Darboux transformation result of bidifferential calculus \cite{DMH13SIGMA,CDMH16}. 

\begin{theorem}
Let $(\Omega,\d, \bd)$ be a bidifferential calculus and $\Delta, \Gamma, \boldsymbol{\lambda}, \boldsymbol{\kappa}$ 
solutions of
\bez
  &&  \bd \Gamma = \Gamma \, \d \Gamma + [\boldsymbol{\kappa} , \Gamma] \, , \qquad
    \bd \boldsymbol{\kappa} = \Gamma \, \d \boldsymbol{\kappa} + \boldsymbol{\kappa}^2 \, ,  \\
  &&  \bd \Delta = (\d \Delta) \, \Delta - [\boldsymbol{\lambda} , \Delta] \, ,  \qquad
     \bd \boldsymbol{\lambda} = (\d \boldsymbol{\lambda}) \, \Delta - \boldsymbol{\lambda}^2 \, ,
\eez
and $\phi_0$ a solution of 
\be
      \d \bd \phi + \d \phi \, K \, \d \phi = 0 \, ,   \label{bDT_phi_eq}
\ee
where $\d K = 0 = \bd K$. 
Let $\theta$ and $\chi$ be solutions of the linear system
\be
     \bd \theta = (\d \phi_0) \, K \, \theta + (\d \theta) \, \Delta + \theta \, \boldsymbol{\lambda} \, ,
       \label{bDT_lin_sys}
\ee
respectively the adjoint linear system
\be
    \bd \chi = - \chi \, K \, \d \phi_0 + \Gamma \, \d \chi + \boldsymbol{\kappa} \, \chi \, .
       \label{bDT_adj_lin_sys}
\ee
Let $\Omega$ solve the compatible linear system
\be
  &&  \Gamma \, \Omega - \Omega \, \Delta = \chi \, K \, \theta \, , \nonumber \\
  &&  \bd \Omega = (\d \Omega) \, \Delta - (\d \Gamma) \, \Omega + (\d \chi) \, K \, \theta
                   + \boldsymbol{\kappa} \, \Omega + \Omega \, \boldsymbol{\lambda} \, .   \label{bDT_Omega_eqs}
\ee
Where $\Omega$ is invertible, 
\be
     \phi = \phi_0 - \theta \, \Omega^{-1} \chi     \label{bDT_new_solution}
\ee
is a new solution of (\ref{bDT_phi_eq}). \hfill $\Box$
\end{theorem}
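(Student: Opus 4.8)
The plan is to verify the transformation formula (\ref{bDT_new_solution}) by a direct computation inside the bidifferential calculus, organized so that the claim collapses to a single quadratic identity for the correction term. Writing $\psi := \theta\,\Omega^{-1}\chi$, so that $\phi = \phi_0 - \psi$, and using that $\phi_0$ already solves (\ref{bDT_phi_eq}) together with the graded Leibniz rule and $\d K = \bd K = 0$, the target $\d\bd\phi + (\d\phi)\,K\,(\d\phi) = 0$ reduces to
\[
  \d\bd\psi + (\d\phi_0)\,K\,(\d\psi) + (\d\psi)\,K\,(\d\phi_0) - (\d\psi)\,K\,(\d\psi) = 0 .
\]
The whole proof then amounts to producing sufficiently clean expressions for $\d\psi$, $\bd\psi$ and $\d\bd\psi$ in terms of the defining data.

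First I would compute the first-order differentials. Since $\theta,\chi,\Omega$ are of degree $0$, the Leibniz rule gives $\d\psi = (\d\theta)\,\Omega^{-1}\chi - \theta\,\Omega^{-1}(\d\Omega)\,\Omega^{-1}\chi + \theta\,\Omega^{-1}(\d\chi)$, and likewise for $\bd\psi$. The quantity $\d\Omega$ is not prescribed directly, so I would recover it by applying $\d$ to the Sylvester relation $\Gamma\,\Omega - \Omega\,\Delta = \chi\,K\,\theta$ in (\ref{bDT_Omega_eqs}), which yields $\Gamma(\d\Omega) - (\d\Omega)\Delta = (\d\chi)K\theta + \chi K(\d\theta) - (\d\Gamma)\Omega + \Omega(\d\Delta)$; under the spectral/invertibility hypothesis the Sylvester operator $X \mapsto \Gamma X - X\Delta$ is injective, so this determines $\d\Omega$ uniquely. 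For $\bd\psi$ I would substitute the linear systems (\ref{bDT_lin_sys}) and (\ref{bDT_adj_lin_sys}) for $\bd\theta$ and $\bd\chi$, together with the $\bd\Omega$-equation in (\ref{bDT_Omega_eqs}). The decisive structural point, which I would check with care, is that the auxiliary algebra elements $\Delta,\Gamma,\boldsymbol{\lambda},\boldsymbol{\kappa}$ occur only in combinations that telescope against the Sylvester relation and its $\d$-image, so that after simplification $\d\psi$ and $\bd\psi$ are expressed \emph{purely} through $\theta$, $\Omega^{-1}$, $\chi$ and $\d\phi_0$. This cancellation is exactly what the four evolution equations for $\Delta,\Gamma,\boldsymbol{\lambda},\boldsymbol{\kappa}$, and the assumed compatibility of the $\Omega$-system, are there to guarantee.

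With clean first-order formulas available, the remaining step is to differentiate once more to obtain $\d\bd\psi$ and to insert everything into the displayed identity. I expect the contributions to reorganize as $\theta\,\Omega^{-1}(\cdots)\,\Omega^{-1}\chi$, where the inner bracket vanishes by the relations already used, and I expect $\d\bd\psi$ to match $-(\d\phi_0)K(\d\psi) - (\d\psi)K(\d\phi_0) + (\d\psi)K(\d\psi)$ term by term. In my view the main obstacle is bookkeeping rather than conceptual: keeping the graded Leibniz signs consistent (recall $\d\bd = -\bd\d$, and that $\d\phi_0$ and $\d\psi$ carry degree one, so a second differentiation of a degree-one factor flips a sign), eliminating the implicitly defined $\d\Omega$ correctly, and confirming that the stated hypotheses on $\Delta,\Gamma,\boldsymbol{\lambda},\boldsymbol{\kappa}$ are precisely those needed for the auxiliary data to drop out.

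As a consistency check, and a slightly different organization of the same computation, I would exploit that the linear system (\ref{bDT_lin_sys}) has (\ref{bDT_phi_eq}) for $\phi_0$ as its integrability condition; one can instead show that replacing $\phi_0$ by $\phi$ keeps that system integrable, which is an equivalent formulation of the assertion and can render the cancellations more transparent. Since the statement is the specialization to the present $K$-modified setting of the general binary Darboux result of \cite{DMH13SIGMA,CDMH16}, I would finally arrange the computation so that it reproduces that general identity verbatim, thereby inheriting its proof.
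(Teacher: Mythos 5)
The paper does not actually prove this theorem: it is recalled from \cite{DMH13SIGMA,CDMH16} and stated with the proof deferred to those references, so there is no in-paper argument to compare against. Your reduction of the claim to the single quadratic identity $\d\bd\psi + (\d\phi_0)\,K\,(\d\psi) + (\d\psi)\,K\,(\d\phi_0) - (\d\psi)\,K\,(\d\psi) = 0$ for $\psi=\theta\,\Omega^{-1}\chi$ is correct and is the natural starting point, and your closing fallback --- organize the computation so as to inherit the general identity of \cite{DMH13SIGMA,CDMH16} --- is in effect exactly what the authors do.

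As a self-contained proof, however, the proposal has a genuine gap. The decisive step --- that after substituting (\ref{bDT_lin_sys}), (\ref{bDT_adj_lin_sys}) and (\ref{bDT_Omega_eqs}) all contributions telescope --- is only asserted (``I expect the contributions to reorganize\ldots''); that cancellation \emph{is} the content of the theorem and must be exhibited, not anticipated. Moreover, the one concrete mechanism you offer for handling $\d\Omega$ is unavailable: you apply $\d$ to the Sylvester relation and invoke injectivity of $X\mapsto\Gamma X-X\Delta$ to ``determine $\d\Omega$ uniquely'', but no spectral separation of $\Gamma$ and $\Delta$ is among the hypotheses, and in the paper's own application of the theorem (Appendix C, where $\Gamma=\Delta=\mathbb{S}^{-1}$) this map is a commutator with a large kernel, hence not injective. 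In a correct argument one does not solve for $\d\Omega$ at all: it is simply the $\d$-image of the given $\Omega$, it survives into $\d\psi$ and $\bd\psi$ --- so these are \emph{not} expressible purely through $\theta$, $\Omega^{-1}$, $\chi$ and $\d\phi_0$; note that the hypotheses only ever express $\bd$-derivatives in terms of $\d$-derivatives, never conversely --- and one must verify that in the final combination every term containing $\d\Omega$, $\d\theta$ or $\d\chi$ cancels, using the differentiated Sylvester relation as a \emph{constraint} together with the second equation of (\ref{bDT_Omega_eqs}) and the equations satisfied by $\Gamma,\Delta,\boldsymbol{\kappa},\boldsymbol{\lambda}$. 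Until that is carried out (or the general theorem of \cite{DMH13SIGMA,CDMH16} is explicitly invoked, as the paper does), the proposal remains a plausible plan rather than a proof.
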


In the above theorem, we have to assume that all objects are such that the corresponding products are defined 
and that $\d$ and $\bd$ can be applied. 
Next we define a bidifferential calculus via
\bez
   && \d f = [\mathbb{S} , f ] \, \zeta_1 + f_y \, \zeta_2  \, , \\
   && \bd f = f_x \, \zeta_1 - [\mathbb{S}^{-1} , f] \, \zeta_2 \, , 
\eez
on the algebra $\cA = \cA_0[\mathbb{S},\mathbb{S}^{-1}]$, where $\cA_0$ is the algebra of smooth functions 
of two variables, $x$ and $y$, and also dependent on a discrete variable on which the shift operator $\mathbb{S}$ acts.
$\zeta_1, \zeta_2$ constitute a basis of a two-dimensional vector space $V$, from which we form 
the Grassmann algebra $\Lambda(V)$. $\d$ and $\bd$ extend to $\Omega = \cA \otimes \Lambda(V)$ 
in a canonical way, and to matrices with entries in $\Omega$. Setting
\bez
    \phi = \varphi \, \mathbb{S}^{-1} \, ,
\eez
the equation (\ref{bDT_phi_eq}) is equivalent to the p2DTL$_K$ equation (\ref{matrixToda_K}). 
Choosing a solution $\phi_0 = \varphi_0 \, \mathbb{S}^{-1}$ and setting
\bez
     \Delta = \Gamma = \mathbb{S}^{-1} \, , \quad
     \boldsymbol{\kappa} = \boldsymbol{\lambda} = 0 \, ,
\eez
the linear system (\ref{bDT_lin_sys}) and the adjoint linear system (\ref{bDT_adj_lin_sys}) 
lead to (\ref{Toda_lin_sys}) and (\ref{Toda_adj_lin_sys}), respectively. 
Furthermore, via $\Omega \mapsto \Omega \mathbb{S}$, (\ref{bDT_Omega_eqs}) implies (\ref{Toda_Omega}). 
According to the theorem, (\ref{bDT_new_solution}) 
yields a new solution of the p2DTL$_K$ equation (\ref{matrixToda_K}).

\section{Computational details for Section~\ref{subsec:3sol}} 
\label{app:comp_details}
Using the 3-soliton solution in Section~\ref{subsec:3sol}, and the notation introduced there,
we find that
\bez
     u_{i,\bullet} = \xi_{i,\bullet} \otimes \eta_{i,\bullet} \, , 
\eez
where $\bullet$ stands for $\mathrm{in},\mathrm{out},\mathrm{m1}$ or $\mathrm{m2}$, 
\bez
  \xi_{1,\mathrm{in}} &=& \frac{1}{\alpha_{13}} A_3(p_1,\Xi_3) \, \frac{\xi_1}{\kappa_{11}} \, , \qquad
  \eta_{1,\mathrm{in}} = \eta_1 \, \tilde{B}_3(q_1,\Xi_3) \, , \\
  \xi_{1,\mathrm{m1}} &=& \frac{\alpha_{23}}{\beta} 
     \Big( 1_m-\frac{(p_2-q_2) \alpha_{321}}{(p_1-q_2) \alpha_{23}} \Xi_2 K 
       - \frac{(p_3-q_3) \alpha_{231}}{(p_1-q_3) \alpha_{23}} \Xi_3 K \Big) \, \frac{\xi_1}{\kappa_{11}} \, , \\
  \eta_{1,\mathrm{m1}} &=& \eta_1 \, \Big( 1_n-\frac{(p_2-q_2) \alpha_{312}}{(p_2-q_1) \alpha_{23}} K \Xi_2 
      - \frac{(p_3-q_3) \alpha_{213}}{(p_3-q_1) \alpha_{23}} K \Xi_3 \Big)  \, , \\
  \xi_{1,\mathrm{out}} &=& \frac{1}{\alpha_{12}} A_2(p_1,\Xi_2) \, \frac{\xi_1}{\kappa_{11}} \, , \qquad  
  \eta_{1,\mathrm{out}} = \eta_1 \, \tilde{B}_2(q_1,\Xi_2) \, , \\
  \xi_{2,\mathrm{in}} &=& \frac{1}{\alpha_{23}} A_3(p_2,\Xi_3) \, \frac{\xi_2}{\kappa_{22}} \, , \qquad
  \eta_{2,\mathrm{in}} = \eta_2 \, \tilde{B}_3(q_2,\Xi_3) \, , \\
  \xi_{2,\mathrm{m1}} &=& \frac{\alpha_{13}}{\beta} \Big( 1_m-\frac{(p_1-q_1) \alpha_{312}}{(p_2-q_1) \alpha_{13}} 
     \Xi_1 K - \frac{(p_3-q_3) \alpha_{132}}{(p_2-q_3) \alpha_{13}} \Xi_3 K \Big) \, \frac{\xi_2}{\kappa_{22}} \, , \\
  \eta_{2,\mathrm{m1}} &=& \eta_2 \, \Big( 1_n-\frac{(p_1-q_1) \alpha_{321}}{(p_1-q_2) \alpha_{13}} K \Xi_1 
      - \frac{(p_3-q_3) \alpha_{123}}{(p_3-q_2) \alpha_{13}} K \Xi_3 \Big)  \, , \\  
  \xi_{2,\mathrm{out}} &=& \frac{1}{\alpha_{12}} A_1(p_2,\Xi_1) \, \frac{\xi_2}{\kappa_{22}} \, , \qquad
  \eta_{2,\mathrm{out}} = \eta_2 \, \tilde{B}_1(q_2,\Xi_1) \, , \\
  \xi_{3,\mathrm{in}} &=& \frac{1}{\alpha_{23}} A_2(p_3,\Xi_2) \, \frac{\xi_3}{\kappa_{33}} \, , \qquad
  \eta_{3,\mathrm{in}} = \eta_3 \, \tilde{B}_2(q_3,\Xi_2) \, , \\
  \xi_{3,\mathrm{m1}} &=& \frac{\alpha_{12}}{\beta} \Big( 1_m-\frac{(p_1-q_1) \alpha_{213}}{(p_3-q_1) \alpha_{12}} 
     \Xi_1 K - \frac{(p_2-q_2) \alpha_{123}}{(p_3-q_2) \alpha_{12}} \Xi_2 K \Big) \, \frac{\xi_3}{\kappa_{33}} \, , \\
  \eta_{3,\mathrm{m1}} &=& \eta_3 \, \Big( 1_n-\frac{(p_1-q_1) \alpha_{231}}{(p_1-q_3) \alpha_{12}} K \Xi_1 
      - \frac{(p_2-q_2) \alpha_{132}}{(p_2-q_3) \alpha_{12}} K \Xi_2 \Big)  \, , \\
  \xi_{3,\mathrm{out}} &=& \frac{1}{\alpha_{13}} A_1(p_3,\Xi_1) \,  \frac{\xi_3}{\kappa_{33}} \, , \qquad
  \eta_{3,\mathrm{out}} = \eta_3 \, \tilde{B}_1(q_3,\Xi_1) \, , 
\eez
and
\bez
  \xi_{1,\mathrm{m2}} = \frac{\xi_1}{\kappa_{11}} \, , \quad 
  \eta_{1,\mathrm{m2}} = \eta_1 \, , \quad
  \xi_{2,\mathrm{m2}} = \frac{\xi_2}{\kappa_{22}} \, , \quad 
  \eta_{2,\mathrm{m2}} = \eta_2 \, , \quad
  \xi_{3,\mathrm{m2}} = \frac{\xi_3}{\kappa_{33}} \, , \quad 
  \eta_{3,\mathrm{m2}} = \eta_3 \, .
\eez
Here we used (\ref{A,tA,B,tB}) and introduced the rank one $K$-projections 
\bez
     \Xi_i := \frac{\xi_i \otimes \eta_i}{\kappa_{ii}} \qquad i=1,2,3 \, . 
\eez

For example, we find
\be
   \xi_{1,\mathrm{out}} &=& \frac{B_3(p_1,u_{3,\mathrm{in}}) \, \xi_{1,\mathrm{m1}}}
        {\mathrm{tr}[B_3(p_1,u_{3,\mathrm{in}}) \, u_{1,\mathrm{m1}} \, \tilde{A}_3(q_1,u_{3,\mathrm{in}}) \, K ]} \, , \nonumber \\
   \eta_{1,\mathrm{out}} &=& \eta_{1,\mathrm{m1}} \, \tilde{A}_3(q_1,u_{3,\mathrm{in}}) \, ,     
   \label{xi,eta_out_from_m1}
\ee
and 
\bez
  &&  \xi_{2,\mathrm{m2}} = B_3(q_2,u_{3,\mathrm{in}}) \, \xi_{2,\mathrm{in}} \, , \quad
    \eta_{2,\mathrm{m2}} = \frac{\eta_{2,\mathrm{in}} \, \tilde{A}_3(p_2,u_{3,\mathrm{in}})}{\mathrm{tr}\left(
        B_3(q_2,u_{3,\mathrm{in}}) \, u_{2,\mathrm{in}} \, \tilde{A}_3(p_2,u_{3,\mathrm{in}}) \, K  \right)} \, , \\
  &&  \xi_{3,\mathrm{m2}} = B_2(q_3,u_{2,\mathrm{in}}) \, \xi_{3,\mathrm{in}} \, , \quad
    \eta_{3,\mathrm{m2}} = \frac{\eta_{3,\mathrm{in}} \, \tilde{A}_3(p_3,u_{2,\mathrm{in}})}{\mathrm{tr}\left(
        B_2(q_3,u_{2,\mathrm{in}}) \, u_{3,\mathrm{in}} \, \tilde{A}_2(p_3,u_{2,\mathrm{in}}) \, K  \right)} \, ,
\eez
 from which some statements in Section~\ref{subsec:3sol} are quickly deduced. 
 
\end{appendix}

\small

\end{document}